%% file: main.tex
\documentclass[acmsmall]{acmart}

\AtBeginDocument{%
  }

\setcopyright{cc}
\setcctype{by}
\acmDOI{10.1145/3808312}
\acmYear{2026}
\acmJournal{PACMPL}
\acmVolume{10}
\acmNumber{PLDI}
\acmArticle{234}
\acmMonth{6}
\acmSubmissionID{pldi26main-p463-p}
\received{2025-11-13}
\received[accepted]{2026-04-03}

\setcitestyle{numbers,sort&compress}

\usepackage[noend]{algpseudocode}
\usepackage{algorithm}
\usepackage{url}
\usepackage{subcaption}
\usepackage{comment}
\usepackage{multirow}
\usepackage{wrapfig}
\usepackage{enumitem}
\usepackage{hyperref}
\usepackage{xspace}
\usepackage{bbding}
\usepackage{quiver}
\usepackage{graphics}
\usepackage{mdframed}
\usepackage{graphicx}
\usepackage[rightcaption]{sidecap}
\usepackage[noend]{algpseudocode}
\usepackage{amsmath}
\usepackage{mathtools}  
\usepackage{makecell}
\usepackage{amssymb}
\usepackage{pifont}
\usepackage[utf8]{inputenc}

\usepackage{pgfplots}
\usepackage{tikz} 
\usetikzlibrary{arrows.meta}
\usepgflibrary{shapes.multipart} 
\usepgflibrary[shapes.multipart] 
\usetikzlibrary{shapes.multipart} 
\usetikzlibrary[shapes.multipart] 
\usetikzlibrary{shapes,positioning,calc,backgrounds}
\tikzset{
  bigstage/.style = {rectangle, rounded corners=2pt,
                     text height=1.3ex,
                     text depth=.3ex,
                     inner xsep=2pt,
                     inner ysep=2.5pt,
                     minimum width=1.8cm,
                     draw=none,
                     font=\scriptsize,
                     align=center,
                     anchor=center},
  smallstage/.style = {rectangle, rounded corners=2pt,
                       text height=1.3ex,
                       text depth=.3ex,
                       inner xsep=2pt,
                       inner ysep=2.5pt,
                       minimum width=1.45cm,
                       draw=none,
                       font=\scriptsize,
                       align=center,
                       anchor=center},
  smallarrow/.style = {-{Latex[length=4pt,width=4pt]}, line width=0.7pt},
}
\definecolor{lightblue}{HTML}{CFDEE7}
\definecolor{lightpink}{HTML}{F6E7E5}
\definecolor{lightgreen}{HTML}{DBE7E7}
\definecolor{lightorange}{HTML}{FAD8B3}
\colorlet{lightgray}{gray!20}

\usepackage{listings}
\usepackage[frozencache,cachedir=.]{minted}
\definecolor{codebg}{rgb}{0.95,0.95,0.96}
\usepackage{caption}
\usepackage{xcolor}
\usepackage{amsmath}
\usepackage[commandnameprefix=ifneeded]{changes}
\setdeletedmarkup{\textcolor{red}{\sout{#1}}}
\setaddedmarkup{\textcolor{blue}{#1}}
\usepackage{dsfont}
\usepackage{tabularx}
\usepackage{chapterbib}
\usepackage[normalem]{ulem} 

\makeatletter
\patchcmd{\Changes@output}
  {%
    \IfStrEq{#1}{replaced}{{\Changes@Markup@added{#3}}\allowbreak\Changes@Markup@deleted{#4}}{}%
  }{%
    \IfStrEq{#1}{replaced}{{\Changes@Markup@deleted{#4}}\allowbreak\Changes@Markup@added{#3}}{}%
  }{}{}
\makeatother

\lstdefinestyle{pythonstyle}{
  language=Python,
  basicstyle=\ttfamily\footnotesize,
  keywordstyle=\color{blue},
  stringstyle=\color{green!40!black},
  commentstyle=\color{gray},
  frame=single,
  showstringspaces=false,
  breaklines=true,
  tabsize=2
}

\ccsdesc[500]{Software and its engineering~Automatic programming}
\ccsdesc[500]{Theory of computation~Program verification}
\ccsdesc[300]{Theory of computation~Logic and verification}
\ccsdesc[300]{Theory of computation~Automated reasoning}
\ccsdesc[300]{Theory of computation~Invariants}

\keywords{predicate pushdown, bisimulation, user-defined functions, data pipelines}

\input{macros}

\input{abstract}

\begin{document}

\title{Optimal Predicate Pushdown Synthesis}

\author{Robert Zhang}
\orcid{0009-0001-8853-5813}
\affiliation{%
  \institution{University of Texas at Austin}
  \city{Austin}
  \country{USA}
}
\email{robertz@cs.utexas.edu}

\author{Eric Hayden Campbell}
\orcid{0000-0001-5954-2136}
\affiliation{%
  \institution{University of Texas at Austin}
  \city{Austin}
  \country{USA}
}
\email{eric.hayden.campbell@utexas.edu}

\author{Dixin Tang}
\orcid{0000-0002-3316-6651}
\affiliation{%
  \institution{University of Texas at Austin}
  \city{Austin}
  \country{USA}
}
\email{dixin@utexas.edu}

\author{Işıl Dillig}
\orcid{0000-0001-8006-1230}
\affiliation{%
  \institution{University of Texas at Austin}
  \city{Austin}
  \country{USA}
}
\email{isil@cs.utexas.edu}

\maketitle

\input{sections/intro-new}
\input{sections/problem}

\input{sections/methodology}
\input{sections/synthesis}
\input{sections/eval}
\input{sections/related}

\input{sections/concl}

\input{sections/availability}

\begin{acks}
We thank the anonymous reviewers for their thoughtful feedback and suggestions. This work was conducted in a research group supported by NSF awards CCF-1918889, CNS-2120696, CCF-2210831, and CCF-2319471, CCF-2422130, CCF-2403211, CCF-2505865, CCF-2326576, and CCF-2403211, as well as a DARPA award under agreement HR00112590133 and a gift from Amazon.
\end{acks}

\bibliography{main}

\newpage
\appendix
\input{sections/appendix-proofs}

\end{document}

%% file: macros.tex
\newcommand{\irule}[2]%
   {\mkern-2mu\displaystyle\frac{#1}{\vphantom{,}#2}\mkern-2mu}

\newcommand{\bfpara}[1]{\vspace{3.7pt} \noindent {\bf \emph{#1}}}

\newcommand{\numBenchmarks}{150\xspace}
\newcommand{\numUDF}{26\xspace}

\newcommand{\ablnb}{{\sc Pusharoo-NoBounds}\xspace}

\newcommand{\ablspacer}{{\sc Pusharoo-Spacer}\xspace}
\newcommand{\ableld}{{\sc Pusharoo-Eldarica}\xspace}
\newcommand{\ablnr}{{\sc Pusharoo-NoRepair}\xspace}
\newcommand{\abltp}{{\sc Pusharoo-TwoPhase}\xspace}
\newcommand{\ablenum}{{\sc Pusharoo-NoPrune}\xspace}

\newcommand{\true}{\mathsf{true}}

\newcommand{\Lift}{\mathsf{Lift}}
\newcommand{\toolname}{{\sc Pusharoo}\xspace}
\newcommand{\magicpush}{{\sc MagicPush}\xspace}
\newcommand{\bisim}{\psi}
\newcommand{\filter}{\mathsf{filter}}

\newcommand{\init}{\textsf{Init}\xspace}
\newcommand{\sync}{\textsf{Sync}\xspace}
\newcommand{\stutter}{\textsf{Stutter}\xspace}
\newcommand{\final}{\textsf{Final}\xspace}

\newcommand{\supsharp}{\mathord{\smash[t]{\sharp}}}

\newcommand{\orig}[1]{\textcolor{blue!60!black}{#1}}
\newcommand{\opt}[1]{\textcolor{red!60!black}{#1}}

\makeatletter
\algrenewcommand\ALG@beginalgorithmic{\small} 
\algrenewcommand\alglinenumber[1]{\small #1:}
\makeatother

%% file: abstract.tex
\begin{abstract}

Predicate pushdown is a long-standing {performance optimization that filters}
data as early as possible in a
{computational workflow}. In modern data pipelines, this transformation is especially important because much of the computation occurs inside \emph{user-defined functions (UDFs)} written in general-purpose languages such as Python and Scala. These UDFs capture rich domain logic and complex aggregations and are among the most expensive operations in a pipeline. Moving filters ahead of such UDFs can yield substantial performance gains, but doing so requires
\emph{semantic} reasoning.
This paper introduces a general semantic foundation for predicate pushdown over stateful fold-based  computations.

We view pushdown as a correspondence between two programs that process different subsets of input data, with correctness witnessed by a \emph{bisimulation invariant} relating their internal states. Building on this foundation, we develop a sound and relatively complete framework for verification, alongside a synthesis algorithm that automatically constructs \emph{optimal pushdown decompositions} by finding the strongest admissible pre-filters and weakest residual post-filters.  We implement this approach in a tool called \toolname and evaluate it on 150 real-world \texttt{pandas} and Spark data-processing pipelines. Our evaluation shows that \toolname is significantly more expressive than prior work, producing optimal pushdown transformations with a median synthesis time of {1.6} seconds per benchmark. Furthermore, our experiments demonstrate that the discovered pushdown optimizations speed up end-to-end execution by an average of 2.4$\times$ and up to two orders of magnitude.

\end{abstract}

%% file: sections/intro-new.tex
\section{Introduction} 
\label{sec:intro}

Program optimizations that reorder computation are pervasive in modern data-processing systems. One of the most fundamental among them is \emph{predicate pushdown}, which seeks to move filtering conditions earlier in a {data pipeline} so that irrelevant data is discarded before expensive {operations} are applied~[\citenum{database08} (Ch.~16.2.3), \citenum{ullman90}]. As illustrated in Figure~\ref{fig:combined-pipeline}, this transformation replaces a pipeline that filters results \emph{after} a computation with an equivalent one that filters inputs (partially or fully) \emph{before} the computation, thereby reducing downstream computation cost.

\begin{wrapfigure}{r}[0pt]{4.5cm}
  \centering
  \captionsetup{skip=2pt}
  \begin{subfigure}[t]{0.49\linewidth}
    \centering
    \begin{tikzpicture}[node distance=0.3cm,outer sep=0pt,baseline=(current bounding box.north)]
      \node[inner sep=2pt] (in)  {\scriptsize Input data};
      \node[bigstage, opacity=0] (phantom) [below=of in] {Phantom};
      \node[bigstage, fill=lightblue] (udf) [below=of phantom] {Computation};
      \node[bigstage, fill=lightorange] (filter) [below=of udf] {Original filter $P$};
      \node[inner sep=1.5pt] (out) [below=of filter] {\scriptsize Query result};

      \draw[smallarrow] (in)  -- (udf);
      \draw[smallarrow] (udf) -- (filter);
      \draw[smallarrow] (filter) -- (out);
    \end{tikzpicture}
  \end{subfigure}%
  \hfill
  \begin{subfigure}[t]{0.49\linewidth}
    \centering
    \begin{tikzpicture}[node distance=0.3cm,outer sep=0pt,baseline=(current bounding box.north)]
      \node[inner sep=2pt] (in) {\scriptsize Input data};
      \node[bigstage, fill=lightgreen] (filter) [below=of in] {Input filter $Q$};
      \node[bigstage, fill=lightblue] (udf) [below=of filter] {Computation};
      \node[bigstage, fill=lightpink] (residual) [below=of udf] {Residual filter $P'$};
      \node[inner sep=1.5pt] (out) [below=of residual] {\scriptsize Query result};

      \draw[smallarrow] (in)  -- (filter);
      \draw[smallarrow] (filter)  -- (udf);
      \draw[smallarrow] (udf) -- (residual);
      \draw[smallarrow] (residual) -- (out);
    \end{tikzpicture}
  \end{subfigure}
  \caption{Original (left) and optimized (right) pipelines}
  \label{fig:combined-pipeline}
  \vspace{-0.1in}
\end{wrapfigure}

This optimization is particularly important in modern data pipelines, where much of the computation occurs inside \emph{user-defined functions (UDFs)} written in general-purpose programming languages such as Python and Scala. These functions let developers express rich domain logic and complex aggregations, but they are also among the \emph{most expensive} operations in a pipeline. Moving filters ahead of such computations can therefore yield substantial performance gains by reducing the amount of data processed by a UDF. The challenge, however, is that many UDFs are
fold-like computations that maintain complex internal state and involve nested control flow,
requiring rich semantic reasoning for correctness. Recent work~\cite{MagicPush}
has begun to explore predicate pushdown for restricted classes of UDFs, but a robust, general account of predicate pushdown for complex, stateful UDFs has yet to be systematically studied.

This paper aims to fill exactly this gap.
As a starting point, we develop a unified semantic foundation of predicate pushdown for stateful UDFs, which we model as pure, deterministic folds whose observable behavior is captured by the evolution of an internal accumulator state. Within this framework, a pushdown transformation is interpreted as a semantic correspondence between two state-transforming programs that operate on different subsets of input data. Correctness then amounts to {proving a relational property, namely,} a \emph{bisimulation invariant}~\cite{park81}, that links the executions of the original and transformed programs. This formulation accommodates the full generality of stateful UDFs, whose internal states may evolve differently {before and after pushdown,} but must converge to the same final result. From this foundation, we then develop a sound and relatively complete verification framework for candidate pushdown optimizations.

Building on this verification framework, we then introduce a new synthesis algorithm for automatically discovering \emph{optimal pushdown decompositions} of a post-computation filter~$P$. Specifically, the algorithm splits $P$ into a pre-computation filter~$Q$ that is as logically strong as possible, and a \emph{residual} filter~$P'$ that is as weak as possible, within a fixed predicate universe. This notion of optimality is important because it maximizes early pruning of input data while also minimizing redundant post-UDF validation.
However, a key challenge in solving this {optimal synthesis} problem is that the algorithm must jointly infer three tightly-coupled proof artifacts: a (strongest) input-side filter $Q$, a (weakest) residual filter $P'$, and a bisimulation invariant that witnesses their correctness. {Even} ignoring the optimality requirement, this problem lies beyond the reach of existing methods. For instance, unlike many predicate synthesis tasks that reduce to solving Constrained Horn Clauses (CHCs), the verification conditions connecting these unknown predicates are inherently \emph{non-Horn}, making them incompatible with off-the-shelf CHC solvers. {On top of this}, the optimality requirement further complicates the search, as $Q$ and $P'$ must be optimized in opposite directions.

Our  synthesis method solves these challenges through three complementary algorithmic innovations. First, we {structurally decompose the search space}
into tractable sub-problems \emph{without} compromising optimality. Second, we
obtain sound under- and over-approximations for any valid bisimulation invariant and use these symbolic bounds as \emph{unrealizability certificates} during synthesis. Finally, we incorporate domain-specific \emph{root cause analysis} for diagnosing failed synthesis attempts and use this to inform \emph{predicate repair}.  Collectively, these techniques effectively navigate the search space in a principled way, ultimately guaranteeing both correctness and optimality.

We implemented our approach in a tool called \toolname and evaluated it on 150 real-world benchmarks drawn from \texttt{pandas} and Apache Spark workloads.
\toolname successfully synthesizes correct pushdown transformations for all benchmarks and substantially outperforms prior work, which can  handle less than 15\%
and often yields suboptimal decompositions. Furthermore, the algorithm runs efficiently in practice, with a median synthesis time of 1.6 seconds. Finally, the optimizations enabled by our method result in significant speed-ups of real-world data analytics pipelines, yielding an average speed-up of $2.4\times$ and up to two orders of magnitude in some cases.

To summarize, this paper makes the following key contributions:
\begin{itemize}[leftmargin=*]
\item We introduce a unified semantic formulation of predicate pushdown that subsumes and extends prior definitions.
\item We present a sound and relatively complete methodology for verifying the correctness of candidate pushdown transformations for stateful computations.
\item We develop a novel synthesis algorithm that can generate \emph{optimal} and \emph{provably correct} pushdown optimizations. Notably, our method simultaneously searches for an optimal pushdown optimization together with its  certificate of correctness.
\item We empirically evaluate our method on 150 real-world benchmarks and demonstrate its effectiveness in terms of (1) expressiveness, (2) optimization time, and (3) the end-to-end performance improvements it enables.
\end{itemize}

%% file: sections/problem.tex
\section{Background}\label{sec:background}

In this section, we review the two main forms of predicate pushdown studied in prior work and introduce the class of user-defined functions (UDFs) to which our formulation applies. Since UDFs serve as the building blocks of modern data analytics, we first discuss how they are structured and used in practice. Throughout the paper, we use the terms \emph{UDF} and \emph{computation} interchangeably.

\subsection{User-Defined Functions in Data Pipelines}\label{sec:udfs-in-pipelines}

In this paper, 
we consider a general class of UDFs that can be expressed as as \emph{state-transforming folds}, i.e., computations that traverse a sequence of input rows while updating an internal state.  Each UDF operates over a \emph{dataframe}, which we model as a finite sequences of tuples, sometimes referred to as \emph{rows}.  We assume that each UDF can be expressed in terms of a left fold:
\begin{equation}\label{eq:fold}
F(x) = \mathsf{fold}(x, I, f) = f(\ldots f(f(I, r_1), r_2), \ldots, r_n)
\end{equation}
where $x = [r_1, \ldots, r_n]$ is the input dataframe, $I$ is the initial state, and $f : A \times R \to A$ is a pure, deterministic update function called the \emph{accumulator}. 
The \emph{internal state} of type~$A$ is initialized to~$I$ and updated row by row by applying~$f$. We adopt the convention that $\textsf{fold}([], I, f)$ yields a distinguished undefined value $\bot$, and we assume that UDF evaluation is pure with respect to the pipeline. That is, external side effects, such as I/O and mutation of global state, are out of scope unless they are explicitly reified in the
internal state. However, we do not impose   restrictions on the structure of the internal state, which may be a scalar, tuple, list, or even a dataframe. Thus, our formalism can model a wide variety of UDFs, including:
\begin{itemize}[leftmargin=*]
  \item \textbf{Row-wise UDFs} correspond to folds where the input dataframe consists of a single row, and the accumulator~$f$ applies a stateless transformation to that row.
  
  \item \textbf{Aggregation UDFs} summarize the entire input into a single output value. These functions typically maintain a running summary (e.g., sum, maximum, or top-$k$ elements), which~$f$ updates incrementally across the sequence of rows.
  
  \item \textbf{Table-valued UDFs} produce a variable number of output rows. In our model, these are captured by using a list- or dataframe-valued internal state,
  where~$f$ appends new rows.
\end{itemize}



\begin{figure}[t]
  \centering
  \begin{subfigure}{0.49\linewidth}
    \centering
    \begin{minted}[
      fontsize=\footnotesize,
      frame=lines,
      framesep=2mm
    ]{python}
# Filter to premium items
premium = df[df['category'] == 'premium']
# Compute discounted prices for selected items
discounted = premium.apply(
  lambda r: r['price'] * 0.9, axis=1)
# Retain those w/ discounted price at/above $900
filtered = discounted[discounted >= 900]
    \end{minted}
    \captionsetup{skip=2pt}
    \caption{Original pipeline}
    \label{fig:rowwise-original}
  \end{subfigure}\hfill
  \begin{subfigure}{0.49\linewidth}
    \centering
    \begin{minted}[
      fontsize=\footnotesize,
      frame=lines,
      framesep=2mm
    ]{python}
# Filter to premium items at/above $1000
prefiltered = df[
  (df['category'] == 'premium') &
  (df['price'] >= 1000)]
# Apply discounting UDF after filtering
discounted = prefiltered.apply(
  lambda r: r['price'] * 0.9, axis=1)
    \end{minted}
    \captionsetup{skip=2pt}
    \caption{Exact pushdown}
    \label{fig:rowwise-exact-pushdown}
  \end{subfigure}
  \captionsetup{skip=2pt}
  \caption{{(a) A pipeline that retrieves premium items with a discounted price at or above \$900 (b) Exact pushdown: the post-processing filter is pushed through the discounting UDF entirely.}}
  \vspace{-0.1in}
\end{figure}

\vspace{0.07in}
While this fold-based view captures the core behavior of individual UDFs, such functions rarely appear in isolation. In practice, they are embedded within larger data-processing workflows that combine multiple relational operators (such as joins, group-bys, and filters) with user-defined computations. We refer to such workflows as \emph{pipelines} and provide two simple examples below. 





\begin{example}
Figure~\ref{fig:rowwise-original} shows a \texttt{pandas} pipeline that (1) selects only \texttt{premium} items based on the \texttt{category} attribute, (2) applies a row-wise UDF that computes a 10\% discount by multiplying the \texttt{price} field by~0.9, and (3) retains only those items whose discounted price is at least~\$900.
\end{example}
\vspace{-0.5em}



\begin{example}
Figures~\ref{fig:top2-udaf} and~\ref{fig:udaf-original} show a pipeline involving a user-defined aggregation. Unlike the previous row-wise UDF, \texttt{top2} maintains state across rows to compute the top two values in a group. The pipeline in Figure~\ref{fig:udaf-original} applies this UDF per \texttt{team}, extracting the two highest \texttt{score}s for each team, then applies a filter to keep only teams whose top scores both exceed~90.0.
\end{example}


\subsection{Two Flavors of Predicate Pushdown Optimizations}\label{sec:flavors}

In this section, we describe two different types of pushdown optimizations considered in prior work~\cite{MagicPush,sparser,ullman90}, namely, \emph{exact} (\emph{equivalent}) pushdown and \emph{partial} (\emph{superset}) pushdown. Throughout this section, we use the letter $P$ to denote a predicate over the output of the UDF and $Q$ to denote a predicate over \emph{an input row}.


\paragraph{Exact Pushdown.} Given a computation $F$ followed by a predicate $P$, an \emph{exact (equivalent) pushdown} optimization can be applied if the post-computation predicate $P$ can be entirely eliminated. Formally, exact pushdown seeks a \emph{pre-computation} predicate $Q$ satisfying the following condition: 
\[
\forall x.\;\bigl( y = F(x) \wedge z = F(\filter_Q(x)) \bigr)
\;\Longrightarrow\;
\bigl((P(y) \Rightarrow y = z)\; \wedge\; (\neg P(y) \Rightarrow z = \bot)\bigr),
\]
where $\filter_Q(x)$ retains only elements of $x$ satisfying $Q$, and $\bot$ denotes the absence of any result, treated as failing any filter (i.e., $P(\bot)$ is \textsf{false}). Thus $Q$ is a valid pushdown iff applying it before $F$ yields the same outcome as applying $P$ after $F$, the classical form in relational query optimization.


\begin{example}
Recall the pipeline from Figure~\ref{fig:rowwise-original}, which computes discounted prices using a row-wise UDF and then filters the results to retain only {items priced} at or above \$900. Let $F$ denote the discounting UDF and $P(a) = (a \ge 900)$ be the post-UDF filter. Since $F(r) = r[`\mathsf{price}\textrm'] \cdot 0.9$, the condition $P(F(r))$ holds iff $r[`\mathsf{price}\textrm'] \ge 1000$. Thus, we can push the predicate through $F$ exactly by defining $Q(r) = (r[`\mathsf{price}\textrm'] \ge 1000)$, as shown in Figure~\ref{fig:rowwise-exact-pushdown}.  This transformation satisfies the definition of exact pushdown: it guarantees that the output is identical to the original, since rows that would have failed $P$ are excluded and those that would have passed are included.
\end{example}



\begin{figure}[t]
  \centering
  \begin{subfigure}[t]{0.49\linewidth}
    \centering
    \begin{minted}[
      fontsize=\footnotesize,
      frame=lines,
      framesep=2mm
    ]{python}
# User-defined aggregation function
def top2(x):
  fst = snd = float('-inf')
  for r in x:
    if r > fst: snd, fst = fst, r
    elif r > snd: snd = r
  return [fst, snd]
    \end{minted}
    \captionsetup{skip=2pt}
    \caption{UDF \texttt{top2}}
    \label{fig:top2-udaf}
  \end{subfigure}\hfill
  \begin{subfigure}[t]{0.49\linewidth}
    \centering
    \begin{minted}[
      fontsize=\footnotesize,
      frame=lines,
      framesep=2mm
    ]{python}
top2s = (df
  .groupby('team')['score']
  .agg(top2))
# P: keep only teams whose two highest scores
#    both exceed 90.0
filtered = top2s[top2s.apply(lambda a:
  a[0] > 90.0 and a[1] > 90.0)]
    \end{minted}
    \captionsetup{skip=2pt}
    \caption{Original pipeline}
    \label{fig:udaf-original}
  \end{subfigure}
  \par\vspace{0.5em}\par
  \begin{subfigure}[t]{0.49\linewidth}
    \centering
    \begin{minted}[
      fontsize=\footnotesize,
      frame=lines,
      framesep=2mm
    ]{python}
# Filter out low scores before aggregation
prefiltered = df[df['score'] > 90.0]
top2s = (prefiltered
  .groupby('team')['score']
  .agg(top2))
# Apply full predicate after aggregation
filtered = top2s[top2s.apply(lambda a:
  a[0] > 90.0 and a[1] > 90.0)]
    \end{minted}
    \captionsetup{skip=2pt}
    \caption{Partial pushdown}
    \label{fig:udaf-partial-pushdown}
  \end{subfigure}\hfill
  \begin{subfigure}[t]{0.49\linewidth}
    \centering
    \begin{minted}[
      fontsize=\footnotesize,
      frame=lines,
      framesep=2mm
    ]{python}
prefiltered = df[df['score'] > 90.0] # Q
top2s = (prefiltered
  .groupby('team')['score']
  .agg(top2))
# P': need only check whether a team has seen
#     any second-highest score
filtered = top2s[top2s.apply(lambda a:
  a[1] != float('-inf'))]
    \end{minted}
    \captionsetup{skip=2pt}
    \caption{{Split} pushdown}
    \label{fig:udaf-split-pushdown}
  \end{subfigure}

  \captionsetup{skip=2pt}
  \caption{(a) A UDF that computes the two highest scores for each team. (b) A pipeline using the UDF to retrieve teams whose two highest scores exceed 90.0. (c) Partial pushdown: the input filter eliminates rows that cannot help satisfy the post-computation filter. (d) {Split} pushdown: the post-processing predicate is weakened to $P'$.}
\end{figure}

\paragraph{Partial Pushdown.}
When exact pushdown is not feasible, prior work~\cite{MagicPush,sparser} has also considered a related optimization called \emph{partial (superset) pushdown}. This variant performs filtering before executing $F$, but also applies the original filter $P$ afterwards. In other words, it may conservatively preserve additional data that
would cause $P$ to fail,
and then filters those ``false positives'' out using $P$ after $F$ is applied. Formally, partial pushdown is possible if there exists a predicate $Q$ such that:
    \[
        \forall x.\;\bigl( y = F(x) \wedge z = F(\filter_Q(x)) \bigr)
        \;\Longrightarrow\;
        \bigl((P(y) \Rightarrow P(z))
        \;\wedge\;
        (P(z) \Rightarrow y = z)\bigr)
    \]
Intuitively, $Q$ is a conservative over-approximation of $P$: the first clause $P(y) \Rightarrow P(z)$ ensures \emph{soundness} (no valid result is lost), and the second $P(z) \Rightarrow y = z$ ensures \emph{consistency} (any accepted output matches the original).

\begin{example}
Consider again the pipeline from Figures~\ref{fig:top2-udaf} and~\ref{fig:udaf-original}, which uses a UDF to compute the top two scores for each team. 
Let $F$ denote  the \texttt{top2} function  and $P(a) = (a[0] > 90.0 \wedge a[1] > 90.0)$ be the post-UDF filter from Figure~\ref{fig:udaf-original}. This predicate requires that both the highest and second-highest scores of a team exceed the threshold. 
For this example, exact pushdown is not possible because
any row‐wise test $Q(r)$ can only inspect one score at a time and \emph{cannot} decide whether a given row will end up as the highest or second highest without having access to the entire dataframe. 
However, \emph{partial} pushdown is possible using the predicate $Q(r) = (r[`\mathsf{score}\textrm'] > 90.0)$, which eliminates only those rows that cannot contribute to satisfying the post-filter $P$.
Since a team may not satisfy $P$ after aggregation (e.g., if only one score exceeds the threshold),
the partially pushed-down pipeline re-applies $P$, as shown  in Figure~\ref{fig:udaf-partial-pushdown}.

\end{example}

\section{Problem Statement}\label{sec:problem}

We now introduce a unified formulation of predicate pushdown that generalizes both exact and partial pushdown. This formulation  enables a single algorithm that can synthesize a wide range of pushdown strategies, including those that fall strictly outside the exact and partial cases.


First, given a predicate $P$ and element $a$, we define a function $\Lift(P, a)$ that returns the result of the computation only when $P$ is satisfied and yields $\bot$ otherwise:
\[
\Lift(P, a) = a \;\text{if } P(a),\; \text{else } \bot
\]
Intuitively, $\Lift$ decides whether to keep or discard an internal state based on $P$, with $\bot$ playing the role for internal states that the empty sequence plays for dataframes. We can now define a unified formulation of predicate pushdown as follows:

\begin{definition}{\bf (Generalized Predicate Pushdown)}
\label{def:generalized-pushdown}
Let $F : [R] \to A$ be a computation and $P : A \to \mathsf{Bool}$ a predicate on its output. Generalized predicate pushdown seeks to find a pair of predicates 
$
Q : R \to \mathsf{Bool}
\ \text{and} \
P' : A \to \mathsf{Bool}
$
such that the following equivalence holds:
\begin{equation}\label{eq:unified-lifted}
\forall x.\quad
\Lift(P, F(x))
\;=\;
\Lift(P', F(\filter_Q(x)))
\end{equation}
\end{definition}
Intuitively, the input-side filter $Q$ \emph{pre-filters} the input to $F$, removing rows that can never contribute to satisfying $P$, and $P'$ is a \emph{residual} predicate over $F$'s output that may be strictly weaker than $P$ and  captures \emph{exactly} the information not preserved by $Q$. We refer to any pair $(Q, P')$ satisfying Equation~\eqref{eq:unified-lifted} as a \emph{solution} to the \emph{generalized} predicate pushdown problem.
As the following theorem shows, this formulation subsumes both exact and partial pushdown as special cases, recovering them by choosing $P'$ to be $\true$
or $P$, respectively.\footnote{Proofs of all theorems are provided in Appendix~\ref{sec:proofs}.}

\begin{proposition}\label{thm:exact-partial-cases}
Let $F$ be a computation on inputs $x$ and $P$ a predicate. For any input-side filter $Q$:
\begin{enumerate}[leftmargin=*]
  \item $Q$ satisfies the \emph{exact pushdown} condition for $(F,P)$ precisely when $(Q, \textsf{true})$
        satisfies Equation~\eqref{eq:unified-lifted}
  \item $Q$ satisfies the \emph{partial pushdown} condition for $(F,P)$ precisely when 
        $(Q, P)$ satisfies Equation~\eqref{eq:unified-lifted}.
\end{enumerate}
\end{proposition}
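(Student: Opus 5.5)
The plan is to prove both items pointwise. Fix an arbitrary input $x$ and write $y = F(x)$ and $z = F(\filter_Q(x))$. Each pushdown condition is a universally quantified implication whose antecedent just names $y$ and $z$. Equation~\eqref{eq:unified-lifted} is likewise universally quantified over $x$. So it suffices to show, for each fixed pair $(y,z)$, that the consequent of the relevant condition is equivalent to $\Lift(P,y) = \Lift(P',z)$ for the stated choice of $P'$.

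I will rely on two conventions about $\bot$, which I will state explicitly:
\begin{itemize}
\item $P(\bot)$ is false for every predicate $P$. Hence $P(a)$ holds only when $a \neq \bot$, and therefore $\Lift(P,a) = \bot$ if and only if $\neg P(a)$.
\item $\Lift(\true,\bot) = \bot$, i.e.\ lifting preserves the absence of a result. The same convention gives $\Lift(\true, a) = a$ for every $a$, including $a = \bot$.
\end{itemize}

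\emph{Item 1 (exact pushdown).} With $P' = \true$ the right-hand side of Equation~\eqref{eq:unified-lifted} is just $z$, so the equation reads $\Lift(P,y) = z$. I will split on whether $P(y)$ holds.
\begin{itemize}
\item If $P(y)$, then $\Lift(P,y) = y$, and the equation becomes $y = z$.
\item If $\neg P(y)$, then $\Lift(P,y) = \bot$, and the equation becomes $z = \bot$.
\end{itemize}
This is verbatim the conjunction $(P(y)\Rightarrow y=z)\wedge(\neg P(y)\Rightarrow z=\bot)$, so the two conditions agree in both directions.

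\emph{Item 2 (partial pushdown).} Here I must show that $(P(y)\Rightarrow P(z)) \wedge (P(z)\Rightarrow y=z)$ is equivalent to $\Lift(P,y) = \Lift(P,z)$.

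For the forward direction, I case on $P(y)$.
\begin{itemize}
\item If $P(y)$, the first conjunct gives $P(z)$, and the second then gives $y = z$. Both sides equal $y$.
\item If $\neg P(y)$, suppose for contradiction that $P(z)$ holds. Then $y = z$, which forces $P(y)$, a contradiction. So $\neg P(z)$, and both sides equal $\bot$.
\end{itemize}

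For the backward direction, assume $\Lift(P,y) = \Lift(P,z)$.
\begin{itemize}
\item If $P(y)$, the left side is $y \neq \bot$. So the right side is not $\bot$, which means $P(z)$ holds and $\Lift(P,z) = z$. Hence $y = z$, and in particular $P(z)$.
\item Symmetrically, if $P(z)$, the right side is $z \neq \bot$. So $P(y)$ holds and $y = z$.
\end{itemize}
These two cases give the two conjuncts.

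The only step I expect to need care is the handling of $\bot$: both directions of item 2 use the fact that a satisfied predicate witnesses a non-$\bot$ value. I will therefore state these conventions explicitly at the start of the proof rather than treat them as implicit.
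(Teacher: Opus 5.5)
Your proposal is correct and follows essentially the same pointwise case analysis as the paper's proof. The only difference is cosmetic: in item 2 the paper splits on $P(z)$ rather than $P(y)$, and it uses the fact that $P(z)$ implies $z \neq \bot$ implicitly, whereas you state the $\bot$ conventions explicitly up front.
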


\begin{example}
The pipeline from Figure~\ref{fig:udaf-split-pushdown} illustrates a case where our formulation admits a decomposition that is different  from both exact and partial pushdown. Recall that the post-UDF filter $P$ in this example is $a[0] > 90.0 \land a[1] > 90.0$ and that \emph{partial} pushdown can be applied with $Q(r) = (r[`\mathsf{score}\textrm'] > 90.0)$. Using the same $Q$, our formulation enables a strictly weaker residual predicate, namely $P'(a) = (a[1] \neq -\infty)$, which removes redundant checks. In particular, the original predicate $P$ explicitly verifies both scores against the 90.0 threshold. However, after applying  $Q$, any defined top scores must already be above this threshold. Thus, the check performed by $P$ after filtering is partially redundant, and the weaker predicate $P'$ avoids this redundancy.

\end{example}

As illustrated in this example, some pipelines admit a decomposition in which the input filter~$Q$ rules out some, but not all, irrelevant rows, and the residual predicate~$P'$ still enforces part of the original post-computation filter. We refer to such cases as instances of \textbf{split pushdown}, reflecting that responsibility for enforcing the filter is divided between~$Q$ and~$P'$.

\paragraph{Optimal Decomposition.}  
While Equation~\eqref{eq:unified-lifted} admits sound decompositions that go beyond exact and partial pushdown, it also  admits trivial solutions: for instance, taking $Q \equiv \true$ and $P' \equiv P$ always satisfies Equation~\eqref{eq:unified-lifted}. To avoid such vacuous solutions and realize the practical benefits of predicate pushdown, we additionally seek decompositions that maximize input reduction and minimize post-processing effort. Intuitively, we want the input filter $Q$ to be as \emph{strong} as possible so that irrelevant rows are excluded early, but we want  the residual $P'$ to be as \emph{weak} as possible so that the post-UDF check is minimal. We formalize these goals as follows:

\begin{definition}{\bf (Optimal Predicate Pushdown)}
\label{def:optimal-pushdown}
Let $F : [R] \to A$ be a computation and $P : A \to \mathsf{Bool}$ be a predicate on its output. An optimal solution to the generalized predicate pushdown problem is a pair $(Q^*, P^*)$, where $Q^* : R \to \mathsf{Bool}$ and $P^* : A \to \mathsf{Bool}$, such that
\begin{equation}\label{eq:def-pushdown}
\forall x.\quad \Lift(P, F(x)) = \Lift(P^*, F(\filter_{Q^*}(x)))
\end{equation}
and $(Q^*, P^*)$ satisfies the following optimality conditions:
\begin{itemize}[leftmargin=*]
\item \textbf{Strongest input filter:}Among all solutions to Equation~\eqref{eq:def-pushdown}, $Q^*$ is maximal under implication. That is, there is no other pair $(Q', P')$ satisfying Equation~\eqref{eq:def-pushdown} such that $Q' \Rightarrow Q^*$ and $Q^* \not\Rightarrow Q'$.
\item \textbf{Weakest residual check:} Fixing the input filter $Q^*$, among all solutions to Equation~\eqref{eq:def-pushdown} with $Q^*$, the residual $P^*$ is minimal under implication. That is, there is no other pair $(Q^*, P')$ satisfying Equation~\eqref{eq:def-pushdown} such that $P^* \Rightarrow P'$ and $P' \not\Rightarrow P^*$.
\end{itemize}
\end{definition}



\paragraph{Remark 1: Existence of optimal solutions} In general, an optimal decomposition need not exist as a finitely representable predicate: one can have a strictly improving sequence of feasible decompositions with no strongest input filter (or weakest residual) expressible in the language.
Accordingly, throughout the paper, optimality is understood relative to fixed finite spaces of candidate predicates, within which an optimal decomposition always exists.

\paragraph{Remark 2: Impact on performance}  While optimality is defined in terms of logical strength and weakness, these properties have clear practical consequences in terms of performance. 
A stronger pre-computation filter~$Q$ excludes more rows upfront,
directly reducing UDF work. Minimizing the residual predicate is more nuanced: logical weakness does not, by itself, imply cheaper evaluation. In our setting, however, we restrict $P'$ to mostly be a conjunction of a subset of the conjuncts in $P$, so minimizing $P'$ mainly amounts to dropping conjuncts already enforced by $Q$ and thus avoiding redundant post-computation checks.


\begin{figure}[t]
\centering
\resizebox{\columnwidth}{!}{%
\begin{tikzpicture}[
  stage/.style={
    draw=blue!70!black, thick, dashed, rounded corners=6pt,
    inner sep=5pt, align=center, minimum height=1cm
  },
  stepnum/.style={
    circle, fill=blue!70!black, text=white, font=\bfseries\footnotesize,
    inner sep=1.5pt, minimum size=10pt
  },
  arr/.style={-{Stealth[length=4pt]}, thick, blue!70!black}
]
\node[stage] (A) {\underline{\smash[b]{Original pipeline}}\\[1pt]$P_0 \circ \Pi_{\mathsf{post}} \circ F \circ \Pi_{\mathsf{pre}}$};
\node[stage, right=0.7cm of A] (B) {\underline{\smash[b]{Before UDF pushdown}}\\[1pt]$\Pi_{\mathsf{post}} \circ P \circ F \circ \Pi_{\mathsf{pre}}$};
\node[stage, right=0.7cm of B] (C) {\underline{\smash[b]{After UDF pushdown}}\\[1pt]$\Pi_{\mathsf{post}} \circ P' \circ F \circ Q \circ \Pi_{\mathsf{pre}}$};
\node[stage, right=0.7cm of C] (D) {\underline{\smash[b]{Final pipeline}}\\[1pt]$\Pi_{\mathsf{post}} \circ P' \circ F \circ \Pi_{\mathsf{pre}} \circ Q'$};
\draw[arr] (A.east) -- (B.west);
\draw[arr] (B.east) -- (C.west);
\draw[arr] (C.east) -- (D.west);
\node[stepnum, yshift=9pt] at ($(A.east)!0.5!(B.west)$) {1};
\node[stepnum, yshift=9pt] at ($(B.east)!0.5!(C.west)$) {2};
\node[stepnum, yshift=9pt] at ($(C.east)!0.5!(D.west)$) {3};
\end{tikzpicture}%
}
\caption{Illustration of how our problem formulation fits into end-to-end pipeline optimization. Steps 1 and 3 can be performed by existing relational optimizers, whereas step 2 is enabled by our formulation.}\label{fig:pipeline}
\end{figure}

\paragraph{Remark 3: Using UDF pushdown for pipeline optimization.} UDFs typically sit inside larger pipelines of joins, projections, and filters. Figure~\ref{fig:pipeline} shows how our generalized pushdown formulation composes with standard relational rewrites: starting from $P_0 \circ \Pi_{\mathsf{post}} \circ F \circ \Pi_{\mathsf{pre}}$, a host optimizer first pushes $P_0$ left through $\Pi_{\mathsf{post}}$ using UDF-agnostic rules~\cite{ullman90}, yielding a predicate $P$ at the UDF boundary (``Before UDF pushdown''). Our formulation then rewrites $P \circ F$ as $P' \circ F \circ Q$ with $(Q,P')$ satisfying Equation~\eqref{eq:unified-lifted} (``After UDF pushdown''). The UDF no longer blocks subsequent rewrites: $Q$ and any additional predicates can then be pushed right through $\Pi_{\mathsf{pre}}$ as $Q'$ (``Final pipeline'').


%% file: sections/methodology.tex
\section{Verification Methodology} \label{sec:methodology}


In this section, we present a verification method for checking the correctness of a candidate pushdown optimization.  
The central challenge in verifying pushdown transformations is that the optimized and original executions no longer process the same sequence of inputs: the pushed-down version skips rows filtered by~$Q$, while the original processes them all. As a result, their internal states may differ during execution.
To reason about such executions, we introduce  a \emph{bisimulation invariant}~$\bisim(a_1, a_2)$ that relates the internal state~$a_1$ of the original computation to the internal state~$a_2$ of the optimized one.
Intuitively, $\bisim$ captures a relational correspondence between these two internal states: it allows temporary differences caused by skipped rows, but requires that the executions eventually agree on the same final output.   

Figure~\ref{fig:verification-conditions} formalizes this reasoning principle through four verification conditions (VCs) that together ensure the soundness of a pushdown transformation: \emph{initialization}, \emph{synchronized-step preservation}, \emph{stutter-step preservation}, and \emph{final-state agreement}.  
These conditions correspond to the base case, inductive steps, and postcondition of an inductive proof, and they enable local reasoning about symbolic accumulator states without explicitly tracking full input traces. We explain these VCs in more detail below.

\begin{figure}[t]
\centering
\begin{mdframed}[linecolor=black, backgroundcolor=gray!8, roundcorner=5pt, linewidth=0.5pt, innertopmargin=1em, innerbottommargin=1em, innerleftmargin=1em, innerrightmargin=1em]

\textbf{Verification Conditions for Generalized Pushdown.} Let $\bisim$ denote the bisimulation invariant relating the internal states of the original and optimized executions. The pushdown pair $(Q, P')$ is sound if the following hold for all $a_1, a_2 \in A$ and $r \in R$:

\medskip

\noindent
\textbf{Initialization (\textsf{Init}):} \quad 
$ 
\bisim(I, I)
$

\noindent
\textbf{Synchronized-step preservation (\textsf{Sync}):}
\[
\bisim(a_1, a_2) \land Q(r) \land a_1' = f(a_1, r) \land a_2' = f(a_2, r)
\;\Rightarrow\;
\bisim(a_1', a_2')
\]

\noindent
\textbf{Stutter-step preservation (\textsf{Stutter}):}
\[
\bisim(a_1, a_2) \land \lnot Q(r) \land a_1' = f(a_1, r) \land a_2' = a_2
\;\Rightarrow\;
\bisim(a_1', a_2')
\]

\noindent
\textbf{Final-state agreement (\textsf{Final}):}
\[
\bisim(a_1, a_2)
\;\Rightarrow\;
\big(P(a_1) \land P'(a_2) \land a_1 = a_2\big)
\lor
\big(\lnot P(a_1) \land \lnot P'(a_2)\big)
\]

\end{mdframed}
\captionsetup{skip=2pt}
\caption{VCs for establishing  that $(Q, P')$ is a sound pushdown of $P$ through  UDF $F = \mathsf{fold}(x, I, f)$.}
\label{fig:verification-conditions}
\end{figure}


\vspace{-0.3em}
\paragraph{Initialization (\textsf{Init}).} The first {VC} asserts that $\bisim$ holds at the beginning of both executions, when the state is initialized to $I$. This serves as the base case for the inductive proof.

\vspace{-0.3em}
\paragraph{Synchronized-step preservation (\textsf{Sync}).} This condition handles the case where both the original and optimized executions process the same input row $r$ (i.e., when $Q(r)$ holds). In this case, both accumulators advance via the same accumulator $f$, and we must show that their updated internal states remain related by $\bisim$. We refer to this verification condition as ``synchronized-step preservation'' (or \sync for short) because the two executions move in lockstep on this row.

\vspace{-0.3em}
\paragraph{Stutter-step preservation (\textsf{Stutter}).} Our third condition addresses the case where $Q(r)$ is false: the optimized execution skips row $r$, while the original  processes it. Here, the state in the original execution is updated via $f$, while the  state in the optimized version remains unchanged. We must show that $\bisim$ continues to hold despite this asymmetry. The name ``stutter step'' (\stutter for short) reflects the fact that the optimized execution effectively ``stutters'' while the original moves forward.

\vspace{-0.3em}
\paragraph{Final-state agreement (\textsf{Final}).} 
The final condition ensures that the bisimulation invariant $\bisim$ is strong enough to imply agreement between the two executions at the end. Specifically, \emph{agreement} here means either (1) both executions satisfy their respective predicates, and their internal states are identical (i.e., $P(a_1) \land P'(a_2) \land a_1 = a_2$), or (2) both executions fail their respective predicates.

 \begin{definition}{\bf (Certification Witness)}
\label{def:cert-witness}
A predicate $\psi$ is a \emph{certification witness} for a candidate pushdown pair $(Q, P')$ if $\psi$ satisfies the verification conditions in Figure~\ref{fig:verification-conditions}.
\end{definition}

\begin{example}
We now illustrate our verification methodology in the context of the aggregation example from Section~\ref{sec:background}, where the \texttt{top2} UDF computes the two highest scores for each team. The original filter $P$ requires that both the top and second-highest scores exceed 90.0:
\[
P(a) = (a[0] > 90.0 \wedge a[1] > 90.0)
\]
As discussed in Section~\ref{sec:problem}, the optimal pushdown
prunes the input
and preserves behavior using 
\[
Q(r) = (r[`\mathsf{score}\textrm'] > 90.0)
\quad \text{and} \quad
P'(a) = (a[1] \neq -\infty).
\]

To verify the correctness of this pushdown optimization, we need the following bisimulation invariant
$\psi$,
where we use color to distinguish internal states from the original and optimized executions:
\orig{blue} refers to values computed over the full input (no filtering), and 
\opt{red} refers to values computed after filtering by $Q$:
\vspace{-0.8em}

\[
\begin{aligned}
 &\underbrace{
   (\orig{a_1[0]} > 90.0 \Rightarrow \orig{a_1[0]} = \opt{a_2[0]})
   \wedge
   (\opt{a_2[0]} > 90.0 \Rightarrow \orig{a_1[0]} = \opt{a_2[0]})
 }_{\text{(1) top-1 in sync if $>90.0$ in either}}\\
 \mathllap{\wedge}\;&\underbrace{
   (\orig{a_1[1]} > 90.0 \Rightarrow \orig{a_1[1]} = \opt{a_2[1]})
   \wedge
   (\opt{a_2[1]} > 90.0 \Rightarrow \orig{a_1[1]} = \opt{a_2[1]})
 }_{\text{(1) top-2 in sync if $>90.0$ in either}}\\
 \mathllap{\wedge}\;&\underbrace{(\opt{a_2[0]} \neq -\infty \Rightarrow \opt{a_2[0]} > 90.0)}_{\text{(2) optimized top-1 undefined or $>90$}}
 \wedge
 \underbrace{(\opt{a_2[1]} \neq -\infty \Rightarrow \opt{a_2[1]} > 90.0)}_{\text{(2) optimized top-2  undefined or $>90.0$}}
 \wedge
 \underbrace{(\opt{a_2[1]} > 90.0 \Rightarrow \opt{a_2[0]} > 90.0)}_{\text{(3) top-2 $>90.0$ implies top-1 $>90.0$}}
\end{aligned}
\]

Intuitively, this invariant states that (1) whenever one of the two highest scores exceeds 90.0, it stays in sync between both executions (first four conjuncts); 
(2) throughout the optimized execution, each score is either undefined or exceeds 90.0  (next two conjuncts), and 
(3)  if the second-highest score exceeds 90.0, the top score  must too (last conjunct). Crucially, all of these conjuncts are necessary in order to prove the correctness of the pushdown optimization. Hence, as this example illustrates, bisimulation invariants in this setting can be quite complex, allowing for partial disagreement in internal states while still guaranteeing equivalence of the final outcomes.

\end{example}

We conclude this section by noting that our verification methodology is sound and relatively complete in the standard sense of Hoare-style logics~\cite{cook78}: whenever a candidate pushdown pair $(Q, P')$ is correct and the underlying logic is sufficiently expressive, there exists a bisimulation invariant capable of establishing its correctness.

\begin{theorem}[Soundness and Relative Completeness]\label{thm:sound-complete}
 Let $(Q, P')$ be a candidate solution to the generalized pushdown problem
 for computation $F$ and predicate $P$. Then:
\begin{enumerate}[leftmargin=*]
  \item \textbf{Soundness:} If there exists a bisimulation invariant
  $\psi$ that is a certification witness for $(Q,P')$ by Definition~\ref{def:cert-witness}, then $(Q, P')$ is a sound optimization, i.e.,
$ 
  \forall x.\,\Lift(P, F(x)) = \Lift(P', F(\filter_Q(x))).
$

  \item \textbf{Relative completeness:} If the above equivalence holds and the underlying logic is sufficiently expressive, then  there exists an invariant $\psi$ that is a certification witness for $(Q,P')$.
\end{enumerate}
\end{theorem}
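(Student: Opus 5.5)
The plan is to prove both directions at the level of accumulator states. The equation $\Lift(P, F(x)) = \Lift(P', F(\filter_Q(x)))$ is read as a statement about the pair of final accumulators reached by running $f$ over $x$ and over $\filter_Q(x)$. For a sequence $x = [r_1,\dots,r_n]$ write $\sigma(x) = f(\ldots f(I, r_1)\ldots, r_n)$ for the accumulator reached, so $\sigma([]) = I$. The core of both directions is the following \emph{pairing lemma}: for every prefix length $k$, the pair $(\sigma([r_1..r_k]), \sigma(\filter_Q([r_1..r_k])))$ is obtained from $(I,I)$ by $k$ steps. Step $i$ is a synchronized step (both sides apply $f$ to $r_i$) when $Q(r_i)$ holds, and a stutter step (only the left side applies $f$) otherwise. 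This follows by a direct induction on $k$, using only the definition of $\filter_Q$ and of left fold.

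\textbf{Soundness.} I would prove by induction on $k$ that $\bisim(\sigma(x_{\le k}), \sigma(\filter_Q(x)_{\le k}))$ holds for every prefix. The base case is \init. In the inductive step I case-split on $Q(r_{k+1})$ and use the pairing lemma: if $Q(r_{k+1})$ holds, \sync gives the new pair; otherwise \stutter does. At $k=n$, \final yields one of two outcomes. Either $P(a_1) \land P'(a_2) \land a_1 = a_2$, in which case both sides of the equation equal $a_1$; or $\lnot P(a_1) \land \lnot P'(a_2)$, in which case both sides are $\bot$.

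\textbf{Relative completeness.} I would take $\bisim$ to be the set of \emph{reachable pairs}:
\[
\bisim(a_1,a_2) \;:=\; \exists x.\; a_1 = \sigma(x) \land a_2 = \sigma(\filter_Q(x)),
\]
which is expressible by the expressiveness assumption, in the sense of Cook-style relative completeness. The three inductive conditions then follow directly:
\begin{itemize}[leftmargin=*]
\item \init holds by taking $x = []$.
\item \sync holds by taking the witness $x$ for $(a_1,a_2)$ and appending a row $r$ with $Q(r)$, so both components advance by $f$.
\item \stutter holds by appending a row $r$ with $\lnot Q(r)$, which advances only the left component, because $\filter_Q(x \cdot r) = \filter_Q(x)$.
\end{itemize}
For \final, I take the witness $x$ and instantiate the assumed equivalence at $x$. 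This gives $\Lift(P,a_1) = \Lift(P',a_2)$. Expanding the definition of $\Lift$, and using $P(\bot) = \mathsf{false}$ and that a state is never $\bot$, this equality is exactly the disjunction in \final.

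\textbf{Main obstacle.} The hard step is reconciling the convention $\mathsf{fold}([], I, f) = \bot$ with the state-level argument. There are two places where it bites:
\begin{itemize}[leftmargin=*]
\item In soundness, $x$ may be nonempty while $\filter_Q(x)$ is empty. Then the optimized side is $\bot$ while the invariant only speaks about the pair $(a_1, I)$.
\item In completeness, the pair $(I,I)$ coming from $x=[]$ must satisfy \final, but the equation at $x=[]$ only says $\bot=\bot$.
\end{itemize}
My first approach is to make $\sigma$ and $F$ coincide by treating the initial state $I$ as the representation of ``no result''. That is, I assume $P(I)$ and $P'(I)$ are false, which matches the $-\infty$ sentinel in the \texttt{top2} example. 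Under this assumption, $\Lift(P,\sigma(x)) = \Lift(P, F(x))$ for all $x$, including empty and fully filtered inputs, and both arguments above go through verbatim. If that identification is not acceptable, the fallback is to track emptiness explicitly in the state: augment $A$ with a flag recording whether any row has been processed, so that $F$ becomes an honest total fold. The proofs are then repeated on the augmented states, which requires only extra bookkeeping in the \final case.
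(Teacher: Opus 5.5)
Your argument is correct, and its skeleton is the same as the paper's. The paper builds the product program ($a_1 := I;\ a_2 := I$; on each row step both accumulators if $Q(r)$, otherwise only $a_1$) and matches \init, \sync, \stutter, \final to the Hoare obligations for $\{a_1 = I \wedge a_2 = I\}$ product $\{\Lift(P,a_1)=\Lift(P',a_2)\}$. It then cites soundness of Hoare logic and Cook's relative completeness. You instead do the prefix induction by hand and exhibit the reachable-pairs relation, the strongest inductive invariant, as the witness. This makes explicit what the appeal to Cook leaves implicit, including that the invariant can be taken over $(a_1,a_2)$ alone rather than over the loop position and remaining input.

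More importantly, your explicit version exposes the convention $\mathsf{fold}([],I,f)=\bot$, which the paper's proof passes over. Its product program starts from $a_2 := I$, so it really proves the claim for the $\bot$-free fold $\sigma$. Your concern is substantive: the literal statement fails in both directions.
\begin{itemize}
\item \emph{Soundness.} Take $f(a,r)=a$, $Q\equiv\mathsf{false}$, $P\equiv P'\equiv\true$, and $\psi := (a_1=a_2)$. All four VCs hold, yet at $x=[r]$ the left side is $I$ while the right side is $\Lift(P',F([]))=\bot$.
\item \emph{Completeness.} Take $A=\mathbb{N}$, $I=0$, $f(a,r)=a+1$, $Q\equiv\true$, $P\equiv\true$, and $P'(a) := (a\ge 1)$. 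The equation holds for every $x$, but every witness contains $(I,I)$ by \init, and \final fails there.
\end{itemize}

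So your first repair is the right one, and it is essentially sharp. For soundness, $\neg P'(I)$ alone suffices, since \final at $(I,I)$ then yields $\neg P(I)$. For completeness, both $\neg P(I)$ and $\neg P'(I)$ are forced as soon as some row $r_0$ fails $Q$. The reason is that every witness then contains $(f(I,r_0),I)$ by \init and \stutter, and the equation at $[r_0]$ makes $P(f(I,r_0))$ false.

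Your fallback does not prove the stated theorem: augmenting the state changes $I$, $f$, $P$, $P'$ and hence the VCs themselves. Present it as a reformulation instead.

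One small slip: in the soundness induction, write $\filter_Q(x_{\le k})$ rather than $\filter_Q(x)_{\le k}$, since your pairing lemma tracks the former.
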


%% file: sections/synthesis.tex
\section{Synthesizing Optimal Pushdown Transformations}\label{sec:synthesis}

Given a
computation $F$ and a post-computation predicate $P$,
our goal is to synthesize a triple
$(Q, P', \psi)$
such that (1)
$Q$, $P'$, and $\psi$
satisfy the verification conditions from Figure~\ref{fig:verification-conditions}, and (2) the solution is optimal according to Definition~\ref{def:optimal-pushdown}, meaning $Q$ is as strong as possible, and $P'$ is as weak as possible. If we can find such a triple, then $(Q, P')$ can be used to obtain the best pushdown optimization for the original pipeline, while $\psi$ serves as a \emph{certificate of soundness} of the optimization. Hence, our synthesis method aims to simultaneously identify the best pushdown optimization possible, while also constructing its proof of correctness.

To motivate our solution, we first discuss the three main challenges in computing
$Q$, $P'$, and $\psi$.
First, even ignoring the optimality requirement, this problem cannot be solved using off-the-shelf CHC solvers (as commonly done in program verification and predicate synthesis), because the  VCs in Figure~\ref{fig:verification-conditions} are  \emph{non-Horn} due to the \emph{negated} occurrence of $Q$ in
\stutter. Second, even if we manually supply the ground-truth solution for $Q$ and thereby make the  resulting VCs Horn, we still find that existing solvers struggle to infer  $\bisim$ and $P'$. Finally, the optimality requirement makes an already-difficult inference problem even harder. In this context, $Q$ and $P'$ need to be optimized in \emph{different} directions, creating a dual-criteria search in which we aim to find the strongest admissible $Q$ and the weakest admissible $P'$. 

Our synthesis method directly addresses these challenges through three core innovations.
First, we introduce a \emph{structured decomposition} of the search space that breaks down the joint inference of
$Q$, $P'$, and $\psi$
into a layered pipeline. By enumerating candidate input filters $Q$ in decreasing order of strength, computing the strongest admissible bisimulation $\bisim$ for each, and then deriving the weakest consistent residual $P'$, our method transforms a tangled three-way dependency into a tractable sequence of subproblems
while still
guaranteeing optimality. 
Second, we develop \emph{symbolic bounds on bisimulation invariants} that act as \emph{unrealizability certificates}. These bounds allow the synthesizer to quickly rule out entire families of pushdown candidates. 
Finally, our algorithm incorporates \emph{root-cause-guided repair}, which analyzes the precise reason a verification condition fails  and selectively refines the responsible artifact (i.e., $Q$, $\psi$, or $P'$).




\subsection{Top-Level Algorithm}\label{sec:top-level}

Algorithm~\ref{alg:synthesize} shows our top-level synthesis procedure, {\sc SynthesizeOptimalPushdown}, which seeks optimal instantiations of $Q$ and~$P'$ expressible as conjunctions of atoms drawn from their respective predicate universes $U_Q$ and $U_{P'}$.
We deliberately confine the search to a fixed predicate universe for several reasons. First, real-world UDFs often manipulate complex internal state whose SMT encodings rely on algebraic data types, making automated refinement techniques such as Craig interpolation impractical in our setting.
Moreover, static analysis of the UDF and its surrounding pipeline allows us to \emph{a priori} extract the building blocks from which $Q$, $P'$, and~$\bisim$ can be formed. This includes not only simple literals but also richer predicates such as implications derived from control-flow structure.
Finally, bounding
synthesis to a finite universe yields a well-structured search space that can be systematically explored in decreasing order of logical strength for~$Q$ and increasing order for~$P'$, ensuring tractable search and clear optimality guarantees.

Given these predicate universes, Algorithm~\ref{alg:synthesize} starts by initializing a worklist with the strongest possible input-side filter (the conjunction of all atoms in~$U_Q$) and explores progressively weaker candidates through the main loop (lines 3--11). In each iteration, it dequeues a candidate input-side filter $Q_0$ from the worklist 
and subjects it to a \emph{feasibility check} by invoking \textsc{WeakenViaBounds} on line~4.  If $Q_0$ is deemed infeasible (meaning that no sound pushdown optimization can exist with $Q_0$ as the input-side filter), \textsc{WeakenViaBounds} returns a repaired (weaker) version of $Q_0$, along with symbolic bounds  $(\bisim_{\min}, \bisim_{\max})$   that summarize the feasible region for the target bisimulation invariant.
These bounds act as under- and over-approximations for the true invariant~$\bisim$: That is, $\bisim_{\min}$ collects atoms that \emph{must} appear in any valid invariant, while $\bisim_{\max}$ gathers those that \emph{may}.

\begin{algorithm}[h]
\caption{\textsc{SynthesizeOptimalPushdown}}
\label{alg:synthesize}
\begin{algorithmic}[1]
\Statex \textbf{Input:} UDF $F = \mathsf{fold}(x, I, f)$, post-UDF filter $P$, predicate universes $U_Q, U_{P'}, U_\bisim$
\Statex \textbf{Output:} Optimal pushdown pair $(Q, P')$
\Function{SynthesizeOptimalPushdown}{$F, P, U_Q, U_{P'}, U_\bisim$}
    \State $W \gets \{U_Q\}$ \Comment{Start from strongest filter}
    \While{$\textsc{Dequeue}(W) = \mathsf{Some}(Q_0)$}
        \State $(\mathsf{ok}, Q, \bisim_{\min}, \bisim_{\max}) \gets \textsc{WeakenViaBounds}(Q_0, P, U_\bisim)$
        \If{$\neg \mathsf{ok}$} \textbf{continue}
        \EndIf
        \State $(\mathsf{ok}, \bisim, \mathsf{diagnosis}) \gets \textsc{FindStrongestBisimulation}(Q, \bisim_{\min}, \bisim_{\max})$
        \If{$\neg \mathsf{ok}$}
            $Q' \gets $ \Call{Repair}{$Q$, \textsf{diagnosis}};\ \textsf{Enqueue}$(W, Q')$
        \Else
            \State $(\mathsf{ok}, P') \gets \textsc{FindResidual}(\bisim, U_{P'}, P)$
            \If{$\mathsf{ok}$} \Return $(Q, P')$ \EndIf
        \EndIf
        \ForAll{$q \in Q$}  \textsf{Enqueue}$(W, Q \setminus \{q\})$ \EndFor
    \EndWhile
    \State \Return \textsf{Fail}
\EndFunction
\Function{Repair}{$Q, \textsf{diagnosis}$}
    \If{$\textsf{diagnosis} = \textsf{Sync}(r)$}
        \Return $ \{\, u \in Q \mid \neg u(r) \,\}$ \Comment{Force $Q(r)=\mathsf{false}$}
    \ElsIf{$\textsf{diagnosis} = \textsf{Stutter}(r)$}
        \Return $\{\, u \in Q \mid u(r) \,\}$ \Comment{Force $Q(r)=\mathsf{true}$}
    \EndIf
\EndFunction
\end{algorithmic}
\end{algorithm}

Once a repaired filter $Q$ and symbolic bounds  are obtained, the algorithm attempts to construct the strongest invariant $\bisim$ that is consistent with the bounds  $(\bisim_{\min}, \bisim_{\max})$ by calling {\sc FindStrongestBisimulation} (line~6). However, since the unrealizability check performed in {\sc WeakenViaBounds} is sound but incomplete, $Q$ \emph{could} still be infeasible, causing {\sc FindStrongestBisimulation} to fail. In this case, {\sc FindStrongestBisimulation} also returns a \emph{diagnosis} that identifies the specific verification condition violated by $Q$. This diagnosis takes one of two forms: 

\begin{itemize}[leftmargin=*]
  \item \textbf{\textsf{Sync}$(r)$:} indicates that the \emph{\sync} VC failed on row $r$. Intuitively, this means that if $Q$ classifies $r$ as selected (i.e., $Q(r)=\mathsf{true}$), then applying the accumulator function $f$ to both executions would produce internal states that no bisimulation within $(\bisim_{\min}, \bisim_{\max})$ can reconcile. The only sound repair is therefore to \emph{exclude} $r$ by enforcing $Q(r)=\mathsf{false}$, implemented by removing all atoms in $Q$ that evaluate to \textsf{true} on $r$.
  
  \item \textbf{\textsf{Stutter}$(r)$:} indicates that the \emph{\stutter} VC failed on row $r$. In this case, $Q$ classifies $r$ as unselected (i.e., $Q(r)=\mathsf{false}$), but skipping $r$ breaks agreement between the two internal states. Because no invariant can restore equivalence after this omission, the repair is to \emph{force inclusion} of $r$ by enforcing $Q(r)=\mathsf{true}$, i.e., by removing all atoms in $Q$ that evaluate to \textsf{false} on $r$.
\end{itemize}

The algorithm then backtracks by enqueuing the repaired filter $Q'$ along with all one-step weakenings of $Q$, each obtained by dropping a single atom.
Otherwise, a valid invariant $\bisim$ is successfully synthesized, and the algorithm proceeds to compute the weakest residual predicate $P'$ by invoking {\sc FindResidual} on line~9. If {\sc FindResidual} succeeds, the algorithm terminates and returns $(Q,P')$; if not, it enqueues all one-step weakenings of $Q$ and backtracks.

\paragraph{Discussion.} A key point about this algorithm is that it can often establish the \emph{infeasibility} of a candidate $Q$ without attempting to synthesize either $\bisim$ or $P'$. Because  each candidate $Q$ could require exploring a vast  space of bisimulation invariants and corresponding residuals, ruling out even a \emph{single} candidate $Q$ can be very beneficial.  Furthermore, this pruning effect is amplified by the repair mechanism (via \textsc{WeakenViaBounds}), which implicitly removes from the search space \emph{many} weakenings of $Q$ that are infeasible for the same reason as $Q$. In other words, the infeasibility of a single $Q$ can propagate to entire families of related filters, significantly reducing the search space (as shown empirically in Section~\ref{sec:eval}).

\begin{theorem}[Guarantees]
\label{thm:optimal-synth-guarantees}
Fix finite predicate universes $U_Q$, $U_{P'}$, and $U_\psi$.
For any computation $F$ and post-filter $P$, \textsc{SynthesizeOptimalPushdown}$(F,P,U_Q,U_{P'},U_\psi)$ satisfies the following properties:

\begin{enumerate}[leftmargin=*]
\item \textbf{Termination.} The procedure terminates.

\item \textbf{Soundness.} If it returns $(Q^*,P^*)$ where $Q^* \subseteq U_Q$ and $P^* \subseteq U_{P'}$, then $(Q^*,P^*)$ satisfies Equation~\eqref{eq:def-pushdown}.

\item \textbf{Optimality within the search space.}
Let $(Q^*,P^*)$ be the pair returned by the procedure. Then:
\begin{enumerate}[leftmargin=*,label=(\alph*)]
\item (\emph{Maximal $Q$}) For any $Q \subseteq U_Q$, $P' \subseteq U_{P'}$, and $\psi \subseteq U_\psi$ such that $\psi$ is a certification witness for $(Q,P')$, $Q$ is not strictly stronger than $Q^*$.
\item (\emph{Minimal $P'$ given $Q^*$}) For any $P' \subseteq U_{P'}$ and $\psi \subseteq U_\psi$ such that $\psi$ is a certification witness for $(Q^*,P')$, $P'$ is not strictly weaker than $P^*$.
\end{enumerate}
\end{enumerate}
\end{theorem}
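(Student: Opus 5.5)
We prove the three parts separately. We rely on specifications of the subroutines, which we state as lemmas to be established alongside. (i) \textsc{WeakenViaBounds} returns a $Q \subseteq Q_0$; every valid witness $\psi \subseteq U_\psi$ for any pair $(Q'',P')$ with $Q'' \subseteq Q_0$ discarded by it satisfies $\psi_{\min} \Rightarrow \psi \Rightarrow \psi_{\max}$. (ii) \textsc{FindStrongestBisimulation}, when it succeeds, returns the conjunction of all atoms of $U_\psi$ within the bounds that survive a Houdini-style greatest-fixpoint iteration. This $\psi$ satisfies \init, \sync, \stutter. When it fails, the diagnosis row $r$ is a genuine counterexample to \sync\ or \stutter\ for every candidate invariant in the bounds. (iii) \textsc{FindResidual} returns a minimal (under implication) $P' \subseteq U_{P'}$ such that \final\ holds for $\psi$, if one exists.

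\textbf{Termination.} Every $Q$ enqueued is a subset of $U_Q$. The repaired filter and the one-step weakenings are strict subsets of the dequeued $Q_0$. The exception is when \textsc{Repair} removes nothing, which cannot happen for a genuine diagnosis: \textsf{Sync}$(r)$ requires $Q(r)$, so some atom is true on $r$; \textsf{Stutter}$(r)$ requires $\neg Q(r)$, so some atom is false on $r$. Hence the cardinality $|Q|$ strictly decreases along every enqueue chain. Assuming the worklist does not re-enqueue already visited sets (or arguing by a finite-multiset/König argument), only finitely many elements are ever processed, since $2^{U_Q}$ is finite. Each subroutine terminates because its universe is finite and each invokes finitely many SMT queries, which we assume are decided.

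\textbf{Soundness.} A return happens only after \textsc{FindStrongestBisimulation} produced $\psi$ satisfying \init, \sync, \stutter, and \textsc{FindResidual} produced $P'$ making \final\ hold. So $\psi$ is a certification witness for $(Q,P')$ per Definition~\ref{def:cert-witness}. Theorem~\ref{thm:sound-complete}(1) then gives Equation~\eqref{eq:def-pushdown}.

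\textbf{Optimality.} This is the main obstacle.

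For (b), we use a monotonicity fact. If $\psi'$ witnesses $(Q^*,P')$ with $\psi' \subseteq U_\psi$, then the computed $\psi$ (the strongest inductive conjunction) implies $\psi'$: the Houdini fixpoint contains every inductive sub-conjunction, by the standard union-of-inductive-sets argument, and the bounds do not exclude $\psi'$ by lemma (i). Therefore \final\ for $\psi'$ with $P'$ yields \final\ for $\psi$ with $P'$. Minimality of the returned $P^*$ among residuals valid for $\psi$ then rules out a strictly weaker $P'$.

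For (a), suppose a witnessed pair $(Q,P')$ has $Q$ strictly stronger than $Q^*$. As sets of atoms (modulo identifying semantically equivalent conjunctions) this means $Q \supsetneq Q^*$, so $Q$ lies on the weakening lattice between $U_Q$ and $Q^*$. We prove the invariant that no feasible superset of any processed filter is ever skipped. Formally, for every feasible $Q$, some ancestor chain from $U_Q$ keeps $Q$ reachable, i.e. some enqueued element contains $Q$ at every point until $Q$ (or a superset of it that succeeds) is dequeued. The pruning steps preserve this invariant:
\begin{itemize}
\item Bound-based weakening keeps every feasible subset, by lemma (i).
\item \textsc{Repair} keeps every feasible subset. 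A feasible subset must classify the diagnosed $r$ the way the repair forces, because $r$ refutes all invariants in the bounds. Any $Q'' \subseteq Q$ that still misclassifies $r$ is therefore infeasible, and $Q''$ classifies $r$ correctly only if it avoids the removed atoms.
\item One-step weakenings cover all remaining strict subsets.
\end{itemize}
Then the dequeue order must be by decreasing strength, e.g. a priority queue on $|Q|$ with a subsumption check. This guarantees that a feasible strict superset of $Q^*$ would have been dequeued and returned earlier. The delicate point is a strict superset $Q$ that is dequeued but for which \textsc{FindResidual} fails within $U_{P'}$ even though some witnessed $P'$ exists. Part (b)'s monotonicity argument excludes this, which closes the contradiction.
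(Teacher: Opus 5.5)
Termination, soundness and part (b) are fine. Your finitely-branching tree of depth at most $|U_Q|$ is a valid substitute for the paper's lexicographic worklist measure. For (b) you only need $\psi'\subseteq\psi_{\max}$, which holds because $\psi_{\max}$ is pruned using \init and \sync alone; the Houdini fixpoint then contains $\psi'$.

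\textbf{Main gap: the last step of (a).} You argue that dequeuing by decreasing size means ``a feasible strict superset of $Q^*$ would have been dequeued and returned earlier.'' This does not follow, because the returned $Q^*$ is not the dequeued element $Q_k$. It is the output of \textsc{WeakenViaBounds} on $Q_k$, and it can be much smaller than $Q_k$.

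Your reachability invariant only places a certifiable $Q\supsetneq Q^*$ in \emph{some} pending element $Q_j$. If $Q\subseteq Q_k$, you do get $Q\subseteq Q^*$, which is a contradiction. Otherwise you only know $|Q^*|<|Q|\le|Q_j|\le|Q_k|$, and nothing forces $Q_j$ to be processed before the algorithm returns at $Q_k$.

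Nothing in the algorithm excludes the following run:
\begin{enumerate}
\item Take $U_Q=\{a,c,d,e\}$, and suppose the certifiable filters are exactly the subsets of $\{a,d\}$.
\item \textsc{FindStrongestBisimulation} fails on $U_Q$ with a \stutter diagnosis on a row falsifying $c$ and $e$.
\item The worklist then holds $\{a,d\}$ together with $\{c,d,e\}$, $\{a,d,e\}$, $\{a,c,e\}$ and $\{a,c,d\}$.
\item The max-heap dequeues a size-3 element such as $\{c,d,e\}$.
\item \textsc{WeakenViaBounds} soundly drops $c$ and $e$ there, leaving $\{d\}$, which passes \textsc{FindStrongestBisimulation} and \textsc{FindResidual}.
\end{enumerate}
The procedure then returns $\{d\}$, although $\{a,d\}$ is certifiable and strictly stronger. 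A subsumption check does not change this.

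Your argument closes only if the returned filter is itself of maximum size among the pending candidates. For example, the algorithm could re-enqueue the weakened filter and return only when a dequeued filter survives unchanged. With that change, your invariant plus the cardinality count, together with your $Q\cup Q^*$ trick for semantic strength, does give the contradiction.

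\textbf{Comparison with the paper, and a second issue.} The paper proves (a) by well-founded induction on the worklist, ordered lexicographically by maximum size and the number of maximum-size elements. Its hypothesis carries a certifiable $Q^{\sharp}$ through \textsc{WeakenViaBounds}, \textsc{Repair} and the one-step weakenings, which is essentially your invariant. However, its return subcase concludes $Q^{\sharp}\subseteq Q^*$ only for a $Q^{\sharp}$ contained in the dequeued element. So it leaves the same case open and does not supply the missing step.

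A smaller problem: your lemma (i) asserts $\psi_{\min}\subseteq\psi$ for witnesses of $(Q'',P')$ with arbitrary $P'$. But $\psi_{\min}$ is computed from \final with $P'$ replaced by $P$. \final for $P'$ does not imply that formula pointwise, since its disjunct $\neg P(a_1)\wedge\neg P'(a_2)$ allows $P(a_2)$.

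Your repair-soundness steps in (a) rely on this lower bound through \textsc{CheckUnrealizable} and \textsc{FindStrongestBisimulation}. You therefore need one of two things:
\begin{itemize}
\item restrict the invariant to witnesses of $(Q,P)$ and show that every certifiable $(Q,P')$ yields one inside $U_\psi$ (the semantic analogue holds by applying the pushdown equation to $\filter_Q(x)$, but a given $\psi$ need not certify $(Q,P)$); or
\item assume this property of $U_\psi$.
\end{itemize}
The paper merely asserts it.
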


Please note that Theorem~\ref{thm:optimal-synth-guarantees} provides optimality only relative to the fixed universes $U_Q$, $U_{P'}$, and $U_\psi$.
Accordingly, the procedure may miss valid pushdowns in two ways: (i) there may exist a valid decomposition $(Q_0,P_0)$ with $Q_0 \not\subseteq U_Q$ or $P_0 \not\subseteq U_{P'}$ that is strictly better (stronger $Q_0$ or weaker $P_0$); or (ii) a valid pair $(Q,P')$ with $Q \subseteq U_Q$ and $P' \subseteq U_{P'}$ may require a bisimulation invariant outside $U_\psi$.
However, fixing these finite predicate universes is necessary for termination: Without restricting $U_Q$ and $U_{P'}$, an optimum need not be finitely representable within the predicate language, and without restricting $U_\psi$, searching for an invariant witness does not admit termination guarantees in general.  Additionally, solver timeouts or \textsf{unknown} answers may prevent, in practice, the procedure from finding an optimal solution with the fixed predicate universes.

\subsection{Unrealizability Proofs via Symbolic Bounds on Bisimulation Invariants}

Given a candidate  filter $Q$, Algorithm~\ref{alg:weaken-via-bounds} (\textsc{WeakenViaBounds}) is the engine that decides whether our algorithm should (a) keep $Q$ as is, (b) \emph{repair} it to a weaker predicate, or (c) abandon it completely. To make this decision,
Algorithm~\ref{alg:weaken-via-bounds}
maintains a pair of symbolic bounds $(\bisim_{\min},\bisim_{\max})$ over the invariant universe $U_{\bisim}$ that serve as symbolic lower and upper bounds on the bisimulation invariant. 

\begin{algorithm}[t]
\caption{\textsc{WeakenViaBounds}}
\label{alg:weaken-via-bounds}
\begin{algorithmic}[1]
\Statex \textbf{Input:} Candidate input filter $Q$, post-UDF filter $P$, invariant universe $U_\bisim$
\Statex \textbf{Output:} $(\mathsf{ok}, Q', \bisim_{\min}', \bisim_{\max}')$ where $\mathsf{ok}\in\{\mathsf{true},\mathsf{false}\}$, potentially weakened filter $Q'$, and potentially tightened symbolic bounds $(\bisim_{\min}', \bisim_{\max}')$
\Function{WeakenViaBounds}{$Q, P, U_\bisim$}
    \State $(\bisim_{\min}, \bisim_{\max}) \gets (\varnothing, U_\bisim)$
    \While{\textsf{true}}
        \If{$Q=\varnothing$} \Return $(\mathsf{false}, -, -, -)$ \EndIf
        \State $(\bisim_{\min}, \bisim_{\max}) \gets \Call{RefineBounds}{Q, P, U_\bisim, \bisim_{\min}, \bisim_{\max}}$
        \State ($\mathsf{res}, \mathsf{diagnosis}) \gets \Call{CheckUnrealizable}{Q, \bisim_{\min}, \bisim_{\max}}$
        \If{$\neg \mathsf{res}$}
            \Return $(\mathsf{true}, Q, \bisim_{\min}, \bisim_{\max})$
            \textbf{else} $Q \gets \Call{Repair}{Q, \mathsf{diagnosis}}$
        \EndIf
    \EndWhile
\EndFunction
\end{algorithmic}
\end{algorithm}


\textsc{WeakenViaBounds} initializes $(\bisim_{\min},\bisim_{\max})$ to the trivial bounds $(\varnothing, U_{\bisim}) $ on line 2 and then gradually weakens the candidate filter $Q$ until a termination condition is reached. In each iteration, the procedure alternates between \emph{tightening} the symbolic bounds via \textsc{RefineBounds} (line 5) and querying \textsc{CheckUnrealizable} for an \emph{unrealizability certificate} given the current bounds (line 6). These two procedures play complementary roles. {\sc RefineBounds} addresses the question: \emph{``Given the current $Q$, what is the tightest bound we can infer on the bisimulation invariant?''} On the other hand, {\sc CheckUnrealizable} addresses the question: \emph{``Given the current bounds $(\bisim_{\min}, \bisim_{\max})$, can we prove that no invariant $\bisim$ in this range will suffice to prove the correctness of $Q$?''} Together with the repair mechanism on line 7 (\textsf{false} branch), these two procedures form a virtuous cycle in the following way: First,  the refined bounds may allow us to prove the unrealizability of a filter we previously could not prove to be infeasible. Second, once we detect infeasibility of $Q$, we can weaken it through targeted repair (via the same {\sc Repair} procedure from earlier), which in turn allows us to further refine the bounds on $\bisim$. This alternation between bound tightening and unrealizability checking continues until either (a) $Q$ becomes $\varnothing$ and is deemed hopeless (line 4) or (b) $Q$ appears to be consistent with the bounds (\textsf{true} branch on line 7).  In the remainder of this subsection, we discuss the
sub-procedures
{\sc RefineBounds} and {\sc CheckUnrealizable}.

\begin{algorithm}[h]
\caption{\textsc{RefineBounds}}
\label{alg:refine-bounds}
\begin{algorithmic}[1]
\Statex \textbf{Input:} Candidate filter $Q$, post-filter $P$, invariant universe $U_{\bisim}$, symbolic bounds $(\bisim_{\min}, \bisim_{\max})$
\Statex \textbf{Output:} Refined bounds $(\bisim_{\min}, \bisim_{\max})$
\Function{RefineBounds}{$Q, P, U_{\bisim}, \bisim_{\min}, \bisim_{\max}$}

  \If{$\bisim_{\min} = \varnothing$}   \Comment{Lower bound (computed  once)}
    \State $\phi \gets (P(a_1) \land P(a_2) \land a_1 = a_2) \;\lor\; (\neg P(a_1) \land \neg P(a_2))$
    \State $\bisim_{\min} \gets \textsc{FindWeakestImplicant}(U_{\bisim}, \phi)$
  \EndIf

  \ForAll{$p \in \bisim_{\max} \setminus \bisim_{\min}$} \Comment{Upper bound at base case (computed once)}
    \If{$\neg p(I, I)$}  $\bisim_{\max} \gets \bisim_{\max} \setminus \{p\}$
    \EndIf 
  \EndFor

  \ForAll{$p \in \bisim_{\max} \setminus \bisim_{\min}$} \Comment{Upper bound based on current $Q$}
    \State $\varphi \gets \bisim_{\max}(a_1, a_2) \land Q(r) \land a_1' = f(a_1, r) \land a_2' = f(a_2, r)$
    \If{$\varphi \;\not\Rightarrow\; p(a_1', a_2')$}  $\bisim_{\max} \gets \bisim_{\max} \setminus \{p\}$
    \EndIf
  \EndFor

  \State \Return $(\bisim_{\min}, \bisim_{\max})$
\EndFunction
\end{algorithmic}
\end{algorithm}

\paragraph{Computing symbolic bounds on the bisimulation invariant.}
Algorithm~\ref{alg:refine-bounds} (\textsc{RefineBounds}) maintains the symbolic bounds $(\bisim_{\min},\bisim_{\max})$ within which any valid bisimulation must lie.
The lower bound $\bisim_{\min}$ is derived from \final and is  computed only  \emph{once}, as this VC is independent of  $Q$. 
One subtlety  here is that \final also depends on $P'$, which is completely unknown at this stage. 
To sidestep this dependency, \textsc{RefineBounds} instantiates
occurrences of the unknown residual $P’$ in the VC with the known filter $P$, yielding the formula $\phi$ on line 3 of Algorithm~\ref{alg:refine-bounds}. Intuitively, $P$ represents the strongest possible residual; thus,  if the input-side filter $Q$ is indeed valid, then applying $Q$ in conjunction with the original filter $P$ should already yield a behaviorally equivalent result. 
Hence, substituting $P'$ with $P$ in \final can only \emph{weaken} the requirement and therefore produces a conservative approximation for deriving a lower bound on $\bisim$. In particular, any implicant of   formula $\phi$ from line 3 serves as a valid lower bound, but we compute the \emph{weakest prime implicant}~\cite{prime} to ensure that the bound is as tight as possible.\footnote{See Appendix~\ref{sec:proofs} for the definition of \textsc{FindWeakestImplicant} and its correctness proofs.}

In contrast, the upper bound $\bisim_{\max}$ is refined incrementally as $Q$ evolves because its computation involves \sync, which is dependent on $Q$. 
At initialization, $\bisim_{\max}$ contains all atoms in the universe $U_\bisim$.
Each call to \textsc{RefineBounds} prunes this set in two ways. 
First, the algorithm removes any atom that fails \init.\footnote{This initialization check $p(I,I)$ only needs to be performed on the first invocation of \textsc{RefineBounds}. } Then, it checks whether each remaining atom $p$ is preserved by \sync  under the current $\bisim_{\max}$. Specifically, it instantiates \sync with the full upper bound in the antecedent and the candidate atom $p$ alone in the consequent:
\[
\bisim_{\max}(a_1, a_2) \land Q(r) \land a_1' = f(a_1, r) \land a_2' = f(a_2, r)
\;\Rightarrow\;
p(a_1', a_2').
\]
If this implication fails, the atom $p$ is removed from $\bisim_{\max}$.

It is worth noting that the {\sc RefineBounds} procedure does \emph{not} use \stutter in refining the upper bound in order to ensure soundness.
Specifically, recall that \stutter contains the negation of $Q$ in its antecedent. Since we start from the strongest possible  $Q$ and gradually weaken it until it is sound, the current candidate $Q$ could be logically stronger than the sound solution $Q^*$. This, in turn, means that the antecedent of \stutter could be weaker than when it is instantiated with the ground truth solution $Q^*$. Hence, it would be unsound to  drop predicates that are not implied by the current antecedent. Due to this asymmetry, {\sc RefineBounds} only uses \init and \sync.

To illustrate, consider the \texttt{top2} running example from Section~\ref{sec:background} with $I = (-\infty, -\infty)$.
The \init pass evaluates each atom on $(I,I)$: for instance,
$a_1[0] = -\infty \Rightarrow \mathsf{false}$ (``the top score is always defined'') fails and is discarded, while
$a_1[0] = -\infty \Rightarrow a_1[0] = a_2[0]$ (``if undefined in the original, then also in the optimized'') evaluates to $-\infty = -\infty$ and survives.
The \sync pass then removes atoms incompatible with $Q(r) = (r[\text{`}\mathsf{score}\textrm'] > 90.0)$: the atom
$(a_1[0] > 90.0 \Rightarrow \mathsf{false})$ survives \init but fails \sync, since processing a qualifying row forces the top score above~90.0.


\paragraph{Checking {unrealizability} via symbolic bounds.} 
Algorithm~\ref{alg:qfeasible-approx} (\textsc{CheckUnrealizable}) uses the bounds $(\bisim_{\min},\bisim_{\max})$ to decide whether the current filter $Q$ could  admit a sufficiently  strong bisimulation. 
Rather than synthesizing an explicit invariant, it asks a simpler question: is \emph{any} invariant lying between these bounds compatible {with} the one-step proof obligations? 
To answer this question, the procedure instantiates the two key verification conditions (namely, \sync and \stutter) using the bounds in their most conservative form: $\bisim_{\max}$ is placed in the antecedent (the strongest possible assumption) and $\bisim_{\min}$ in the consequent (the weakest admissible conclusion). 
If \emph{even this} implication fails, then no invariant consistent with the current candidate $Q$ could possibly satisfy the condition.

\begin{algorithm}[t]
\caption{\textsc{CheckUnrealizable}}
\label{alg:qfeasible-approx}
\begin{algorithmic}[1]
\Statex \textbf{Input:} Candidate input filter $Q$, symbolic bounds $(\bisim_{\min}, \bisim_{\max})$
\Statex \textbf{Output:} Unrealizability judgment and root cause diagnosis: $\mathsf{Sync}(r)$ or $\mathsf{Stutter}(r)$
\Function{CheckUnrealizable}{$Q, \bisim_{\min}, \bisim_{\max}$}
    \State $\chi_{\mathsf{sync}} \gets \big ( (\bisim_{\max}(a_1, a_2) \land Q(r) \land a_1' = f(a_1, r) \land a_2' = f(a_2, r)) \;\Rightarrow\; \bisim_{\min}(a_1', a_2') \big )$ 
    
    \If{\textsf{Invalid}($\chi_{\mathsf{sync}})$}
        \State $(a_1, a_2, r) \gets \textsf{Model}(\neg \chi_{\mathsf{sync}})$
        \State \Return $(\mathsf{true}, \mathsf{Sync}(r))$
    \EndIf
    \State $\chi_{\mathsf{stutter}} \gets \big ( (\bisim_{\max}(a_1, a_2) \land \neg Q(r) \land a_1' = f(a_1, r) \land a_2' = a_2) \;\Rightarrow\; \bisim_{\min}(a_1', a_2') \big ) $
    \If{\textsf{Invalid}($\chi_{\mathsf{stutter}})$}
        \State $(a_1, a_2, r) \gets \textsf{Model}(\neg \chi_{\mathsf{stutter}})$
        \State \Return $(\mathsf{true}, \mathsf{Stutter}(r))$
    \EndIf
    \State \Return $(\mathsf{false}, -)$
\EndFunction
\end{algorithmic}
\end{algorithm}

If \sync fails, the procedure returns a witness $\mathsf{Sync}(r)$, where $r$ is a symbolic row that demonstrates the violation. 
Intuitively, this means that if $Q$ were to classify $r$ as selected, the two internal states would be forced into a disagreement that no invariant between $(\bisim_{\min},\bisim_{\max})$ could reconcile. 
The only possible repair is therefore to exclude $r$, i.e., to enforce $Q(r)=\mathsf{false}$. 
Dually, if \stutter fails, the procedure returns $\mathsf{Stutter}(r)$ with a witness $r$. 
In this case, the violation arises because skipping $r$ (i.e., $Q(r)=\mathsf{false}$) would break agreement, and no admissible invariant could compensate. 
In this case, the only repair is to retain $r$ by  enforcing $Q(r)=\mathsf{true}$. 
If both checks succeed, the procedure returns $(\mathsf{false}, -)$, meaning that there is at least one invariant consistent with the current bounds that \emph{could}
establish the correctness of $Q$.
Importantly, {this} is not a guarantee that such an invariant can actually be synthesized but merely that the current filter $Q$ has not yet been refuted by the bounds. 
Thus, \textsc{CheckUnrealizable} acts as a sound but incomplete filter.

Returning to the \texttt{top2} example, suppose $U_Q$ additionally contains $r[\text{`}\mathsf{score}\textrm'] > 95.0$.
The \stutter check finds a unrealizability witness with $a_1 = a_2 = (97.0, -\infty)$ and $r[\text{`}\mathsf{score}\textrm'] = 92.0$: the optimized execution skips $r$ (since $92.0 < 95.0$), but the original processes it, producing $a_1' = (97.0, 92.0)$ with $P(a_1') = \mathsf{true}$ while $P(a_2') = P(a_2) = \mathsf{false}$, violating $\bisim_{\min}$.
The repair forces $Q(r) = \mathsf{true}$ by dropping $r[\text{`}\mathsf{score}\textrm'] > 95.0$ (false on the witness), weakening $Q$ to $r[\text{`}\mathsf{score}\textrm'] > 90.0$.

\subsection{Synthesizing Bisimulation Invariants and Residuals}
Given a candidate  $Q$, recall that the top-level synthesis algorithm must construct two remaining artifacts: first, the strongest bisimulation invariant $\bisim$ consistent with $Q$, and \emph{then} the weakest residual predicate $P'$. This ordering preserves optimality: for any given choice of $P’$, if we cannot prove \final using the strongest possible bisimulation invariant, we certainly also cannot prove it using a weaker one. Thus, our algorithm first infers the strongest possible bisimulation invariant and then uses it to search for the weakest possible residual. 

\begin{algorithm}[t]
\caption{\textsc{FindStrongestBisimulation}}
\label{alg:find-bisim}
\begin{algorithmic}[1]
\Statex \textbf{Input:} Candidate input filter $Q$, symbolic bounds $(\bisim_{\min}, \bisim_{\max})$
\Statex \textbf{Output:} Strongest bisimulation invariant $\bisim$ or root cause diagnosis upon failure

\Function{FindStrongestBisimulation}{$Q, \bisim_{\min}, \bisim_{\max}$}
  \State $\bisim_{\text{cand}} \gets \bisim_{\max}$

  \State \textbf{repeat until} $\bisim_{\text{cand}}$ \textbf{converges}
    \State \hspace{1.2em} $(\mathsf{ok}, \bisim_{\text{cand}}, r) \gets \Call{WeakenViaVC}{Q(r), f(a_2,r), \bisim_{\min}, \bisim_{\text{cand}}}$ \Comment{\sync}
    \State \hspace{1.2em} \textbf{if} $\neg \mathsf{ok}$ \Return $(\mathsf{false}, -, \mathsf{Sync}(r))$
    \State \hspace{1.2em} $(\mathsf{ok}, \bisim_{\text{cand}}, r) \gets \Call{WeakenViaVC}{\neg Q(r), a_2, \bisim_{\min}, \bisim_{\text{cand}}}$ \Comment{\stutter}
    \State \hspace{1.2em} \textbf{if} $\neg \mathsf{ok}$ \Return $(\mathsf{false}, -, \mathsf{Stutter}(r))$
  
  \State \Return $(\mathsf{true}, \bisim_{\text{cand}}, -)$
\EndFunction

\Function{WeakenViaVC}{$G(r), t_2, \bisim_{\min}, \bisim_{\text{cand}}$}
  \ForAll{$p \in \bisim_{\text{cand}}$}
    \State $\chi \gets \big((\bisim_{\text{cand}} \land G(r) \land a_1' = f(a_1,r) \land a_2' = t_2) \Rightarrow p(a_1', a_2')\big)$
    \If{$\mathsf{Invalid}(\chi)$}
      \If{$p \in \bisim_{\min}$}
        $(a_1, a_2, r) \gets \mathsf{Model}(\neg \chi)$;\ \Return $(\mathsf{false}, -, r)$
      \Else\ $\bisim_{\text{cand}} \gets \bisim_{\text{cand}} \setminus \{p\}$ \EndIf
    \EndIf
  \EndFor
  \State \Return $(\mathsf{true}, \bisim_{\text{cand}}, -)$
\EndFunction
\end{algorithmic}
\end{algorithm}

Algorithm~\ref{alg:find-bisim} (\textsc{FindStrongestBisimulation}) computes the strongest bisimulation invariant for a given candidate $Q$. This is the most straightforward part of the  algorithm and follows the familiar Houdini-style weakening loop~\cite{houdini}, with two modifications.
 First, instead of initializing from the full invariant universe $ U_{\bisim} $, it uses the upper bound $\bisim_{\max}$, which already excludes atoms ruled out by earlier analysis. 
Second, it enforces the symbolic lower bound $\bisim_{\min}$ throughout. Whenever $\bisim_{\min}$ is violated, we perform root cause analysis as follows: any attempt to remove an atom in $\bisim_{\min}$ indicates that no admissible invariant exists for the current filter, and the procedure returns
\textsf{false}. In addition, it returns a symbolic row $r$ as a witness to guide  subsequent repair (lines 5 and 7). 

\paragraph{Remark.}  This Houdini-style inference of the bisimulation invariant in Algorithm~\ref{alg:find-bisim} is only possible because of our careful stratification of the search: 
Without a fixed $Q$, we
would not be able to
apply such an algorithm because the constraints governing $\bisim$ change along with $Q$.

\smallskip

\begin{algorithm}[t]
\caption{\textsc{FindResidual}}
\label{alg:find-residual-final}
\begin{algorithmic}[1]
\Require Bisimulation invariant $\bisim$, residual universe $U_{P'}$, post-UDF filter $P$
\Ensure Weakest residual $P'$ such that \final holds, or failure
\Function{FindResidual}{$\bisim, U_{P'}, P$}

  \State \textcolor{blue}{\small \texttt{\# Early check: does $P'=P$ satisfy \final?}}
  \State $\chi \gets \big( \bisim(a_1, a_2) \Rightarrow [(P(a_1)\wedge P(a_2)\wedge a_1 = a_2)\ \lor\ (\neg P(a_1)\wedge \neg P(a_2))] \big)$
  \If{\textsf{Invalid}($\chi$)} \Return $(\mathsf{false}, -)$ \EndIf

  \State $W \gets \{\top\}$; \quad $\mathsf{Visited} \gets \varnothing$ \Comment{Worklist of residual candidates}

  \While{$W \neq \varnothing$}
    \State $P' \gets \textsc{Dequeue}(W)$
    \If{$P' \in \mathsf{Visited}$} \textbf{continue} \EndIf
    \State $\mathsf{Visited} \gets \mathsf{Visited} \cup \{P'\}$

    \State \textcolor{blue}{\small \texttt{\# Check \final under $\bisim$ for current $P'$}}
    \State $\chi \gets \big( \bisim(a_1, a_2) \Rightarrow [(P(a_1)\wedge P'(a_2)\wedge a_1 = a_2)\ \lor\ (\neg P(a_1)\wedge \neg P'(a_2))] \big)$
    \If{\textsf{Valid}($\chi$)} \Return $(\mathsf{true}, P')$ \EndIf
    \State \Call{HandleFailure}{$P, P', \mathsf{Model}(\neg \chi), U_{P'}, W$} \Comment{May enqueue strengthened candidates}
  \EndWhile

  \State \Return $(\mathsf{false}, -)$
\EndFunction


\Function{HandleFailure}{$P, P', (a_1, a_2), U_{P'}, W$} 
  \If{$P(a_1) \land P'(a_2) \land a_1 \neq a_2$}  \Return  \Comment{Mismatch on acceptance}
  \ElsIf{$P(a_1) \land \neg P'(a_2)$}  \Return  \Comment{False rejection}
  \ElsIf{$\neg P(a_1) \land P'(a_2)$} \Comment{Spurious acceptance}
    \ForAll{$p \in U_{P'} \setminus P'$}
      \If{$\neg p(a_2)$}  $P'' \gets P' \land p$; \ \textsf{Enqueue}($W, P''$)
      \EndIf
    \EndFor
  \EndIf
\EndFunction

\end{algorithmic}
\end{algorithm}

Next, Algorithm~\ref{alg:find-residual-final} ({\sc FindResidual}) describes how to compute the weakest residual for a given $Q$ and $\bisim$. At a high level, this algorithm iteratively strengthens the candidate residual until the \final VC becomes valid. 
As an initial optimization,  Algorithm~\ref{alg:find-residual-final} checks whether the original filter $P$ satisfies \final (lines 3--4). If not, $Q$ is too strong, and no residual can succeed, so the procedure returns \textsf{false}. 
Otherwise, the algorithm considers increasingly stronger candidates and returns the current candidate $P'$ as soon as \final becomes valid. The key part of the procedure is how it handles failures by performing root cause analysis in {\sc HandleFailure}. If $\chi$ is violated, line 13 of Algorithm~\ref{alg:find-residual-final} leverages a model $(a_1, a_2)$ of $\neg \chi$ to identify a possible repair strategy. In particular, the algorithm differentiates between three failure modes:

\begin{itemize}[leftmargin=*]
    \item \textbf{Mismatch on acceptance:} If $P(a_1) = P'(a_2) = \mathsf{true}$ and $a_1 \ne a_2$ (line 16 in {\sc HandleFailure}), the original and optimized executions disagree on which outputs they accept. No strengthening of the residual predicate can resolve this case, so the algorithm discards the current candidate.
    \item \textbf{False rejection:} If $P(a_1) = \mathsf{true}$ and $P'(a_2) = \mathsf{false}$ (line 17),  the optimized execution incorrectly rejects an output that the original accepts. As in the previous case, strengthening $P'$ cannot reverse this outcome, so the candidate is again discarded.
    \item \textbf{Spurious acceptance:} If $P(a_1) = \mathsf{false}$ and $P'(a_2) = \mathsf{true}$ (lines 18--20), the optimized execution incorrectly accepts an output that should be rejected. In this case, the algorithm strengthens $P'$ to exclude $a_2$ by conjoining each atom $p \in U_{P'} \setminus P'$ such that $p(a_2) = \mathsf{false}$, and enqueues each strengthened candidate onto the worklist $W$.
\end{itemize}

Importantly, the root cause analysis performed by {\sc HandleFailure}
further
reduces the search space of
the synthesis procedure, as not all failed residual candidates require further exploration. 

%% file: sections/eval.tex
\section{Evaluation} \label{sec:eval}

{
We have implemented our method in a new tool called \toolname, which is written in OCaml and uses the Z3 SMT solver~\cite{Z3}. \toolname leverages a custom, Python-like DSL to represent UDFs in a uniform, framework-agnostic way. This DSL serves as a realistic intermediate representation (IR) for data pipelines written in Python, Scala, or Java, enabling \toolname to support a broad range of data processing frameworks such as \texttt{pandas} and Spark.
To construct suitable predicate universes for each task, \toolname performs lightweight data-flow analyses over the UDF accumulator function~$f$, initializer~$I$, and post-UDF filter~$P$. Concretely, \toolname constructs the predicate universes for $Q$, $P'$, and invariants using a largely two-step procedure: First, it extracts a finite set of base predicates from each benchmark by taking the post-filter predicate and the UDF's control-flow conditions and normalizing them into CNF clauses. Then, it augments this base set of CNF clauses with disjunctive combinations between related atoms, where the notion of ``related'' is determined based on data dependence. Further details about our implementation, including the DSL and the predicate universe construction procedure, can be found in Appendix~\ref{sec:impl-details}.
}

In {the rest of} this section, we describe an experimental evaluation of \toolname that aims to answer the following research questions:
\begin{enumerate}
\item[\textbf{RQ1.}] How does our method compare against prior work~\cite{MagicPush}?
\item[\textbf{RQ2.}] How efficiently can \toolname{} synthesize correct predicate pushdown decompositions, and what is the complexity of the  synthesized predicates and invariants?
{
\item[\textbf{RQ3.}] What impact do pushdown optimizations have on the running time of real-world pipelines? 
}
\item[\textbf{RQ4.}] What is the impact of key design decisions on \toolname's synthesis efficiency?
\item[\textbf{RQ5.}] How sensitive is \toolname to the size of the predicate universe in terms of solution quality and synthesis time?

\end{enumerate}


\bfpara{Benchmarks.} To answer these questions, we evaluated \toolname on \numBenchmarks benchmarks, consisting of \numUDF unique UDFs and 3-9 predicates per UDF. The UDFs used in our evaluation are drawn from two sources: (i) 19 Spark UDFs taken in their entirety from prior work~\cite{ink} and (ii) 7 pandas UDFs crawled from GitHub using the same inclusion criteria as~\cite{ink}, focusing on UDFs that operate over the entire dataframe rather than row-wise maps and that involve at least one of conditional logic, collection processing, or multiple sub-aggregations. These benchmarks cover diverse domains (e.g., finance, machine learning, healthcare), and all are {stateful in the fold sense}: That is, each UDF updates an accumulator state across the input sequence, so they are not row-wise stateless map functions. The pandas UDFs are written in Python, while the Spark UDFs use Scala or Java.

\begin{wraptable}{r}{0.3\textwidth}
\vspace{-0.8em}
\centering
\footnotesize
\setlength{\tabcolsep}{3pt}
\begin{tabular}{lc}
\toprule
\textbf{Feature} & \textbf{Value} \\
\midrule
Total UDFs & 26 \\
\quad \texttt{pandas} UDFs & 7 \\
\quad Spark UDFs & 19 \\
\# w/ conditionals & 23 \\
\# w/ collections & 8 \\
\# w/ tuple accumulator & 18 \\
\makecell[l]{\# w/ cross-dependent\\[-0.6ex]\hphantom{\# }aggregations} & 8 \\
\midrule
Avg AST size & 77 \\
Max AST size & 168 \\
\bottomrule
\end{tabular}
\caption{UDF statistics}
\label{tab:benchmark-summary}
\vspace{-3.5em}
\end{wraptable}

\bfpara{Pre-processing.} Since these benchmarks span several languages, we manually translated them to \toolname's underlying DSL. This translation is largely syntax-directed and preserves the structure of the original computation. In a few cases, our translation required making some semantics-preserving  adaptations, such as replacing timestamps with integers, to align with our DSL.


\bfpara{UDF statistics.} Table~\ref{tab:benchmark-summary} summarizes key structural and syntactic features of the UDFs used in our evaluation. {The vast} majority (85\%) include conditionals, 31\% utilize collections, and 69\% use tuple-valued accumulators to track multiple pieces of state. Furthermore, 8 of the 26 UDFs (31\%) feature cross-dependent aggregations, where the computation of one aggregation depends on the intermediate results of {one or more of others}. These characteristics, along with nontrivial AST sizes (average 77 nodes), highlight the semantic complexity of the benchmarks.

\bfpara{Filters.} To evaluate \toolname's pushdown capabilities, each benchmark must be paired with a post-UDF filtering predicate. For benchmarks that already include a post-UDF filter (as is the case for most \texttt{pandas} cases), we retained the original predicate. To broaden the evaluation space, we also generated additional predicates using a schema designed to reflect common filtering idioms:
\begin{itemize}[leftmargin=*]
  \item \textbf{Scalar-valued outputs:} For UDFs that return a single scalar value~$v$, we generated predicates of the form $\{v~\mathbin{\texttt{op}}~c \mid \texttt{op} \in O, c \in C\}$, where $O = \{=, \neq, >, \geq, <, \leq\}$ and $C$ is a set of constants either drawn from the original pipeline or chosen to reflect typical threshold-style filters. For cases where $v$ is an \texttt{Optional} value, we wrapped the predicate in a \texttt{match} expression, with the \texttt{Some} branch applying the comparison and the \texttt{None} branch returning a fixed Boolean.

  \item \textbf{Tuple-valued outputs:} For UDFs that return a tuple~$a$, we constructed field-level predicates of the form $\{a[i]~\mathbin{\texttt{op}}~c \mid 0 \le i < \texttt{length}(a), \texttt{op} \in O, c \in C\}$, applying the same treatment as in the scalar case for any \texttt{Optional} fields. These serve as building blocks for more complex filters.

  \item \textbf{Compound predicates:} To capture more expressive queries, we composed field-level predicates into compound filters using conjunctions and disjunctions across multiple tuple fields. These reflect common patterns of multi-attribute filtering found in analytic workloads.
\end{itemize}
For each UDF, we generated 10--20 candidate predicates. Since our goal is to evaluate \toolname's optimization capability when pushdown is actually feasible, we retained only those predicates offering genuine optimization opportunities. We first checked pushdown feasibility on small, bounded symbolic dataframes (see Appendix~\ref{sec:bmc} for details) and kept only benchmarks where pushdown was \emph{potentially} feasible. We then manually filtered these cases to ensure that optimization was indeed possible. After this two-stage filtering, each UDF contributed between three and nine valid predicates, with an average of 5.8, yielding a total of 150 benchmarks for evaluation.

\bfpara{Experimental setup.}
All of the experiments are conducted on a machine with an Apple M3 Max chip (14-core CPU) and 36 GB of memory. We use a 10-minute timeout for each benchmark.

\subsection{Comparison with Prior Work}

In this section, we compare \toolname to \magicpush~\cite{MagicPush}, the only existing approach that performs predicate pushdown for UDFs. We first provide some necessary background on \magicpush and then present our empirical findings.

\bfpara{Background.}   \magicpush supports only exact and partial pushdown and verifies each candidate using a small‐model theorem. It performs bounded symbolic execution on a fixed ``mini‐table,'' then relies on structural preconditions to lift the proof to unbounded inputs. These preconditions require that the UDF: (1)
be associative and commutative; (2) admit a ``small‐model'' reduction, i.e., a smaller input yielding the same output; (3) be monotonic with respect to the post‐UDF predicate; and (4) be insensitive to rows removed by the pre‐filter. 
To synthesize pre-UDF filters, \magicpush enumerates disjunctions of predicate conjunctions drawn from the pipeline. Each candidate is checked for \emph{exact} pushdown; if this fails, it is re-evaluated for \emph{partial} pushdown.

\bfpara{Experimental setup.} Since \magicpush's complete implementation is not publicly available, we cannot evaluate it directly.\footnote{The implementation of \magicpush was not publicly released alongside the original paper~\cite{MagicPush}. We contacted the authors at the start of this work and received partial source code. However, essential components of their tool and benchmark data were unavailable and could not be recovered, making a direct empirical comparison infeasible.} 
Instead, we conservatively report an \emph{upper bound} on \magicpush's performance: for any benchmark that meets its structural preconditions and passes the bounded check, we assume it could perform exact or partial pushdown.  
However, this estimate is highly optimistic: since \magicpush's search space grows combinatorially, it may still fail to find a valid pushdown within practical time limits even when these conditions hold.

\begin{table}[t]
\centering
\footnotesize
\setlength{\tabcolsep}{4pt}
\begin{tabular}{lc>{\quad}cccc>{\quad}ccc}
\toprule
\multirow{2}{*}{\textbf{Type}}
  & \multirow{2}{*}{\textbf{\# Benchmarks}}
  & \multicolumn{4}{c}{\quad\toolname}
  & \multicolumn{3}{c}{
    \quad\magicpush (Upper Bound)
    }\\
  &
    & \textbf{Any} & \textbf{Exact} & \textbf{Partial} & \textbf{Split}
    & \textbf{Any} & \textbf{Exact} & \textbf{Partial} \\
\midrule

\texttt{pandas} & 27 & 27 (\textbf{100.0\%}) & 9 (\textbf{33.3\%}) & 12 (\textbf{44.4\%}) & 6 (22.2\%) & 12 (44.4\%) & 5 (18.5\%) & 7 (25.9\%) \\
Spark & 123 & 123 (\textbf{100.0\%}) & 2 (\textbf{1.6\%}) & 40 (\textbf{32.5\%}) & 81 (65.9\%) & 10 (8.1\%) & 0 (0.0\%) & 10 (8.1\%) \\
\midrule
\textbf{Overall} & 150 & 150 (\textbf{100.0\%}) & 11 (\textbf{7.3\%}) & 52 (\textbf{34.7\%}) & 87 (58.0\%) & 22 (14.7\%) & 5 (3.3\%) & 17 (11.3\%) \\
\bottomrule
\end{tabular}
\caption{Pushdown outcomes for \textsc{Pusharoo} vs. \textsc{MagicPush}'s theoretical best-case results. The ``Any'' column reports the number of benchmarks for which each tool successfully applies \emph{some} form of pushdown.
}
\label{tab:pushdown-results}
\vspace{-0.3in}
\end{table}


\bfpara{Main results.} Table~\ref{tab:pushdown-results} summarizes the evaluation results. For \toolname, we distinguish exact pushdowns (no residual check), partial pushdowns (residual identical to the original post-UDF filter), and split pushdowns (residual strictly weaker). For \magicpush, ``Exact'' and ``Partial'' count benchmarks whose structural preconditions and bounded check permit those modes; \magicpush does not support split pushdown.

Two main trends stand out in Table~\ref{tab:pushdown-results}. First, \toolname successfully applies some form of pushdown to \emph{every} benchmark in our suite, while \magicpush applies to only 22 out of 150 cases (14.7\%), underscoring \toolname's broader applicability. Second, \toolname's generalized split pushdown dominates both exact and partial modes across the board,  demonstrating that many UDFs admit nontrivial residuals that go beyond exact and partial pushdown. 

\bfpara{Optimality.}
By design, \toolname always returns the strongest pre‐filter \(Q^*\) and weakest residual \(P^*\) within our chosen predicate universe, so every solution is optimal in that sense. Since proving optimality over all conceivable predicates lies beyond our finite search space, \toolname does not provide a global optimality guarantee.
Hence, to evaluate optimality in practice, we manually examined 10 benchmarks selected by taking two or three examples from each of the four UDF categories from Table~\ref{tab:benchmark-summary}, spanning a range of UDF and post-filter complexities. For each selected benchmark, we manually constructed what we believed to be the optimal pushdown and compared it against the decomposition synthesized by Pusharoo. In all 10 cases, the synthesized result matched the manually identified optimum.
Furthermore, among the 17 cases where \magicpush can theoretically apply a partial pushdown optimization, \toolname finds a weaker (and simpler) residual than the original post-UDF predicate for 8 of these (47.1\%). These findings indicate that, even in cases where \magicpush is applicable, \toolname produces strictly better decompositions.
\medskip


\begin{mdframed}[backgroundcolor=gray!8,linewidth=0.5pt]
\textbf{RQ1 Summary:}  \toolname is more expressive than \magicpush, achieving valid pushdown transformations on \textbf{100\%} of our 150 benchmarks. In contrast, {\magicpush can feasibly be applied to} only \textbf{22/150 (14.7\%)} of these cases, leaving \textbf{85.3\%} of real-world UDFs beyond its theoretical capabilities.
\end{mdframed}

\subsection{Running Time and Predicate Complexity}

In this section, we evaluate \toolname's performance in terms of synthesis time and report on the complexity of the synthesized artifacts. The results of this evaluation are presented in  Table~\ref{tab:main-results-summary}, which groups benchmarks by solution type: exact, partial, or split. For each category, we report the number of benchmarks, average synthesis time, and average sizes of the strongest pre‐filter $Q^*$, the {weakest} residual $P^*$, the original predicate $P$, and the bisimulation invariant $\psi$.

\begin{table}[t]
\centering
\footnotesize
\begin{tabular}{l c c c c c c}
\toprule
\textbf{Type} 
  & \textbf{\# Benchmarks} 
    & \textbf{Avg Runtime (s)} 
      & $\mathbf{Avg~|Q^*|}$ 
        & $\mathbf{Avg~|P^*|}$ 
          & $\mathbf{Avg~|\psi|}$
            & $\mathbf{Avg~|P|}$\\
\midrule

Exact & 11 & 0.26 & 1.18 & N/A & 11.27 & 1.18 \\
Partial & 52 & 1.98 & 2.52 & 4.60 & 19.69 & 5.48 \\
Split & 87 & 5.30 & 2.31 & 3.41 & 36.91 & 5.09 \\
\midrule
\textbf{Overall} & 150 & 3.78 & 2.30 & 3.57 & 29.06 & 4.94 \\
\bottomrule
\end{tabular}
\caption{Breakdown of the types of pushdowns synthesized by \toolname.}
\label{tab:main-results-summary}
\vspace{-0.2in}
\end{table}

Overall, \toolname synthesizes pushdowns in {less than 4} seconds on average. The synthesized pre‐filters have an average size of {2.3} clauses, and residual checks average 3.6 clauses compared to 4.9 for the original predicates. Bisimulation invariants average {29.1} in size, indicating that proving correctness of pushdown optimizations over UDFs is often nontrivial. As expected, split decompositions incur the highest synthesis cost ({5.30}s on average) as they require finding a more complex invariant with average size {36.9}. Exact cases, which eliminate the residual entirely, complete fastest ({0.26}s on average) and have the simplest invariants ({11.3} in size). Partial cases fall in between, taking {less than 2} seconds and requiring invariants of average size {19.7}. This progression reflects the deeper semantic reasoning required as solutions move from exact to partial to split.
Appendix~\ref{sec:case-studies} presents two case studies to  illustrate the kind of pushdowns enabled by \toolname.

\medskip
\begin{mdframed}[backgroundcolor=gray!8, linewidth=0.5pt]
\textbf{RQ2 Summary:}  
\toolname  takes  a median of {1.60}s (average {3.78}s) to synthesize optimal pushdown optimizations. The generated bisimulation invariants have an average size of {29.1}, reflecting the complexity of the relational reasoning required. Finally, on 87 of 150 benchmarks, \toolname produces split solutions that go beyond  exact and partial pushdown.
\end{mdframed}

\subsection{End-To-End Performance}
\label{sec:end-to-end}

Beyond synthesis coverage and efficiency, we also measured the run-time impact of our pushdown optimizations on the 150 \texttt{pandas} and Spark pipelines by comparing end-to-end runtimes before and after optimization. Following empirical benchmarking practices common in database research~\cite{dsb21,tpcds02,leis15}, we generated 100 million input rows for each pipeline using mixed Zipfian and uniform distributions chosen to mirror realistic domain semantics. 
As shown in Table~\ref{tab:end-to-end}, \toolname reduced execution time by 58.8\% on average (median 61.4\%), with some pipelines achieving improvements of up to 99.2\%. These savings correspond to an average speedup of 2.4$\times$, with extreme cases reaching two orders of magnitude.
{As expected, exact pushdown yields the largest speedup on average by completely eliminating the post-UDF filter $P$, followed by split pushdown, which weakens $P$, and finally partial pushdown, which preserves it.}


\begin{wraptable}{r}{0.35\textwidth}
\vspace{-2.3em}
\centering
\footnotesize
\setlength{\tabcolsep}{3pt}
\begin{tabular}{lcccc}
\toprule
\textbf{Type} & \textbf{Min} & \textbf{Avg} & \textbf{Median} & \textbf{Max} \\
\midrule
Exact   & 36.9\% & 70.1\% & 80.8\% & 91.4\% \\
Partial & 24.2\% & 56.1\% & 54.7\% & 84.2\% \\
Split   & 25.1\% & 59.1\% & 61.8\% & 99.2\% \\
\midrule
Overall & 24.2\% & 58.8\% & 61.4\% & 99.2\% \\
\bottomrule
\end{tabular}
\caption{Pipeline runtime reduction.}
\label{tab:end-to-end}
\vspace{-3.3em}
\end{wraptable}
\bfpara{Split vs.\ partial pushdown.}
To quantify the benefit of synthesizing a weakened residual over retaining the original post-UDF filter, we compared split pushdown against partial pushdown on the 87 benchmarks that admit split pushdown. Across these, split pushdown is on average 5.1\% faster, up to 14\%.
The gains are most pronounced for UDFs returning complex types (e.g., large tuples or collections), where redundant predicate evaluation is more expensive.

\medskip

\begin{mdframed}[backgroundcolor=gray!8, linewidth=0.5pt]
\textbf{RQ3 Summary:} \toolname optimizations yield major end-to-end runtime savings, speeding up execution by $2.4\times$ on average and up to two orders of magnitude in some cases.
\end{mdframed}

\subsection{Ablation Studies}

To evaluate the significance of key algorithmic choices in \toolname{}, we conducted ablation studies by disabling or replacing essential components individually. Specifically, we consider:
\begin{itemize}[leftmargin=*]


\item \ablnb, which does not infer bounds on the bisimulation invariant. 


\item \ablnr, which disables all root cause analyses and  targeted repairs. 

\item \abltp, which first generates candidate filters $Q, P'$ and \emph{then} verifies their correctness by attempting to find a bisimulation invariant. 

\item \ablenum, which enumerates $Q$ candidates from strongest to weakest, performing an independent full solve for each without sharing invariant reasoning across candidates.





\item \ablspacer and \ableld, which    replace lines 4--10 of \textsc{SynthesizeOptimalPushdown} (Algorithm~\ref{alg:synthesize}) with a call to \textsc{Spacer}~\cite{Spacer2} and \textsc{Eldarica}~\cite{Eldarica} (two of the best-performing CHC solvers), respectively, to infer both $\psi$ and $P'$.

\end{itemize}

We note that the \ablspacer and \ableld ablations use an off-the-shelf CHC solver to infer \emph{only} the bisimulation invariant~$\bisim$ and residual $P'$, but not the pre-UDF filter $Q$.
By instantiating $Q$ before invoking the solver, all VCs remain in strict Horn form (i.e., each clause has at most one positive head literal) so \textsc{Spacer} and \textsc{Eldarica} can, in principle, solve the resulting Horn clauses. If $Q$ is instead treated as an unknown, the Horn-clause restriction is violated, causing CHC solvers to fail outright.
Additionally, we note that \ablspacer and \ableld do not have optimality guarantees and may not find the weakest residual.

\begin{figure}[t]
\centering
\captionsetup{skip=2pt}

\begin{minipage}[b]{0.47\textwidth}
  \centering
  \vspace{0pt} 
  \includegraphics[scale=0.31]{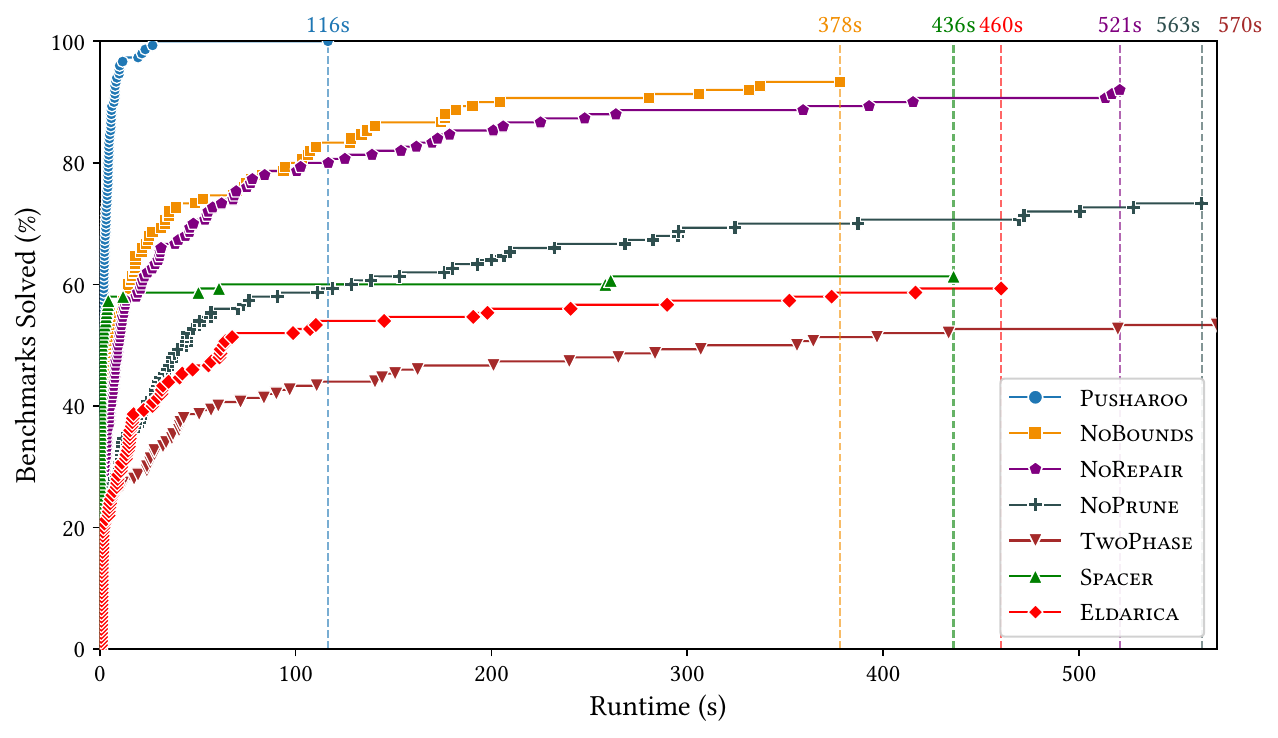}
  \captionof{figure}{CDF plot of \% of benchmarks solved over solving time for \toolname{} and ablations.}
  \label{fig:cdf}
\end{minipage}
\hfill
\begin{minipage}[b]{0.52\textwidth}
  \centering
  \vspace{0pt} 
  \scriptsize
  \begin{tabular}{lccc}
    \toprule
    \textbf{Type} & \textsc{NoBounds} & \textsc{NoRepair} & \textsc{TwoPhase} \\
    \midrule
Exact   & 33.3 / 36.4 / 50.0 & 33.3 / 37.9 / 50.0 & 0.0 / 6.1 / 66.7 \\
Partial & 52.6 / 51.6 / 94.4 & 50.0 / 57.3 / 99.2 & 85.3 / 59.1 / 99.4 \\
Split   & 40.3 / 38.3 / 92.4 & 70.0 / 64.7 / 95.1 & 80.9 / 49.2 / 99.5 \\
\midrule
Overall & 41.1 / 42.8 / 94.4 & 61.7 / 60.1 / 99.2 & 64.8 / 46.8 / 99.5 \\
    \bottomrule
  \end{tabular}
  \vspace{4.5em}
  \captionsetup{skip=-0.1pt}
  \captionof{table}{Search space \% reduction (median / average / max) enabled by \toolname{}.}
  \label{tab:space-reduction}
\end{minipage}
\end{figure}


\bfpara{Solve rate and running time comparison.} To quantify the impact of these ablations on solve rate and running time, we present a \textit{{cumulative distribution function (CDF)} plot} in Figure~\ref{fig:cdf} that shows the percentage of benchmarks solved ($y$-axis) against per-benchmark solving time ($x$-axis). 


Figure~\ref{fig:cdf} shows that \toolname significantly outperforms all ablations. Among all variants, \abltp performs the worst, highlighting the importance of  the tight coupling between synthesis (i.e., finding pushdown predicates) and verification (i.e., finding a bisimulation to prove correctness of optimization). Even the best performing ablation, namely \ablnb, solves fewer benchmarks and takes $9.6\times$ as long to solve the benchmarks that can be solved by both \toolname and \ablnb.



\begin{wraptable}{r}{0.28\textwidth}
\vspace{0.3em}
\centering
\footnotesize
\setlength{\tabcolsep}{3pt}
\begin{tabular}{lcc}
  \toprule
  \textbf{Variant} & \textbf{Solved} & \textbf{Subopt} \\
  \midrule
  \toolname           & 150 & 0  \\
\textsc{Spacer} & 92 & 61 \\
\textsc{Eldarica} & 89 & 51 \\
  \bottomrule
\end{tabular}
\caption{Solution optimality.}
\label{tab:synth-opt}
\vspace{-3.5em}
\end{wraptable}

\bfpara{Optimality comparison.}  In addition to performing significantly worse than \toolname, the CHC ablations also do not come with optimality guarantees. To quantify this aspect, Table~\ref{tab:synth-opt} additionally explores how often the CHC ablations yield a suboptimal decomposition compared to \toolname.  Overall, \ablspacer produces suboptimal residuals on 61 benchmarks (66\% of what it solves), and \ableld on 51 benchmarks (57\% of what it solves). These results demonstrate that replacing our core synthesis procedure with off-the-shelf CHC solvers not only causes many {failures and timeouts}, but also leads to suboptimal solutions for the majority of {the solved} benchmarks.

\bfpara{Search space reduction.} While Figure~\ref{fig:cdf} shows that our three algorithmic innovations (namely, structured search, unrealizability certificates, and targeted repair) have a significant impact on running time, we further directly measure the search space reduction that these ideas enable. As shown in Table~\ref{tab:space-reduction},  all of our algorithmic
contributions
result in a significant reduction in search space, with averages ranging from 43 to 60\% across the three ablations.

\medskip
\begin{mdframed}[backgroundcolor=gray!8, linewidth=0.5pt]
\textbf{RQ4 Summary:}
All of our core algorithmic ideas have a significant impact on synthesis time and success rate. Furthermore, the ablations that replace parts of the algorithm with off-the-shelf CHC solvers  are also significantly worse and yield suboptimal solutions.
\end{mdframed}

\subsection{Sensitivity to Predicate Universe Size}

\begin{wraptable}{r}{0.32\textwidth}
\vspace{-2.8em}
\centering
\footnotesize
\setlength{\tabcolsep}{3pt}
\begin{tabular}{llcc}
\toprule
\textbf{Enlarged} & \textbf{By} & \makecell[c]{\textbf{\% Better}\\[-0.3ex]\textbf{Solns}} & \makecell[c]{\textbf{Avg $\Delta$}\\[-0.3ex]\textbf{Time}} \\
\midrule
$U_Q$    & 10\% & 0\% & +2.0\% \\
         & 20\% & 0\% & +1.6\% \\
         & 30\% & 0\% & +3.5\% \\
\midrule
$U_\psi$ & 10\% & 0\% & +9.7\% \\
         & 20\% & 0\% & +21.4\% \\
         & 30\% & 0\% & +29.2\% \\
\midrule
Both     & 10\% & 0\% & +10.6\% \\
         & 20\% & 0\% & +21.2\% \\
         & 30\% & 0\% & +30.8\% \\
\bottomrule
\end{tabular}
\caption{Sensitivity analysis.}
\label{tab:sensitivity}
\vspace{-2.5em}
\end{wraptable}

To evaluate universe-size sensitivity, we systematically enlarge $U_Q$ and $U_\psi$ with new disjunctions of atoms while leaving $U_{P'}$ untouched: since the goal of weakening $P$ is to simplify the post-UDF check by \emph{dropping} conjuncts, adding disjunctions there does not make sense from the perspective of reducing post-UDF work.
Table~\ref{tab:sensitivity} shows the results. Adding predicates does not improve solution quality on any benchmark, suggesting our universe-construction heuristics already effectively capture relevant predicates. Furthermore, although worst-case complexity is exponential in $|U_Q|$, synthesis time grows modestly, indicating that our algorithmic optimizations are effective in practice.

\medskip
\begin{mdframed}[backgroundcolor=gray!8, linewidth=0.5pt]
\textbf{RQ5 Summary:}
Enlarging the predicate universes  does not improve solution quality, and synthesis time grows modestly rather than exhibiting the exponential blow-up suggested by worst-case complexity.
\end{mdframed}

%% file: sections/related.tex
    \section{Related Work} \label{sec:related}

\bfpara{Predicate pushdown.}
Predicate pushdown is a widely used optimization for improving query performance, supported by nearly all modern data systems including PostgreSQL, MySQL, SQL Server, Oracle, Spark, and Flink~\cite{postgresql,mysql,sql-server,oracle,spark,flink}.  
Prior research primarily targets relational operators, built-in aggregations, or semi-structured data formats~\cite{Larson86IVM,DIP:journals/pvldb/OrrKC19,plaque:pacmmod/LinM24,FlexPushdownDB:journals/pvldb/YangYWLYSAS21,SIA:conf/sigmod/0010ANHW21,sparser,AutomaticMR:pvldb/JahaniCR11}, but remains ineffective for most UDFs.  
To our knowledge, \magicpush~\cite{MagicPush} is the only prior system that performs pushdown through UDFs; however,  it relies on restrictive assumptions that fail for the vast majority of the real-world benchmarks from  Section~\ref{sec:eval}. Additionally, \magicpush can only perform exact and partial pushdown and cannot infer weaker nontrivial residuals.

\bfpara{Optimizing UDFs.}
Many systems attempt to expose UDF semantics to query optimizers by translating imperative code into relational form.  Representative approaches use program synthesis (\textsc{Qbs}~\cite{QBS:conf/pldi/CheungSM13}), static analysis (EqSQL~\cite{EqSQL:sigmod/EmaniRBS16}, Aggify~\cite{Aggify:sigmod/GuptaP020}), quey rewrites~\cite{Decorrelation:icde/Simhadri0CG014}, and compiler-level inlining (Froid~\cite{Froid:pvldb/RamachandraPEHG17}).
Recent work extends these ideas to Spark SQL and RDDs~\cite{zhang2021udf,zhang2023automated}, while \textsc{Dias}~\cite{Dias:journals/pacmmod/BaziotisKM24} applies lightweight rewrites to \texttt{pandas} workloads involving row-wise UDFs.
Other directions include fusing multiple UDFs to avoid redundant computation~\cite{consolidate} and verifying homomorphism properties of aggregations to synthesize merge operators for incremental and parallel execution~\cite{ink}.


\bfpara{Optimal synthesis.}
Our method relates to optimal synthesis techniques that produce correct solutions under logical or quantitative criteria, including \textsc{OptPCSat}~\cite{OptPCSat} for CHC solving, multi-abduction~\cite{multiabduction} for weakest specification inference, \textsc{loud}~\cite{loud} for extremal over and under approximations, and systems such as \textsc{Synapse}~\cite{synapse} and abstract-interpretation-guided approaches~\cite{mell24} that optimize    cost metrics. However, these approaches do not address our setting, which requires synthesizing a pair of predicates with opposing optimality objectives (strongest $Q$ and weakest $P'$).

 \bfpara{Relational verification.}
This paper is related to a long line of work on \emph{relational verification} which aims to reason about the relationship between multiple programs or different executions of the same program. Existing relational verification techniques include relational program logics~\cite{benton04,yang07,sousa16,aguirre17}, product programs~\cite{barthe11},  and methods for regression verification~\cite{lahiri12,regress09,felsing14}
In the database domain, several systems target SQL query equivalence~\cite{qed,chu17,chu18,spes}, and   \textsc{Mediator}~\cite{Mediator} addresses equivalence of database‐driven applications under schema changes. In contrast to prior work, our goal is to simultaneously synthesize and verify the correctness of pushdown optimizations.

%% file: sections/concl.tex
\section{Conclusion} \label{sec:concl}
This paper develops a unified semantic framework for pushdown optimizations, which aim to reduce the amount of data a computation must process. Our framework generalizes and extends prior exact and partial pushdown transformations, and enables them on a broader class of computations, including those with complex internal state. Building on this foundation, we introduce a bisimulation-based verification method and a synthesis algorithm that constructs optimal pushdown transformations together with correctness proofs. The synthesis procedure solves a challenging second-order constraint solving problem via structured decomposition (to guarantee optimality), symbolic bounds on the bisimulation invariant (to prove unrealizability), and
root-cause-guided predicate repair (for goal-directed reasoning).
Our implementation, \toolname, applies these techniques to
real-world \texttt{pandas} and Spark pipelines, synthesizing provably correct pushdown transformations in seconds and significantly outperforming the prior state-of-the-art, leading to substantial end-to-end improvements of up to two orders of magnitude.

%% file: sections/availability.tex
\section*{Data-Availability Statement}
The software artifact accompanying this paper, comprising the implementation of \toolname, the benchmark suite of 150 \texttt{pandas} and Spark pipelines, and scripts to reproduce the evaluation results in Section~\ref{sec:eval}, can be found on Zenodo~\cite{pusharoo_artifact} and GitHub~\cite{pusharoo_github}.

%% file: sections/appendix-proofs.tex
\section{Proofs of Theorems} \label{sec:proofs}

This appendix contains proofs of all theorems from the paper.

\subsection*{Proof of Theorem~\ref{thm:exact-partial-cases}}

Let
\[
F(x) = \mathsf{fold}(x, I, f),
\qquad
y = F(x),
\qquad
z = F(\filter_Q(x)),
\]
and recall from Section~\ref{sec:problem} that the \emph{generalized pushdown condition} is:
\[
\forall x.\quad
\Lift(P,y) = \Lift(P',z),
\qquad\text{where}\quad
\Lift(P,a) =
\begin{cases}
a & P(a)\text{ holds},\\
\bot & \text{otherwise}.
\end{cases}
\]

\begin{enumerate}[leftmargin=*]
  \item \textbf{Exact pushdown $\iff$ $(Q,P'\equiv\true)$ generalized pushdown.}
    \begin{enumerate}[label=(\roman*),leftmargin=*,align=left,widest=ii]
      \item \emph{($\Rightarrow$)} Assume $Q$ satisfies exact pushdown, i.e., for all $x$,
      \[
        (P(y) \Rightarrow y = z) \land (\neg P(y) \Rightarrow z = \bot).
      \]
      Set $P' \equiv \true$, so $\Lift(P',z) = z$.
      \begin{itemize}[leftmargin=*]
        \item If $P(y)$, then $y = z$, so $\Lift(P, y) = y = z = \Lift(P', z)$.
        
        \item If $\neg P(y)$, then $z = \bot$, so $\Lift(P, y) = \bot = \Lift(P', z)$.
      \end{itemize}
      Therefore, $\Lift(P,y) = \Lift(P',z)$ for all $x$.

      \item \emph{($\Leftarrow$)} Assume $(Q,\true)$ satisfies the generalized pushdown condition, i.e., $\Lift(P,y) = \Lift(\true,z) = z$ for all $x$. Then:
      \begin{itemize}[leftmargin=*]
        \item If $P(y)$, then $\Lift(P, y) = y = z$.

        \item If $\neg P(y)$, then $\Lift(P, y) = \bot = z$.
      \end{itemize}
      Hence, the exact pushdown condition holds.
    \end{enumerate}

  \item \textbf{Partial pushdown $\iff$ $(Q,P'\equiv P)$ generalized pushdown.}
    \begin{enumerate}[label=(\roman*),leftmargin=*,align=left,widest=ii]
      \item \emph{($\Rightarrow$)} Assume $Q$ satisfies partial pushdown, i.e., for all $x$,
      \[
        (P(y) \Rightarrow P(z)) \land (P(z) \Rightarrow y = z).
      \]
      Set $P' \equiv P$, so $\Lift(P',z) = \Lift(P,z)$.
      \begin{itemize}[leftmargin=*]
        \item If $P(z)$, then $y = z$, so $\Lift(P, y) = \Lift(P, z) = z = \Lift(P, z) = \Lift(P', z)$.

        \item If $\neg P(z)$, then $\neg P(y)$, so $\Lift(P, y) = \bot = \Lift(P, z) = \Lift(P', z)$.
      \end{itemize}
      Therefore, $\Lift(P,y) = \Lift(P',z)$ for all $x$.

      \item \emph{($\Leftarrow$)} Assume $(Q,P)$ satisfies the generalized pushdown condition, i.e., $\Lift(P,y) = \Lift(P,z)$ for all $x$. Then:
      \begin{itemize}[leftmargin=*]
        \item If $P(z)$, then $\Lift(P, y) = \Lift(P, z) = z \neq \bot$, so $y = z$.
        \item If $\neg P(z)$, then $\Lift(P, y) = \bot$, so $\neg P(y)$.
      \end{itemize}
      Hence, the partial pushdown condition holds.
    \end{enumerate}
\end{enumerate} \qed

\subsection*{Proof of Theorem~\ref{thm:sound-complete} (Soundness and Relative Completeness)}

We reduce correctness of the pushdown transformation to a Hoare‐style proof of a \emph{product program} that simultaneously simulates the original and filtered executions. Define the product program:
\[
\begin{array}{l}
a_1 := I; \quad a_2 := I;\\
\textbf{for each }r \textbf{ in } x\ \{\\
\quad\textbf{if }Q(r)\textbf{ then } \{a_1 := f(a_1,r);\; a_2 := f(a_2,r)\} \\
\quad\textbf{else } \{a_1 := f(a_1,r)\} \\
\}\\
\textbf{assert }\mathsf{Lift}(P,a_1)=\mathsf{Lift}(P',a_2)
\end{array}
\]

We show that our four verification conditions (VCs) correspond exactly to the standard Hoare‐logic proof obligations for establishing the partial‐correctness triple
\[
\{\,a_1=I \wedge a_2=I\,\}\;\;\text{product program}\;\;\{\,\Lift(P,a_1)=\Lift(P',a_2)\,\}.
\]

\paragraph{1. Initialization}  
This is the Hoare precondition: from $a_1=I$ and $a_2=I$, the invariant $\psi(a_1,a_2)$ must hold before the loop.

\paragraph{2. Synchronized‐step preservation}  
Corresponds to the Hoare rule for the ``then'' branch when $Q(r)$ is true.  We require
\[
\{\,\psi(a_1,a_2)\,\}\;a_1 := f(a_1,r);\;a_2 := f(a_2,r)\;\{\,\psi(a_1,a_2)\,\},
\]
which is exactly our synchronized‐step condition.

\paragraph{3. Stutter‐step preservation}
Corresponds to the Hoare rule for the ``else'' branch when $Q(r)$ is false.  We require
\[
\{\,\psi(a_1,a_2)\,\}\;a_1 := f(a_1,r)\;\{\,\psi(a_1,a_2)\,\},
\]
with $a_2$ unchanged, matching our stutter‐step condition.

\paragraph{4. Final‐state agreement}
Corresponds to the Hoare postcondition.  From $\psi(a_1,a_2)$ after the loop, we must show
\[
\Lift(P,a_1)=\Lift(P',a_2),
\]
which is our final‐state agreement condition.

\medskip
By the \emph{soundness} of Hoare logic, if one discharges these four proof obligations, then the asserted triple is valid, and hence
$\forall x.\;\Lift(P,F(x))=\Lift(P',F(\filter_Q(x)))$.

\bigskip\noindent
For \emph{relative completeness}, we appeal to Cook’s theorem~\cite{cook78} for Hoare logic: any valid partial‐correctness triple over a sufficiently expressive assertion logic admits a Hoare‐logic proof using some loop invariant.  Here, the loop invariant is exactly a bisimulation invariant $\psi$.  Hence, whenever
$\forall x.\;\Lift(P,F(x))\allowbreak=\Lift(P',F(\filter_Q(x)))$
holds semantically, there must exist an invariant $\psi$ that satisfies the four VCs above. \qed

\subsection*{Proof of Theorem~\ref{thm:optimal-synth-guarantees} (Guarantees of \textsc{SynthesizeOptimalPushdown})}

We begin by showing that worklist $W$ strictly decreases through the while loop in \textsc{SynthesizeOptimalPushdown}. We refer to candidate input filters in a worklist with the greatest number of atoms as the \emph{largest}. Formally, let the size of the largest filter(s) in a worklist be $\sigma(W) = \max_{Q\in W}\,\lvert Q \rvert$ and the number of largest filters in a worklist be $\kappa(W) = \sum_{Q \in W} \mathds{1}_{\{Q' \mid \sigma(W) = \lvert Q' \rvert\}}(Q)$, where $\mathds{1}$ is an indicator function that returns $1$ if the argument filter is in the set in the subscript and $0$ otherwise.
We then define a standard lexicographic order on worklists: for worklists $W_a$ and $W_b$,
\[
  W_a \prec W_b
  \;\Longleftrightarrow\;
  \sigma(W_a) < \sigma(W_b) \lor
  (\sigma(W_a) = \sigma(W_b) \land \kappa(W_a) < \kappa(W_b)).
\]
$\preceq$ is defined similarly. We prove that the worklist strictly decreases through the while loop according to $\prec$ via the following lemma:

\begin{lemma}[Worklist Strictly Decreases]\label{lem:synth-worklist-decreases}
  Let $W_k$ be the worklist at the beginning of the $k$-th iteration of the while loop. For all $k \ge 0$, $W_{k+1} \prec W_k$.
\end{lemma}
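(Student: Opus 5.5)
The plan is a case analysis on a single iteration of the while loop. Fix an iteration $k$ that does not terminate the procedure; if it returns $(Q,P')$ or the loop exits, there is no $W_{k+1}$ and the statement holds vacuously. I will show two things: every filter enqueued during the iteration has strictly fewer atoms than the dequeued filter $Q_0$, and $Q_0$ is one of the largest filters in $W_k$. The ordering $\prec$ then gives the result directly.

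The key steps, in order, are as follows.
\begin{enumerate}
\item \emph{Dequeue policy.} I will read \textsc{Dequeue} as returning a filter of maximal size, i.e., one with $|Q_0| = \sigma(W_k)$. This is the ``decreasing order of logical strength'' exploration described in Section~\ref{sec:top-level}, and I will state it explicitly as the assumed policy.
\item \emph{\textsc{WeakenViaBounds} only shrinks.} The filter $Q$ it returns satisfies $Q \subseteq Q_0$. This follows by induction over its loop, because its only modification is $Q \gets \textsc{Repair}(Q,\cdot)$, and \textsc{Repair} always returns a subset of its argument.
\item \emph{\textsc{Repair} strictly shrinks when given a genuine diagnosis.}
\begin{itemize}
\item For $\mathsf{Sync}(r)$, the witness model satisfies $Q(r)$. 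Hence every atom of $Q$ is true on $r$, so $\{u \in Q \mid \neg u(r)\} = \varnothing$, which is strictly smaller than $Q$ because $Q \neq \varnothing$.
\item For $\mathsf{Stutter}(r)$, the witness satisfies $\neg Q(r)$. Hence some atom of $Q$ is false on $r$ and is removed.
\end{itemize}
I need $Q \neq \varnothing$ in the $\mathsf{Sync}$ case; this holds because \textsc{WeakenViaBounds} rejects $Q=\varnothing$ on line~4.
\item \emph{Enqueued filters are strictly smaller than $Q_0$.}
\begin{itemize}
\item If the iteration reaches line~6, then $Q \subseteq Q_0$ and $Q \neq \varnothing$.
\item The repaired $Q'$ satisfies $|Q'| < |Q| \le |Q_0|$.
\item Each one-step weakening $Q\setminus\{q\}$ has size $|Q|-1 < |Q_0|$.
\item If $\neg\mathsf{ok}$ at line~5, nothing is enqueued.
\end{itemize}
\item \emph{Conclude via $\prec$.}
\begin{itemize}
\item If $\kappa(W_k) > 1$, removing $Q_0$ and adding only smaller filters leaves $\sigma$ unchanged and lowers $\kappa$ by one.
\item If $\kappa(W_k) = 1$, then $\sigma(W_{k+1}) < \sigma(W_k)$.
\end{itemize}
For this case I adopt the convention $\sigma(\varnothing) = -1$. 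This keeps a worklist containing only the empty filter (where $\sigma = 0$) strictly above the empty worklist.
\end{enumerate}

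The main obstacle is step 3, the strictness of \textsc{Repair}. It depends on the diagnosis witnesses actually satisfying $Q(r)$, respectively $\neg Q(r)$. I will justify this from the form of the negated formulas: $\neg\chi_{\mathsf{sync}}$ contains $Q(r)$ as a conjunct, and $\neg\chi_{\mathsf{stutter}}$ contains $\neg Q(r)$. The same holds for the $\chi$ formulas inside \textsc{WeakenViaVC}, where the guard $G(r)$ appears in the antecedent. A secondary point is step 1: if the paper's \textsc{Dequeue} were arbitrary, the lemma could fail, so I will make the maximal-size dequeue assumption explicit.
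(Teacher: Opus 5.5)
Your proof is correct and follows essentially the same route as the paper's. Both arguments take a max-heap dequeue, use $Q \subseteq Q_0$ from \textsc{WeakenViaBounds}, show every enqueued filter is strictly smaller than $Q_0$, and finish with the $\sigma/\kappa$ case split on whether $Q_0$ is the sole largest filter. The one addition in your proposal is the justification that \textsc{Repair} strictly shrinks: the witness satisfies $Q(r)$ (resp.\ $\neg Q(r)$), and $Q \neq \varnothing$ after line~4 of \textsc{WeakenViaBounds}. The paper simply asserts this strictness ``by definition,'' so your argument fills a step it leaves implicit.
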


\begin{proof}
  For any $k$, let $Q_k = \textsc{Dequeue}(W_k)$ and $W_k'$ be the remainder of $W_k$. By Theorem~\ref{thm:weaken-correct}, \textsc{WeakenViaBounds} always terminates. If \textsc{WeakenViaBounds}$(Q_k, P, U_\psi)$ returns $(\mathsf{false}, -, -, -)$, then the iteration ends. Since the worklist is a max-heap, $Q_k$ is the largest in $W_k$, i.e., $|Q_k| = \sigma(W_k)$. Suppose $Q_k$ is the sole largest, then $\sigma(W_{k+1}) = \sigma(W_k') < \sigma(W_k)$. Otherwise, we are left with one less largest filter, i.e., $\kappa(W_{k+1}) = \kappa(W_k) - 1$. In either case, we have $W_{k+1} \prec W_k$.
  
  Otherwise, \textsc{WeakenViaBounds} returns $(\mathsf{true}, Q_k', \psi_{\min}, \psi_{\max})$, where, by Theorem~\ref{thm:weaken-correct}, $Q_k' \subseteq Q_k$. The loop iteration then proceeds to execute \textsc{FindStrongestBisimulation},
  which, by Theorem~\ref{thm:find-strongest-bisim-correct}, always terminates. If \textsc{FindStrongestBisimulation}$(Q_k', \psi_{\min}, \psi_{\max})$ returns $(\mathsf{false}, -, \mathsf{diagnosis})$, then \textsc{Repair}$(Q_k', \mathsf{diagnosis})$, by definition, proceeds to terminate and produce $Q_k'' \subset Q_k'$. Thus, $Q_k'' \subset Q_k$. Since only $Q_k''$ and all one-atom weakenings of $Q_k'$ are enqueued, we have $W_{k+1} \prec W_k$ regardless of whether $Q_k$ is the sole largest in $W_k$. A similar reasoning applies when \textsc{FindStrongestBisimulation} returns $(\mathsf{true}, \psi, -)$ and \textsc{FindResidual}$(\psi, U_{P'}, P)$ returns $(\mathsf{false}, -)$, the only difference being that no repair of $Q_k'$ is enqueued.

  Now, the only case left is when \textsc{FindResidual} returns $(\mathsf{true}, P')$, at which point the loop ends.
\end{proof}

We now prove the three guarantees of Theorem~\ref{thm:optimal-synth-guarantees}.

\begin{enumerate}[leftmargin=*]
  \item \textbf{Termination.} Immediate from the finitude of $U_Q$ and Lemma~\ref{lem:synth-worklist-decreases}.

  \item \textbf{Soundness.} The algorithm returns $(Q^*,P^*)$ only at line~10 of Algorithm~\ref{alg:synthesize}. At this return point:
  \textsc{WeakenViaBounds} returned $(\mathsf{true}, Q^*, \psi_{\min}, \psi_{\max})$ with $\psi_{\max}$ satisfying \init (Theorem~\ref{thm:weaken-correct});
  \textsc{FindStrongestBisimulation} returned $(\mathsf{true}, \psi, -)$, where $\psi$ is the strongest invariant between $\psi_{\min}$ and $\psi_{\max}$ satisfying \sync and \stutter under $Q^*$ (Theorem~\ref{thm:find-strongest-bisim-correct}), and since $\psi \subseteq \psi_{\max}$, $\psi$ also satisfies \init;
  \textsc{FindResidual} returned $(\mathsf{true}, P^*)$ with $P^*$ satisfying \final under $\psi$ (Theorem~\ref{thm:residual-correct}).
  Thus, $\psi$ is a certification witness for $(Q^*,P^*)$ per Definition~\ref{def:cert-witness}, and by Theorem~\ref{thm:sound-complete} (soundness), $(Q^*,P^*)$ satisfies Equation~\eqref{eq:def-pushdown}.

    \item \textbf{Optimality within the search space.} We prove (a) and (b) separately.

  \medskip\noindent\emph{Part~(b): Minimal $P'$ given $Q^*$.}
  At the return point, $\psi$ is the strongest invariant between $\psi_{\min}$ and $\psi_{\max}$ satisfying \sync and \stutter under $Q^*$ (Theorem~\ref{thm:find-strongest-bisim-correct}), and $P^*$ is the weakest residual in $U_{P'}$ satisfying \final under $\psi$ (Theorem~\ref{thm:residual-correct}). Now consider any certification witness $\psi'$ for $(Q^*, P')$ with $P' \subseteq U_{P'}$. For $\psi'$ to certify any $P'$ in conjunction with $Q^*$, it must also certify $(Q^*, P)$: otherwise, it would mean that $Q^*$ is too strong for any residual to establish correctness. Hence, $\psi'$ is a valid invariant under $Q^*$ and $P$. By Theorem~\ref{thm:weaken-correct}, $\psi_{\min} \subseteq \psi' \subseteq \psi_{\max}$. Since $\psi'$ also satisfies \sync and \stutter under $Q^*$, and $\psi$ is the strongest such invariant in this range, $\psi' \subseteq \psi$. Consequently, any $P'$ satisfying \final under $\psi'$ also satisfies \final under $\psi$, and since $P^*$ is the weakest such residual, $P'$ cannot be strictly weaker than $P^*$.

  \medskip\noindent\emph{Part~(a): Maximal $Q$.}
  Define property $\mathcal{P}(W_m)$:
  \begin{quote}
    For every state with $W \preceq W_m$ at the beginning of the loop iteration, if there exist $Q^{\supsharp}$ and $Q$ such that $Q^{\supsharp} \subseteq Q \in W$ and some $\psi \subseteq U_\psi$ is a certification witness for $(Q^{\supsharp}, P)$, then \textsc{SynthesizeOptimalPushdown} returns some $(Q^*, P^*)$ with $Q^{\supsharp} \subseteq Q^*$.
  \end{quote}
  We prove $\mathcal{P}(W)$ for all worklists $W$ by well-founded induction.

  \paragraph{Base case.} In this case, $W = \varnothing$, so $\mathcal{P}(W) = \mathcal{P}(\varnothing)$ vacuously holds.

  \paragraph{Inductive step.} Assume $\mathcal{P}(W_n)$ for all $\varnothing \preceq W_n \prec W_k$ and consider a state with $W_k$ at the beginning of the loop iteration. Let $Q_k = \textsc{Dequeue}(W_k)$ and $W_k'$ be the remainder of $W_k$. There are two cases:

  \begin{enumerate}[leftmargin=*]
    \item \emph{There exists $Q_k^{\supsharp} \subseteq Q_k$ such that some $\psi \subseteq U_\psi$ is a certification witness for $(Q_k^{\supsharp}, P)$.}
      Suppose \textsc{WeakenViaBounds}$(Q_k, P, U_\psi)$ returns $(\mathsf{false}, -, -, -)$. By Theorem~\ref{thm:weaken-correct}, this means $Q_k = \varnothing$ or all atoms in $Q_k$ violate \sync or \stutter under any $\psi \subseteq U_\psi$, either of which contradicts the assumption. Thus, \textsc{WeakenViaBounds} has to return $(\mathsf{true}, Q_k', \psi_{\min}, \psi_{\max})$. By Theorem~\ref{thm:weaken-correct}, we know $Q_k' \subseteq Q_k$, all atoms in $Q_k \setminus Q_k'$ violate \sync or \stutter under any $\psi \subseteq U_\psi$, $\psi_{\max}$ satisfies \init, and $\psi_{\min} \subseteq \psi \subseteq \psi_{\max}$ for any $\psi \subseteq U_\psi$ certifying $(Q_k', P)$. Clearly, atoms in $Q_k \setminus Q_k'$ cannot occur in $Q_k^{\supsharp}$, so we know $Q_k^{\supsharp} \subseteq Q_k' = Q_k \setminus (Q_k \setminus Q_k')$.

      \vspace{0.4em}
      Next, suppose \textsc{FindStrongestBisimulation}$(Q_k', \psi_{\min}, \psi_{\max})$ proceeds to return\\$(\mathsf{false}, -, \mathsf{diagnosis})$. By Theorem~\ref{thm:find-strongest-bisim-correct}, this means that $Q_k'$ violates \sync or \stutter under any invariant between $\psi_{\min}$ and $\psi_{\max}$, with \textsf{diagnosis} as a witness. By definition, \textsc{Repair}$(Q_k', \mathsf{diagnosis})$ proceeds to produce $Q_k'' \subset Q_k'$, where $Q_k' \setminus Q_k''$ are the offending atoms. Since $\psi_{\min} \subseteq \psi \subseteq \psi_{\max}$ for any $\psi \subseteq U_\psi$ certifying $(Q_k', P)$, the atoms in $Q_k' \setminus Q_k''$ cannot occur in $Q_k^{\supsharp}$, meaning that $Q_k^{\supsharp} \subseteq Q_k'' = Q_k' \setminus (Q_k' \setminus Q_k'')$. Since $Q_k''$ is subsequently enqueued before the start of the next loop iteration, we know there exists $Q_k^{\supsharp} \subseteq Q_k'' \in W_{k+1}$ with the same certification witness. Since $W_{k+1} \prec W_k$ by Lemma~\ref{lem:synth-worklist-decreases}, the induction hypothesis $\mathcal{P}(W_{k+1})$ applies, yielding some $(Q^*, P^*)$ with $Q_k^{\supsharp} \subseteq Q^*$, and $\mathcal{P}(W_k)$ follows.

      \vspace{0.4em}
      Otherwise, \textsc{FindStrongestBisimulation}$(Q_k', \psi_{\min}, \psi_{\max})$ returns $(\mathsf{true}, \psi', -)$, where, by Theorem~\ref{thm:find-strongest-bisim-correct}, $\psi'$ is the strongest invariant between $\psi_{\min}$ and $\psi_{\max}$ satisfying \sync and \stutter under $Q_k'$. Since $\psi_{\min} \subseteq \psi \subseteq \psi_{\max}$ for any valid $\psi \subseteq U_\psi$, $\psi'$ is the strongest invariant in $U_\psi$. Since $\psi_{\max}$ satisfies \init, $\psi' \subseteq \psi_{\max}$ also satisfies \init. Thus, at this point, we know $Q_k'$ satisfies \init, \sync, and \stutter under $\psi'$.


      \vspace{0.4em}
      Now, suppose \textsc{FindResidual}$(\psi', U_{P'}, P)$ returns $(\mathsf{false}, -)$, which, by Theorem~\ref{thm:refine-bounds-correct}, means that $\psi'$ violates \final under any $P' \subseteq U_{P'}$. Since $\psi'$ is the strongest invariant in $U_\psi$ satisfying \sync and \stutter under $Q_k'$, and no $P' \subseteq U_{P'}$ satisfies \final under $\psi'$, no certification witness for $(Q_k', \cdot)$ exists in the universe. Since $Q_k^{\supsharp} \subseteq Q_k'$, and $(Q_k^{\supsharp}, P)$ does have a certification witness by assumption, we know $Q_k^{\supsharp} \subset Q_k'$, and since all one-atom weakenings of $Q_k'$ proceed to be enqueued, there exists $Q_k' \setminus \{q\} \supseteq Q_k^{\supsharp}$ in $W_{k+1}$ for some $q \in Q_k' \setminus Q_k^{\supsharp}$. Since $W_{k+1} \prec W_k$ by Lemma~\ref{lem:synth-worklist-decreases}, the induction hypothesis $\mathcal{P}(W_{k+1})$ applies, yielding some $(Q^*, P^*)$ with $Q_k^{\supsharp} \subseteq Q^*$, and $\mathcal{P}(W_k)$ follows.

      \vspace{0.4em}
      Otherwise, \textsc{FindResidual} returns $(\mathsf{true}, P')$, where, by Theorem~\ref{thm:residual-correct}, $P'$ is the weakest residual in $U_{P'}$ satisfying \final under $\psi'$.
      Thus, $(Q_k', P')$ is a certified solution witnessed by $\psi'$. Since $Q_k^{\supsharp} \subseteq Q_k'$ and the algorithm returns $(Q^*, P^*) = (Q_k', P')$, we have $Q_k^{\supsharp} \subseteq Q^*$, and $\mathcal{P}(W_k)$ follows.
    \item \emph{No such $Q_k^{\supsharp} \subseteq Q_k$ exists.}
      By assumption, the certifiable $Q^{\supsharp}$ is a subset of some $Q \in W_k'$. Since we only enqueue onto $W_k'$ throughout the rest of the iteration, $Q^{\supsharp} \subseteq Q \in W_{k+1}$. Since $W_{k+1} \prec W_k$ by Lemma~\ref{lem:synth-worklist-decreases}, the induction hypothesis $\mathcal{P}(W_{k+1})$ applies, yielding some $(Q^*, P^*)$ with $Q^{\supsharp} \subseteq Q^*$, and $\mathcal{P}(W_k)$ follows.
  \end{enumerate}

  \paragraph{Conclusion.} Before the while loop begins, $W = \{U_Q\}$. For any $Q \subseteq U_Q$, $P' \subseteq U_{P'}$, and $\psi \subseteq U_\psi$ such that $\psi$ certifies $(Q, P')$: By the reasoning of Part~(b), $\psi$ must also certify $(Q, P)$, so $Q$ serves as $Q^{\supsharp}$ with $Q^{\supsharp} \subseteq U_Q \in W$. By $\mathcal{P}(\{U_Q\})$, the algorithm returns $(Q^*, P^*)$ with $Q \subseteq Q^*$. Hence no certifiable $Q$ is strictly stronger than $Q^*$.
\end{enumerate} \qed

\begin{theorem}[Correctness of \textsc{WeakenViaBounds}]\label{thm:weaken-correct}
Let $Q$ be a candidate input filter, $P$ be a post-UDF filter, and $U_\psi$ be a finite invariant universe. \textsc{WeakenViaBounds}$(Q, P, U_\psi)$ always terminates and returns either
\begin{enumerate}[label=(\alph*),leftmargin=*]
\item $(\textsf{true}, Q', \psi_{\min}, \psi_{\max})$, where $Q' \subseteq Q$, all atoms in $Q \setminus Q'$ violate \sync or \stutter under any $\psi \subseteq U_\psi$, $\psi_{\max}$ satisfies \init, and $\psi_{\min} \subseteq \psi \subseteq \psi_{\max}$ for any $\psi \subseteq U_\psi$ certifying $(Q', P)$, or
\item $(\textsf{false},-,-,-)$, in which case either $Q = \varnothing$ or all atoms in $Q$ violate \sync or \stutter under any $\psi \subseteq U_\psi$.
\end{enumerate}
\end{theorem}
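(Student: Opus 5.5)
We prove termination first, then establish a loop invariant of \textsc{WeakenViaBounds} from which both return cases follow.

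\textbf{Termination.} Each loop iteration either returns or replaces $Q$ by $\textsc{Repair}(Q,\mathsf{diagnosis})$. This repair is a strict subset of $Q$. The diagnosis row $r$ comes from a model of $\neg\chi_{\mathsf{sync}}$, so $Q(r)$ holds and every atom of $Q$ is true on $r$; hence the \sync repair removes all atoms of $Q$. In the \stutter case $\neg Q(r)$ holds, so at least one atom is false on $r$ and is dropped. The inner loops of \textsc{RefineBounds} range over finite subsets of the finite $U_\psi$, and each solver query terminates by assumption. So $|Q|$ strictly decreases, and the loop exits after at most $|Q|+1$ iterations, either through the guard $Q=\varnothing$ on line 4 or through the \textsf{true} branch.

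\textbf{Loop invariant.} We maintain the following three facts at the top of each iteration, for the current $Q_{\mathrm{cur}}$ and bounds:
\begin{enumerate}[label=(\roman*),leftmargin=*]
\item every atom in $Q\setminus Q_{\mathrm{cur}}$ is ``offending'', i.e.\ violates \sync or \stutter under any $\psi\subseteq U_\psi$ meeting $\psi_{\min}\subseteq\psi$;
\item $\psi_{\max}$ satisfies \init (once \textsc{RefineBounds} has run);
\item every $\psi\subseteq U_\psi$ certifying $(Q_{\mathrm{cur}},P)$ satisfies $\psi_{\min}\subseteq\psi\subseteq\psi_{\max}$.
\end{enumerate}

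Fact (ii) is immediate: the \init pass deletes every atom $p$ with $\neg p(I,I)$. The only exception would be atoms of $\psi_{\min}$ failing \init, but any certifying $\psi$ contains $\psi_{\min}$ and satisfies \init, so such atoms cannot exist whenever a witness exists. If no witness exists, the bound claims in (c) hold vacuously.

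For the lower bound in (iii), we use the property of \textsc{FindWeakestImplicant}: $\psi_{\min}$ is the weakest conjunction from $U_\psi$ implying $\phi$. A certifying $\psi$ for $(Q_{\mathrm{cur}},P)$ implies $\phi$ by \final, which is exactly $\phi$ when $P'=P$. We then need $\psi\supseteq\psi_{\min}$, which is the prime-implicant correctness lemma deferred to the appendix. We assume that lemma.

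For the upper bound in (iii), we argue as in Houdini. Suppose, by induction over removals, that $\psi\subseteq\psi_{\max}$ before atom $p$ is considered. Then $\psi_{\max}(a_1,a_2)$ implies $\psi(a_1,a_2)$ in the \sync antecedent. If $p\in\psi$, \sync for $\psi$ would give $p(a_1',a_2')$, contradicting the failure of the implication. So $p\notin\psi$ and its removal is safe. \stutter is never used here, which matches the paper's soundness remark.

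Fact (iii) must survive replacing $Q_{\mathrm{cur}}$ by a weaker repaired $Q$. The lower bound depends only on $P$. The upper bound is simply recomputed on the next call to \textsc{RefineBounds}, starting from a set that still contains every witness. Here is the subtle point: a witness for the weaker $Q$ need not be a witness for the stronger one. To handle this, we carry the claim as ``every $\psi$ that satisfies \init, and \sync for the current $Q$,'' and recheck the removals. A weaker $Q$ gives a weaker \sync antecedent, so previously removed atoms stay validly removed.

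\textbf{Repair step (the main obstacle).} We must show that dropped atoms are offending in the stated sense. If $\chi_{\mathsf{sync}}$ is invalid with model $(a_1,a_2,r)$, then for any $\psi$ with $\psi_{\min}\subseteq\psi\subseteq\psi_{\max}$, the states $a_1,a_2$ satisfy $\psi$, and the successors violate $\psi_{\min}\subseteq\psi$. So any $Q''$ that keeps $r$ selected breaks \sync. Hence every atom true on $r$, which is all of them, cannot appear in a certifiable filter. The \stutter case is dual, applied to the atoms false on $r$. Precisely aligning this row-level argument with the statement's ``atoms violate \sync or \stutter'' phrasing is where the most care is needed. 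We would interpret it as meaning that these atoms cannot belong to any certifiable subfilter, which is exactly how Theorem~\ref{thm:optimal-synth-guarantees} uses it.

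\textbf{Conclusion.} On the \textsf{true} return we obtain case (a) from the invariant. On the $Q=\varnothing$ return we obtain case (b): either the input was already empty, or every atom was removed and each is offending by (i).
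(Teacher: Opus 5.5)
Your proof follows the paper's route. Termination comes from the strict decrease of $|Q|$, and your explicit check of the \sync and \stutter repairs is more careful than the paper's ``by definition.'' Partial correctness uses an invariant equivalent to the paper's induction hypothesis on $|Q|$, with the same lower-bound, upper-bound and repair arguments.

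\textbf{The gap is in fact (ii).} The \init pass of \textsc{RefineBounds} only ranges over $\psi_{\max}\setminus\psi_{\min}$, so atoms of $\psi_{\min}$ are never checked at $(I,I)$. Your argument that they satisfy \init needs a certifying witness to exist. When no witness exists, your vacuity fallback covers only the containment clause $\psi_{\min}\subseteq\psi\subseteq\psi_{\max}$. It does not cover the \emph{unconditional} claim in case (a) that $\psi_{\max}$ satisfies \init. That case genuinely arises: \textsc{CheckUnrealizable} is incomplete, so \textsc{WeakenViaBounds} can return \textsf{true} for a $Q'$ that has no witness. The soundness part of Theorem~\ref{thm:optimal-synth-guarantees} relies on exactly this unconditional property. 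The invariant later returned by \textsc{FindStrongestBisimulation} inherits \init only from $\psi_{\max}$, and you cannot presuppose a witness while proving soundness.

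\textbf{The missing idea is the one the paper uses.} First, $\phi(I,I)$ is a tautology: either $P(I)$ holds and the first disjunct is true (since $I=I$), or it fails and the second disjunct is true. Second, every atom $p\in\psi_{\min}$ is entailed by $\phi$, by the pruning step of \textsc{FindWeakestImplicant} (Lemma~\ref{lem:prune}). Hence $p(I,I)$ holds for every $p\in\psi_{\min}$. Combined with the \init pass over the remaining atoms, this gives \init for $\psi_{\max}$ with no appeal to a witness.

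\textbf{A smaller point on the \sync repair.} The inference ``every atom true on $r$ cannot appear in a certifiable filter'' does not follow as stated, because a filter may contain such an atom and still reject $r$ through another atom. What your argument actually supports is this: no subfilter of $Q_{\mathrm{cur}}$ rejects $r$, so no subfilter of $Q_{\mathrm{cur}}$ is certifiable with a witness in the bounds. That is consistent with your own observation that this repair empties $Q$, and it is the reading Theorem~\ref{thm:optimal-synth-guarantees} needs.
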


\begin{proof}

We split the argument into termination and partial correctness.

\begin{enumerate}[leftmargin=*]
  \item \textbf{Termination.} The first line initializing $\psi_{\min}$ and $\psi_{\max}$ clearly terminates. Next, we prove that the loop in \textsc{WeakenViaBounds} terminates by showing that $\lvert Q \rvert$ strictly decreases through it. Arbitrarily consider the $k$-th iteration, and let $Q_k$ be the candidate input filter at the beginning. If $Q_k = \varnothing$, then \textsc{WeakenViaBounds} finishes. Otherwise, \textsc{RefineBounds}$(Q_k, P, U_\psi, \psi_{\min}, \psi_{\max})$ proceeds to terminate (by Theorem~\ref{thm:refine-bounds-correct}) and return potentially tightened invariant bounds $(\psi_{\min}', \psi_{\max}')$, followed by \textsc{CheckUnrealizable}$(Q_k, \psi_{\min}', \psi_{\max}')$ terminating (by Theorem~\ref{thm:check-unrealizable-correct}) and deciding whether $Q_k$ is unrealizable within the bounds. If \textsc{CheckUnrealizable} returns \textsf{false}, then \textsc{WeakenViaBounds} finishes; otherwise, it proceeds to repair $Q_k$. By definition, \textsc{Repair} terminates and returns $Q_k' \subset Q_k$ and the iteration ends. Thus, $\lvert Q_{k+1} \rvert = \lvert Q_k' \rvert < \lvert Q_k \rvert$. Thus, by the finitude of $Q$ at the beginning of the loop, the loop terminates, and it immediately follows that \textsc{WeakenViaBounds} terminates.

  \item \textbf{Partial correctness.} We first define property $\mathcal{P}(Q_m)$:
  \begin{quote}
    For every state with $Q_n$ and $(\psi_{\min}, \psi_{\max})$ at the beginning of the loop iteration where $\lvert Q_n \rvert \le \lvert Q_m \rvert$, if all atoms in $Q \setminus Q_n$ violate \sync or \stutter under any $\psi \subseteq U_\psi$,
    $\psi_{\min}$ is either $\varnothing$ or the weakest implicant of
    \[
      \phi
      \;=\;
      (P(a_1) \land P(a_2) \land a_1 = a_2) \lor (\neg P(a_1) \land \neg P(a_2))
    \]
    over $U_\psi$, and $\psi_{\min} \subseteq \psi \subseteq \psi_{\max} \subseteq U_\psi$ for any $\psi \subseteq U_\psi$ certifying $(Q_n, P)$, then \textsc{WeakenViaBounds} returns either $(\textsf{true},\,Q_n',\psi_{\min}', \psi_{\max}')$, where
    \begin{enumerate}[leftmargin=*]
      \item $Q_n' \subseteq Q_n$,
      \item all atoms in $Q \setminus Q_n'$ violate \sync or \stutter under any $\psi \subseteq U_\psi$,
      \item $\psi_{\max}'$ satisfies \init, and
      \item $\psi_{\min} \subseteq \psi_{\min}' \subseteq \psi \subseteq \psi_{\max}' \subseteq \psi_{\max}$ for any $\psi \subseteq U_\psi$ certifying $(Q_n', P)$,
    \end{enumerate}
    or $(\textsf{false},-,-,-)$, in which case either $Q = \varnothing$ or all atoms in $Q$ violate \sync or \stutter under any $\psi \subseteq U_\psi$.
    
  \end{quote}
  
  We prove $\mathcal{P}(Q)$ for all candidate input filters by induction on $\lvert Q \rvert$.

  \paragraph{Base case.} In this case, $Q = \varnothing$, and the loop immediately returns $(\mathsf{false}, -, -, -)$, so $\mathcal{P}(Q) = \mathcal{P}(\varnothing)$ holds.

  \paragraph{Inductive step.} Assume $\mathcal{P}(Q_n)$ for all $0 \le \lvert Q_n \rvert < m$ and consider a state with $Q_k$ and $(\psi_{\min}, \psi_{\max})$ where $\lvert Q_k \rvert = m$ at the beginning of the loop iteration. By assumption,
  $\psi_{\min}$ is either $\varnothing$ or the weakest implicant of $\phi$ over $U_\psi$, and $\psi_{\min} \subseteq \psi \subseteq \psi_{\max} \subseteq U_\psi$ for any $\psi \subseteq U_\psi$ certifying $(Q_k, P)$. Let
  \[
    (\psi_{\min}', \psi_{\max}') = \textsc{RefineBounds}(Q_k, P, U_\psi, \psi_{\min}, \psi_{\max}),
  \]
  and by Theorem~\ref{thm:refine-bounds-correct}, we know $\psi_{\min} \subseteq \psi_{\min}' \subseteq \psi \subseteq \psi_{\max}' \subseteq \psi_{\max} \subseteq U_\psi$ for any $\psi$ where $\psi_{\min} \subseteq \psi \subseteq \psi_{\max}$ that certifies $(Q_k, P)$, $\psi_{\min}'$ is the weakest implicant of $\phi$ over $U_\psi$, and $\psi_{\max}'$ satisfies \init. Since any $\psi \subseteq U_\psi$ certifying $(Q_k, P)$ must be between $\psi_{\min}$ and $\psi_{\max}$, and any such $\psi$ between $\psi_{\min}$ and $\psi_{\max}$ must be between $\psi_{\min}'$ and $\psi_{\max}'$, we know that any $\psi \subseteq U_\psi$ certifying $(Q_k, P)$ must be between $\psi_{\min}'$ and $\psi_{\max}'$. Now, we analyze the two possible outcomes of $\textsc{CheckUnrealizable}(Q_k, \psi_{\min}', \psi_{\max}')$:

  \begin{itemize}[leftmargin=*]
    \item \emph{(\textsf{false}, -).} In this case, \textsc{WeakenViaBounds} returns $(\mathsf{true}, Q_k, \psi_{\min}', \psi_{\max}')$. Since we also know that $Q_k \subseteq Q_k$, all atoms in $Q \setminus Q_k$ violate \sync or \stutter under any $\psi \subseteq U_\psi$, and $\psi_{\max}'$ satisfies \init, $\mathcal{P}(Q_k)$ holds.
    


    \item \emph{(\textsf{true}, \textsf{diagnosis}).} By Theorem~\ref{thm:check-unrealizable-correct}, $Q_k$ violates \sync or \stutter under any $\psi$ between $\psi_{\min}'$ and $\psi_{\max}'$, and \textsf{diagnosis} witnesses the failure. Then, by definition, \textsc{Repair}$(Q_k, \mathsf{diagnosis})$ returns $Q_k' \subset Q_k$ where $Q_k \setminus Q_k'$ are all the offending atoms according to \textsf{diagnosis}. By assumption, all atoms in $Q \setminus Q_k$ violate \sync or \stutter under any $\psi \subseteq U_\psi$, so we know that all atoms in $Q \setminus Q_k' = (Q \setminus Q_k) \cup (Q_k \setminus Q_k')$ also do.

    \vspace{0.4em}
    Let $\psi$ be any $\psi \subseteq U_\psi$ certifying $(Q_k', P)$. Since $\psi_{\min}'$ is the weakest implicant of $\phi$ over $U_\psi$, $\psi_{\min}' \subseteq \psi$. Since $\psi \subseteq \psi_{\max}' \subseteq U_\psi$ for any $\psi \subseteq U_\psi$ certifying $(Q_k, P)$, and now that $Q_k'$ is weaker than $Q_k$, by the structure of \sync, it is still the case for any $\psi \subseteq U_\psi$ certifying $(Q_k', P)$. Thus, $\psi_{\min}' \subseteq \psi \subseteq \psi_{\max}' \subseteq U_\psi$ for any $\psi \subseteq U_\psi$ certifying $(Q_k', P)$.

    \vspace{0.4em}
    Lastly, $\lvert Q_k' \rvert < \lvert Q_k \rvert$, so the induction hypothesis $\mathcal{P}(Q_k')$ applies, and $\mathcal{P}(Q_k)$ follows.


  \end{itemize}

  \paragraph{Conclusion.} Before the while loop begins, $\psi_{\min} = \varnothing$ and $\psi_{\max} = U_\psi$.
  Trivially, $\psi_{\min} = \varnothing \subseteq \psi \subseteq U_\psi = \psi_{\max}$ for any $\psi \subseteq U_\psi$ certifying $(Q, P)$. And vacuously, all atoms in $Q \setminus Q = \varnothing$ violate \sync or \stutter under any $\psi \subseteq U_\psi$. Thus, $\mathcal{P}(Q)$ applies, and the theorem follows.
\end{enumerate}
\end{proof}

\begin{theorem}[Correctness of \textsc{RefineBounds}]\label{thm:refine-bounds-correct}
  Let $Q$ be a candidate input filter, $P$ be a post-UDF filter, $U_\psi$ be the finite invariant universe, and $(\psi_{\min}, \psi_{\max})$ be symbolic bounds.\\\textsc{RefineBounds}$(Q, P, U_\bisim, \psi_{\min}, \psi_{\max})$ always terminates and returns $(\psi_{\min}', \psi_{\max}')$ s.t.
  \begin{enumerate}[label=(\alph*),leftmargin=*]
    \item $\psi_{\min}'$ is the weakest implicant of
    \[
      \phi
      \;=\;
      (P(a_1) \land P(a_2) \land a_1 = a_2) \lor (\neg P(a_1) \land \neg P(a_2))
    \]
    over $U_\psi$ if $\psi_{\min} = \varnothing$ and $\psi_{\min}' = \psi_{\min}$ otherwise; $\psi_{\max}' \subseteq \psi_{\max}$;
    \item $\psi'_{\min} \subseteq \psi \subseteq \psi'_{\max}$ for any $\psi$ where $\psi_{\min} \subseteq \psi \subseteq \psi_{\max}$ that certifies $(Q, P)$; and
    \item $\psi_{\max}'$ satisfies \init if $\psi_{\min}'$ is the weakest implicant of $\phi$ over $U_\psi$.
  \end{enumerate}
\end{theorem}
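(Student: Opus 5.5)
We argue directly from the three phases of Algorithm~\ref{alg:refine-bounds}, in the order they execute. Termination is immediate: each phase is either a single call to \textsc{FindWeakestImplicant} or a loop over the finite set $\bisim_{\max}\setminus\bisim_{\min}\subseteq U_\bisim$. Each loop body makes one evaluation of $p(I,I)$ or one validity query, and both are assumed to terminate. We will also assume the appendix correctness lemma for \textsc{FindWeakestImplicant}: it terminates and returns the weakest conjunction over $U_\bisim$ that implies $\phi$.

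For part (a), the first branch fires exactly when $\bisim_{\min}=\varnothing$, and it then sets $\bisim_{\min}'$ to the weakest implicant of $\phi$. Otherwise $\bisim_{\min}$ is untouched. Both later loops only ever remove atoms from $\bisim_{\max}$, so $\bisim_{\max}'\subseteq\bisim_{\max}$.

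For part (b), fix $\bisim$ with $\bisim_{\min}\subseteq\bisim\subseteq\bisim_{\max}$ that certifies $(Q,P)$.
\begin{itemize}
\item \emph{Lower bound.} If $\bisim_{\min}'=\bisim_{\min}$ there is nothing to show. Otherwise, \final for $(Q,P)$ says exactly that $\bisim\Rightarrow\phi$, so $\bisim$ is an implicant of $\phi$ over $U_\bisim$. By weakest-ness every atom of $\bisim_{\min}'$ lies in $\bisim$, i.e.\ $\bisim_{\min}'\subseteq\bisim$. Here ``weakest implicant'' should be read through the set order, so we must use the precise statement of the appendix lemma.
\item \emph{Upper bound, \init pass.} Every $p\in\bisim$ satisfies $p(I,I)$ because $\bisim$ satisfies \init. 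So no atom of $\bisim$ is deleted in the \init pass.
\item \emph{Upper bound, \sync pass.} Here we argue by induction over the iterations that $\bisim\subseteq\bisim_{\max}$ is an invariant, where $\bisim_{\max}$ is the current set. Suppose $p\in\bisim$ is tested. Since $\bisim\subseteq\bisim_{\max}$, the conjunction $\bisim_{\max}$ is logically stronger than $\bisim$. The antecedent $\varphi$ therefore implies $\bisim(a_1,a_2)\wedge Q(r)\wedge a_1'=f(a_1,r)\wedge a_2'=f(a_2,r)$. By \sync for $\bisim$ this yields $\bisim(a_1',a_2')$, hence $p(a_1',a_2')$, so $p$ is not removed.
\end{itemize}
Atoms of $\bisim_{\min}$ are never even tested. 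Together these give $\bisim\subseteq\bisim_{\max}'$.

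For part (c), note that the \init pass only examines atoms in $\bisim_{\max}\setminus\bisim_{\min}$. So atoms of $\bisim_{\min}'$ that were never \init-checked may remain in $\bisim_{\max}'$. The hypothesis ``$\bisim_{\min}'$ is the weakest implicant'' is meant to handle this. The intended argument is that $I$ relates to itself as $a_1=a_2=I$, so $\phi(I,I)$ holds trivially. We still need every atom of the implicant to hold at $(I,I)$, and this does not follow from $\phi(I,I)$ alone. I expect this to be the main obstacle.

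I would try first to use the specific structure of \textsc{FindWeakestImplicant}, which presumably selects only atoms valid on the diagonal $a_1=a_2$. Failing that, I would interpret (c) modulo the invariants being certified, using the fact from (b) that any certifying $\bisim$ contains $\bisim_{\min}'$ and satisfies \init. Under that reading every atom of $\bisim_{\min}'$ holds at $(I,I)$ whenever any certifier exists. Every atom outside $\bisim_{\min}'$ has been explicitly \init-checked by the second loop, and this gives \init for the whole $\bisim_{\max}'$.
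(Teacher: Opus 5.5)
Your termination argument and parts (a) and (b) follow the paper's proof. Your inductive treatment of the \sync pass is actually more careful than the paper's one-line version: the current $\psi_{\max}$ always contains $\psi$, so the removal test's antecedent is at least as strong as the \sync antecedent for $\psi$. The genuine gap is part (c), which you leave unresolved. You are right that atoms of $\psi_{\min}'$ are never \init-checked, and that $\phi(I,I)$ together with $\psi_{\min}' \Rightarrow \phi$ yields nothing. What you need is the converse direction, $\phi \Rightarrow \psi_{\min}'$, and this is exactly what the paper uses. By Lemma~\ref{lem:prune}, every atom $p$ of the weakest implicant satisfies $\phi \Rightarrow p$; this is the filtering step $U \gets \{p \in U \mid \Phi \Rightarrow p\}$ on line~2 of Algorithm~\ref{alg:find-weakest-implicant}. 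Moreover, $\phi(a,a)$ is a tautology for every $a$: if $P(a)$ holds, the left disjunct holds because $a = a$; otherwise the right disjunct holds. Hence each such $p$ holds on the entire diagonal, in particular at $(I,I)$. This justifies your guess that \textsc{FindWeakestImplicant} ``selects only atoms valid on the diagonal,'' but it must be derived from the pruning step, not presumed. Combined with the explicit \init check on every atom of $\psi_{\max} \setminus \psi_{\min}'$, it gives \init for all of $\psi_{\max}'$ unconditionally.

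Your fallback does not fill this gap. Reading (c) as holding only ``whenever some certifier exists'' proves a strictly weaker, conditional statement. The rest of the paper relies on the unconditional form. In the soundness part of Theorem~\ref{thm:optimal-synth-guarantees}, the returned $\psi$ is shown to satisfy \init only because $\psi \subseteq \psi_{\max}$ and $\psi_{\max}$ satisfies \init (via Theorem~\ref{thm:weaken-correct}). At that point the goal is to establish that $\psi$ is a certification witness. Conditioning (c) on a certifier already existing would therefore make that argument circular.
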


\begin{proof}

By the assumption that $U_\psi$ contains all necessary atoms, \textsc{FindWeakestImplicant} terminates. Since $\psi_{\max}$ is finite and the two loops only remove atoms from it, they terminate. Thus, \textsc{RefineBounds} terminates. For partial correctness, we first consider the lower bound case:
\begin{enumerate}[leftmargin=*]
  \item \emph{$\psi_{\min} = \varnothing$.}
  In this case, $\psi_{\min}' = \textsc{FindWeakestImplicant}(U_\psi, \phi)$.
  By Lemma~\ref{lem:exist} and the assumption that $U_\psi$ contains all necessary atoms, we know $\psi_{\min}' \neq \varnothing$, so $\psi_{\min} \subseteq \psi_{\min}'$. For any $\psi$ certifying $(Q, P)$, $\psi_{\min} = \varnothing \subset \psi$ and $\psi \Rightarrow \phi$. By definition, \textsc{FindWeakestImplicant} returns the intersection of all subsets of $U_{\psi}$ that entail $\phi$, so it must be contained in $\psi$. Hence,
  \[
    \psi'_{\min}
    \;=\;
    \bigcap\bigl\{\,S\subseteq U_{\psi}\mid S\models\phi\bigr\}
    \;\subseteq\;
    \psi.
  \]

  \item \emph{$\psi_{\min} \neq \varnothing$.}
  In this case, $\psi_{\min}' = \psi_{\min}$, so $\psi_{\min} \subseteq \psi_{\min}'$ and, for any $\psi$ where $\psi_{\min} \subseteq \psi$ that certifies $(Q, P)$, $\psi_{\min}' \subseteq \psi$.
\end{enumerate}

In the upper bound case, $\psi_{\max}'$ is formed by removing from $\psi_{\max}$ any atom that violates \init or \sync under $Q$, so $\psi'_{\max} \subseteq \psi_{\max}$. For any $\psi \subseteq \psi_{\max}$ that certifies $(Q, P)$, $\psi$ satisfies both \init and \sync and thus does not contain the atoms in $\psi_{\max} \setminus \psi_{\max}'$, so $\psi \subseteq \psi'_{\max}$.

\vspace{0.4em}
Suppose $\psi_{\min}'$ is the weakest implicant of $\phi$ over $U_\psi$. By Lemma~\ref{lem:prune}, $\phi \Rightarrow \psi_{\min}'$. Suppose for contradiction that $\neg \psi_{\min}'(I, I)$, which means that
\[
  \phi(I,I)
  \;=\;
  (P(I) \land P(I) \land I = I) \lor (\neg P(I) \land \neg P(I))
\]
is invalid. However, it is clearly a tautology: if $P(I)$ holds, then the left disjunct holds; if it does not, then the right disjunct holds. Thus, we have a contradiction, so we know that $\psi_{\min}'(I, I)$ has to hold, and there is no need to check whether atoms in $\psi_{\max}$ that are also $\psi_{\min}'$ violate \init. Hence, $\psi_{\max}'$ satisfies \init.
\end{proof}

\begin{theorem}[{Correctness} of \textsc{CheckUnrealizable}]\label{thm:check-unrealizable-correct}
Let $Q$ be a candidate input filter and $(\psi_{\min},\allowbreak\psi_{\max})$ be symbolic bounds. \textsc{CheckUnrealizable}$(Q, \psi_{\min}, \psi_{\max})$ always terminates, and
\begin{enumerate}[label=(\alph*),leftmargin=*]
  \item if it returns $(\textsf{true}, \sync(r))$, then $Q$ violates \sync under any $\psi$ between $\psi_{\min}$ and $\psi_{\max}$ on row $r$;
  \item if it returns $(\textsf{true}, \stutter(r))$, then $Q$ violates \stutter under any $\psi$ between $\psi_{\min}$ and $\psi_{\max}$ on $r$.
\end{enumerate}
\end{theorem}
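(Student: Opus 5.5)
Termination is immediate. \textsc{CheckUnrealizable} is straight-line code that issues at most two validity queries (for $\chi_{\mathsf{sync}}$ and $\chi_{\mathsf{stutter}}$) and at most one model extraction. We assume, as elsewhere in the paper, that the underlying solver decides these queries. Every path then reaches a \textbf{return}, so the procedure terminates.

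For part (a), suppose the procedure returns $(\mathsf{true},\sync(r))$. Then $\chi_{\mathsf{sync}}$ was invalid, and $(a_1,a_2,r)$ is a model of $\neg\chi_{\mathsf{sync}}$. Concretely,
\[
\bisim_{\max}(a_1,a_2)\land Q(r)\land a_1'=f(a_1,r)\land a_2'=f(a_2,r)\land \neg\bisim_{\min}(a_1',a_2').
\]
Fix any $\psi$ with $\bisim_{\min}\subseteq\psi\subseteq\bisim_{\max}$, read as sets of conjoined atoms. The key step is antitonicity of conjunction: a superset of atoms is logically stronger. Hence $\psi\subseteq\bisim_{\max}$ gives $\bisim_{\max}\Rightarrow\psi$, and $\bisim_{\min}\subseteq\psi$ gives $\psi\Rightarrow\bisim_{\min}$.

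It follows that the same triple satisfies $\psi(a_1,a_2)$, $Q(r)$ and the two update equations. It also satisfies $\neg\psi(a_1',a_2')$, since $\psi(a_1',a_2')$ would imply $\bisim_{\min}(a_1',a_2')$. Thus $(a_1,a_2,r)$ is a counterexample to \sync instantiated with $\psi$, and the violation occurs on row $r$, as claimed.

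Part (b) is handled by the identical argument applied to $\chi_{\mathsf{stutter}}$. The guard $Q(r)$ is replaced by $\neg Q(r)$, and the update $a_2'=f(a_2,r)$ is replaced by $a_2'=a_2$. We also note that reaching the stutter branch requires $\chi_{\mathsf{sync}}$ to be valid, but that fact plays no role in the claim.

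The only delicate point, and the one I would state explicitly as a small lemma, is the orientation of the bounds. We need that $\psi_1\subseteq\psi_2$ as atom sets implies $\psi_2\Rightarrow\psi_1$ as formulas. We also need that weakening the antecedent atom-set (to $\bisim_{\max}$) and strengthening the consequent atom-set (to $\bisim_{\min}$) yields the most permissive instance of the VC. With this lemma in hand, invalidity of the most permissive instance transfers to every intermediate $\psi$ through the single shared countermodel.
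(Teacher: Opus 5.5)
Your proof is correct and matches the paper's argument: a countermodel of the most permissive instance ($\psi_{\max}$ in the antecedent, $\psi_{\min}$ in the consequent) is also a countermodel for every intermediate $\psi$, since $\psi_{\max} \Rightarrow \psi \Rightarrow \psi_{\min}$, and termination follows from the straight-line structure. One small wording slip in your closing remark: read as formulas, substituting $\psi_{\max}$ \emph{strengthens} the antecedent and substituting $\psi_{\min}$ \emph{weakens} the consequent, so your adjectives are reversed there, although your conclusion that this gives the most permissive instance is right.
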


\begin{proof}

Termination of \textsc{CheckUnrealizable} is immediate as it only involves straight-line code. We consider each failure mode in turn:
\begin{enumerate}[label=(\alph*),leftmargin=*]
  \item \emph{\sync failure.} In this case, the check
  \[
    \chi_{\mathsf{sync}}\;:\; 
    \psi_{\max}(a_1, a_2) \land Q(r) \land a_1' = f(a_1, r) \land a_2' = f(a_2, r)
    \;\Longrightarrow\;
    \psi_{\min}(a_1', a_2')
  \]
  is invalid, i.e., there exists a model $(a_1, a_2, r)$ for which $\psi_{\max}(a_1, a_2)$ and $Q(r)$ hold but $\psi_{\min}(a_1', a_2')$ fails. Now, for any $\psi$ where $\psi_{\min} \subseteq \psi \subseteq \psi_{\max}$, $\psi(a_1, a_2)$ holds since $\psi_{\max}(a_1, a_2)$ does. However, since $\psi(a_1',a_2')$ implies $\psi_{\min}(a_1',a_2')$, we know $\psi(a_1', a_2')$ fails, so $Q$ violates \sync under such $\psi$ on row $r$.

  \item \emph{\stutter failure.} In this case, the check
  \[
    \chi_{\mathsf{stutter}}\;:\;
    \psi_{\max}(a_1, a_2) \land \neg Q(r) \land a_1' = f(a_1, r) \land a_2' = a_2
    \;\Longrightarrow\;
    \psi_{\min}(a_1', a_2')
  \]
  is invalid for some $(a_1, a_2, r)$. By the same reasoning, $Q$ violates \stutter under any $\psi$ where $\psi_{\min} \subseteq \psi \subseteq \psi_{\max}$ on row $r$.
\end{enumerate}
\end{proof}

\begin{theorem}[Correctness of \textsc{FindStrongestBisimulation}]\label{thm:find-strongest-bisim-correct}
Let $Q$ be a candidate input filter and $(\psi_{\min},\psi_{\max})$ be symbolic bounds. \textsc{FindStrongestBisimulation}$(Q,\psi_{\min},\psi_{\max})$ always terminates and returns either
\begin{enumerate}[label=(\alph*),leftmargin=*]
\item $(\textsf{true}, \psi, -)$, where $\psi$ is the strongest invariant between $\psi_{\min}$ and $\psi_{\max}$ satisfying \sync and \stutter under $Q$, or
\item $(\textsf{false}, -, \textsf{diagnosis})$, in which case $Q$ violates \sync or \stutter under any $\psi$ between $\psi_{\min}$ and $\psi_{\max}$, and \textsf{diagnosis} witnesses this failure.
\end{enumerate}
\end{theorem}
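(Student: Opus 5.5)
The proof has two parts. Termination follows from a monotonicity measure on $\bisim_{\text{cand}}$. Correctness follows from a Houdini-style invariant: every $\psi$ with $\psi_{\min} \subseteq \psi \subseteq \psi_{\max}$ satisfying \sync and \stutter under $Q$ stays contained in $\bisim_{\text{cand}}$.

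\textbf{Termination.} Each call to \textsc{WeakenViaVC} loops over the finite set $\bisim_{\text{cand}}$ and discharges one validity query per atom, so each call terminates. We assume, as elsewhere in the paper, that the solver always answers. A call either returns failure or returns a subset of its input $\bisim_{\text{cand}}$. Hence one pass of the outer \textbf{repeat} loop either returns failure or leaves $\bisim_{\text{cand}}$ unchanged, in which case it has converged and the loop exits. Otherwise it strictly decreases $|\bisim_{\text{cand}}| \le |\psi_{\max}|$, which bounds the number of passes.

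\textbf{Key invariant.} We prove by induction over the atom removals that every such $\psi$ satisfies $\psi \subseteq \bisim_{\text{cand}}$. Initially this holds because $\bisim_{\text{cand}} = \psi_{\max}$. Suppose an atom $p$ is removed because $\chi$ is invalid. Then there is a model with $\bisim_{\text{cand}}(a_1,a_2)$, the guard $G(r)$, and the corresponding step, in which $p(a_1',a_2')$ fails. Since $\psi \subseteq \bisim_{\text{cand}}$, the conjunction $\bisim_{\text{cand}}$ is logically stronger than $\psi$, so $\psi(a_1,a_2)$ holds in that model. If $p \in \psi$, then $\psi$ would violate the corresponding VC (\sync when $G = Q$, \stutter when $G = \neg Q$), contradicting the assumption on $\psi$. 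So $p \notin \psi$, and the invariant survives the removal. This is the standard Houdini argument.

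\textbf{Case (a): success.} At convergence, one full pass removes nothing. So for every $p \in \bisim_{\text{cand}}$, both implications from $\bisim_{\text{cand}}$ to $p$ are valid. Taking the conjunction over all $p$ shows that $\bisim_{\text{cand}}$ itself satisfies \sync and \stutter under $Q$. No atom of $\psi_{\min}$ was ever removed, since any attempt to remove one returns failure instead. Therefore $\psi_{\min} \subseteq \bisim_{\text{cand}} \subseteq \psi_{\max}$. By the key invariant, every admissible $\psi$ is a subset of $\bisim_{\text{cand}}$, so $\bisim_{\text{cand}}$ is the strongest invariant in the range.

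\textbf{Case (b): failure.} Failure occurs when some $p \in \psi_{\min}$ fails its check. Suppose, for contradiction, that some $\psi$ in the range satisfied both \sync and \stutter. The key invariant would give $\psi \subseteq \bisim_{\text{cand}}$, and $p \in \psi_{\min} \subseteq \psi$. Applying the argument above to the returned model, $\psi$ would violate \sync (or \stutter) at that model, a contradiction. Hence no $\psi$ between $\psi_{\min}$ and $\psi_{\max}$ satisfies both conditions. Moreover, the returned row $r$ and its model witness the violation for every such $\psi$ that still contains $\bisim_{\text{cand}}$'s antecedent, matching the form of the diagnosis used in Theorem~\ref{thm:check-unrealizable-correct}.

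\textbf{Main obstacle.} The delicate point is the exact meaning of ``violates under any $\psi$'' in (b). The witness is guaranteed only for those $\psi$ that are both in the range and satisfy the other VC, because only those are bounded above by $\bisim_{\text{cand}}$. I would state the conclusion as ``no $\psi$ in the range satisfies both \sync and \stutter, with $r$ a witness for the violated VC.'' This is all that the proof of Theorem~\ref{thm:optimal-synth-guarantees} actually uses.
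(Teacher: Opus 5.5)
Your proof is correct. Its skeleton matches the paper's: termination from the strict decrease of $|\psi_{\text{cand}}|$, then a Houdini-style analysis of the weakening loop. The difference is the invariant you carry.

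\emph{The two invariants.} The paper maintains, by induction on $|\psi_{\text{cand}}|$, that every atom of $\psi_{\max}\setminus\psi_{\text{cand}}$ ``violates \sync or \stutter under $Q$.'' It then argues that any strict superset $\psi''\supset\psi'_{\text{cand}}$ contains a removed atom and so cannot be inductive. You instead maintain that every admissible $\psi$ in the range stays contained in $\psi_{\text{cand}}$.

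\emph{What your version buys.} First, an atom violates a VC only relative to the antecedent in force when it was dropped. So the paper's closing step silently needs your containment fact: take the earliest-dropped atom of $\psi''$, at which moment $\psi''$ was still below the candidate. Likewise, in the failure case the paper jumps from ``an atom of $\psi_{\min}$ failed'' to ``every $\psi$ in the range fails.'' Your contradiction argument is what justifies that jump. Second, you show the output contains \emph{every} admissible $\psi$, i.e.\ it is greatest rather than merely maximal among its supersets. That is the form the optimality proof later relies on when it concludes $\psi'\subseteq\psi$. The paper's formulation follows the code's bookkeeping more literally; yours is the standard Houdini argument and is tighter.

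\emph{Small corrections.}
\begin{itemize}
\item Case (a) needs $\psi_{\min}\subseteq\psi_{\max}$ at the start in order to conclude $\psi_{\min}\subseteq\psi_{\text{cand}}$. This holds because \textsc{RefineBounds} starts from $U_\psi$ and only removes atoms of $\psi_{\max}\setminus\psi_{\min}$. The paper uses it tacitly as well.
\item In your last paragraph, the returned model refutes exactly those $\psi$ with $\psi_{\min}\subseteq\psi\subseteq\psi_{\text{cand}}$ at the moment of failure. It is not ``those satisfying the other VC.'' For example, with diagnosis $\mathsf{Sync}(r)$, a $\psi$ satisfying \stutter can still contain an atom dropped in an earlier \sync pass, and so need not lie below the candidate.
\item Every other $\psi$ in the range is refuted by the model at which its first-dropped atom was removed. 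So each $\psi$ does get an explicit witness, just not a single uniform one.
\item Your claim that the weaker reading is all the optimality proof uses is overstated. That proof uses the diagnosis to exclude the atoms dropped by \textsc{Repair} from every certifiable weakening $Q^\sharp\subseteq Q$. This requires the returned model to refute an invariant certifying $Q^\sharp$. Such an invariant need not satisfy \stutter under $Q$, so it need not lie below $\psi_{\text{cand}}$. Neither your reading nor the paper's proof of this theorem supplies that step.
\end{itemize}
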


\begin{proof}

We split the argument into termination and partial correctness.

\begin{enumerate}[leftmargin=*]
  \item \textbf{Termination.} The first line initializing $\psi_\text{cand}$ to $\psi_{\max}$ clearly terminates. Next, we prove that the loop in \textsc{FindStrongestBisimulation} terminates by showing that $\lvert \psi_\text{cand} \rvert$ strictly decreases. Consider an arbitrary iteration of the loop. \textsc{WeakenViaVC} terminates as it loops over each atom in the finite $\psi_\text{cand}$. Suppose either \textsc{WeakenViaVC} call returns $(\mathsf{false}, -, r)$, then \textsc{FindStrongestBisimulation} finishes. Suppose both \textsc{WeakenViaVC} call return $(\mathsf{true}, \psi_\text{cand}, -)$, then the loop ends as $\psi_\text{cand}$ has converged. Now, the only remaining case is if either call returns $(\mathsf{true}, \psi_\text{cand}', -)$ where $\psi_\text{cand}' \subset \psi_\text{cand}$, in which case $\lvert \psi_\text{cand}' \rvert < \lvert \psi_\text{cand} \rvert$. Thus, by the finitude of $\psi_\text{cand} = \psi_{\max}$ at beginning of the loop, the loop terminates. After the loop, the final line of the \textsc{FindStrongestBisimulation} returns from it, so the algorithm terminates.

  \item \textbf{Partial correctness.} We first define property $\mathcal{P}(\psi_m)$:
  \begin{quote}
    For every state with $\psi_\text{cand}$ where $\lvert \psi_\text{cand} \rvert \le \lvert \psi_m \rvert$ at the beginning of the loop iteration, if $\psi_{\min} \subseteq \psi_\text{cand} \subseteq \psi_{\max}$ and all atoms in $\psi_{\max} \setminus \psi_\text{cand}$ violate \sync or \stutter under $Q$, \textsc{FindStrongestBisimulation}$(Q, \psi_{\min}, \psi_{\max})$ returns either $(\mathsf{true}, \psi_\text{cand}', -)$, where $\psi_{\min} \subseteq \psi_\text{cand}' \subseteq \psi_\text{cand}$, $\psi_\text{cand}'$ satisfies \sync and \stutter under $Q$, and all atoms in $\psi_{\max} \setminus \psi_\text{cand}'$ violate either of them, or $(\mathsf{false}, -, \mathsf{diagnosis})$, in which case $Q$ violates \sync or \stutter under any $\psi$ between $\psi_{\min}$ and $\psi_{\max}$, and \textsf{diagnosis} witnesses this failure.
  \end{quote}
  We prove $\mathcal{P}(\psi_\text{cand})$ for all invariants by induction on $\lvert \psi_\text{cand} \rvert$.

  \paragraph{Base case.} In this case, $\psi_\text{cand} = \varnothing \subset \psi_{\min}$, so $\mathcal{P}(\psi_\text{cand}) = \mathcal{P}(\varnothing)$ vacuously holds.
  

  \paragraph{Inductive step.} Assume $\mathcal{P}(\psi_n)$ for all $0 \le \lvert \psi_n \rvert < m$ and consider a state with $\psi_\text{cand}$ where $\lvert \psi_\text{cand} \rvert = m$ at the beginning of the loop iteration. By assumption, we know $\psi_{\min} \subseteq \psi_\text{cand} \subseteq \psi_{\max}$ and all atoms in $\psi_{\max} \setminus \psi_\text{cand}$ violate \sync or \stutter under $Q$. We go over the three possible scenarios:
  \begin{itemize}[leftmargin=*]
    \item \emph{Either \textsc{WeakenViaVC} call returns $(\textsf{false}, -, r)$.} In this case, by definition of \textsc{WeakenViaVC}, some atom $p$ in both $\psi_\text{cand}$ and $\psi_{\min}$ violates \sync or \stutter under $Q$, so we know that $Q$ violates \sync or \stutter under any $\psi$ between $\psi_{\min}$ and $\psi_{\max}$. \textsc{FindStrongestBisimulation} proceeds to return $(\mathsf{false}, -, \mathsf{diagnosis})$, where \textsf{diagnosis} witnesses the failure, and $\mathcal{P}(\psi_\text{cand})$ follows.
    
    \item \emph{Both \textsc{WeakenViaVC} calls return $(\textsf{true}, \psi_\text{cand}, -)$.} By definition of \textsc{WeakenViaVC}, this means that no atom in $\psi_\text{cand}$ violates \sync or \stutter under $Q$. Thus, $\psi_\text{cand}$ converges, the loop ends, and \text{FindStrongestBisimulation} returns $(\mathsf{true}, \psi_\text{cand}, -)$, where $\psi_{\min} \subseteq \psi_\text{cand} \subseteq \psi_\text{cand}$, all atoms in $\psi_\text{cand}$ satisfies \sync and \stutter under $Q$, and all atoms in $\psi_{\max} \setminus \psi_\text{cand}$ violate either of them. Thus, $\mathcal{P}(\psi_\text{cand})$ holds.

    \item \emph{Either \textsc{WeakenViaVC} call returns $(\textsf{true}, \psi_\text{cand}', -)$ where $\psi_\text{cand}' \subset \psi_\text{cand}$.} In this case, by definition of \textsc{WeakenViaVC} and $\psi_\text{cand} \subseteq \psi_{\max}$, $\psi_{\min} \subseteq \psi_\text{cand}' \subseteq \psi_{\max}$ and all atoms in $\psi_\text{cand} \setminus \psi_\text{cand}'$ violate \sync or \stutter under $Q$. Since all atoms in $\psi_{\max} \setminus \psi_\text{cand}$ also do, we know all atoms in $\psi_{\max} \setminus \psi_\text{cand}' = (\psi_{\max} \setminus \psi_\text{cand}) \cup (\psi_\text{cand} \setminus \psi_\text{cand}')$ do, too. And since $\lvert \psi_\text{cand}' \rvert < \lvert \psi_\text{cand} \rvert$, the induction hypothesis $\mathcal{P}(\psi_\text{cand}')$ applies, and $\mathcal{P}(\psi_\text{cand})$ follows.
    

  \end{itemize}

  \paragraph{Conclusion.} Before the loop begins, $\psi_\text{cand} = \psi_{\max}$. Clearly, $\psi_{\min} \subseteq \psi_{\max} \subseteq \psi_{\max}$ and all atoms in $\psi_{\max} \setminus \psi_{\max} = \varnothing$ violate \sync or \stutter under $Q$. Thus, $\mathcal{P}(\psi_\text{cand}) = \mathcal{P}(\psi_{\max})$ applies. Suppose \textsc{FindStrongestBisimulation} returns $(\mathsf{true}, \psi_\text{cand}', -)$ where $\psi_{\min} \subseteq \psi_\text{cand}' \subseteq \psi_\text{cand} = \psi_{\max}$, $\psi_\text{cand}'$ satisfies \sync and \stutter under $Q$, and all atoms in $\psi_{\max} \setminus \psi_\text{cand}'$ violate \sync or \stutter under $Q$. Suppose for contradiction that there exists an invariant $\psi_\text{cand}''$ s.t. $\psi_{\min} \subseteq \psi_\text{cand}' \subset \psi_\text{cand}'' \subseteq \psi_{\max}$ (i.e., $\psi_\text{cand}'' \Rightarrow \psi_\text{cand}'$) that satisfies \sync and \stutter under $Q$. This means that some atom $p$ in $\psi_{\max} \setminus \psi_\text{cand}'$ also exists in $\psi_\text{cand}''$. However, since all atoms in $\psi_{\max} \setminus \psi_\text{cand}'$ violate \sync or \stutter under $Q$, $\psi_\text{cand}''$ cannot satisfy \sync and \stutter under $Q$, so we have a contradiction. Therefore, in this case, $\psi_\text{cand}'$ is the strongest invariant between $\psi_{\min}$ and $\psi_{\max}$ satisfying \sync and \stutter under $Q$, and the theorem follows.
\end{enumerate}
\end{proof}

\begin{theorem}[Correctness of \textsc{FindResidual}]\label{thm:residual-correct}
  Let $\psi$ be a bisimulation invariant, $U_{P'}$ be the finite residual universe, and $P$ be a post-UDF filter. \textsc{FindResidual}$(\psi, U_{P'}, P)$ always terminates and returns either
  \begin{enumerate}[label=(\alph*),leftmargin=*]
    \item $(\textsf{true}, P')$, where $P'$ is the weakest residual in $U_{P'}$ satisfying \final under $\psi$, or
    \item $(\textsf{false}, -)$, in which case $\psi$ violates \final under any $P' \subseteq U_{P'}$.
  \end{enumerate}
\end{theorem}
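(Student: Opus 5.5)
We split the argument into termination and partial correctness, following the template used for Theorems~\ref{thm:weaken-correct} and~\ref{thm:find-strongest-bisim-correct}.

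\textbf{Termination.} The early check on lines 3--4 is a single validity query. For the main loop, every enqueued candidate is a subset of the finite universe $U_{P'}$: we start from $\top = \varnothing$, and \textsc{HandleFailure} only enqueues $P' \land p$ with $p \in U_{P'} \setminus P'$. There are only $2^{|U_{P'}|}$ such subsets. The $\mathsf{Visited}$ set ensures that each candidate is processed at most once, and \textsc{HandleFailure} enqueues at most $|U_{P'}|$ items per processed candidate. Hence the total number of enqueues is finite, and the loop terminates.

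\textbf{Case (b): failure.} If the early check fails, then $\psi$ violates \final for $P' = P$. We then argue that no $P' \subseteq U_{P'}$ can succeed, mirroring the informal justification in the text ($P$ is the ``strongest possible residual''). If this direct argument does not hold in general, we would instead weaken (b) to ``no $P'$ reachable from $\top$ satisfies \final''. If the worklist empties, we argue that every pruned branch is dead, meaning no strengthening of a pruned candidate can satisfy \final:
\begin{itemize}
\item \emph{Mismatch on acceptance.} We have $P(a_1)$ and $a_1 \ne a_2$, so \final needs $P'(a_2)$ together with $a_1 = a_2$, which is impossible. This holds for every $P'$.
\item \emph{False rejection.} We have $P(a_1) \land \neg P'(a_2)$. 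Any $P'' \supseteq P'$ still satisfies $\neg P''(a_2)$, since strengthening preserves falsity, so the counterexample persists.
\item \emph{Spurious acceptance.} Any $P'' \supseteq P'$ satisfying \final must have $\neg P''(a_2)$. Hence $P''$ contains some atom $p \in U_{P'} \setminus P'$ with $\neg p(a_2)$, so $P'' \supseteq P' \land p$ for an enqueued candidate.
\end{itemize}
Together these give the invariant: every valid $P'' \subseteq U_{P'}$ is a superset of some candidate still in $W$ or already visited. If $W$ empties without success, no valid residual exists.

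\textbf{Case (a): success and weakest.} When the procedure returns $P'$, validity of $\chi$ gives \final directly. For weakness we need that no valid $P''$ is strictly weaker than $P'$, i.e.\ no valid $P'' \subsetneq P'$. The plan is to use the same covering invariant together with a dequeue order that processes candidates in increasing size (BFS by number of atoms, analogous to the max-heap assumption in Lemma~\ref{lem:synth-worklist-decreases}). Suppose a valid $P'' \subsetneq P'$ existed. By the invariant, $P''$ would be a superset of some candidate $C$ that was dequeued before $P'$. We then induct to show that $P''$ itself would have been reached and returned earlier, a contradiction.

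\textbf{Main obstacle.} The hard step is this weakest-ness argument. Two issues arise. First, weakness is stated under implication, whereas the search works with subsets. Second, the spurious-acceptance branching may reach $P''$ only through supersets of earlier candidates. We would handle the first by interpreting ``weakest'' relative to the conjunction lattice on $U_{P'}$, as in Theorem~\ref{thm:optimal-synth-guarantees}. For the second, we would show that minimal valid supersets of any candidate are reachable by single-atom extensions, each chosen to refute the current counterexample, and then argue by induction on $|P'' \setminus C|$ combined with the size-ordered dequeue.
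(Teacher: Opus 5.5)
Your termination argument is correct and simpler than the paper's lexicographic measure on worklists: thanks to $\mathsf{Visited}$, at most $2^{|U_{P'}|}$ candidates are processed, and each enqueues at most $|U_{P'}|$ children. Your three-way analysis of \textsc{HandleFailure} is also right.

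The gap is the covering invariant. As you state it (every valid $P''$ contains some candidate that is in $W$ \emph{or already visited}), it is vacuous. The candidate $\top=\varnothing$ is visited in the first iteration and is contained in every $P''$. So the invariant says nothing when $W$ empties, and it gives no contradiction in the weakest-ness step. That is why you end up needing an unspecified induction on $|P''\setminus C|$.

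What you need is essentially the invariant behind the paper's inductive property $\mathcal{P}(W_m)$: for every valid $P''\subseteq U_{P'}$ there is some $C\subseteq P''$ that is currently in $W$ and \emph{not} in $\mathsf{Visited}$. Your case analysis preserves this, plus one check you omit. In the spurious-acceptance case, the child $P'\wedge p\subseteq P''$ must itself be unvisited. This holds under the size-ordered dequeue: dequeued sizes never decrease, the child has $|P'|+1$ atoms, and every candidate processed so far has at most $|P'|$.

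With this invariant, weakest-ness takes one line. When $P'$ is returned, a valid $P''\subsetneq P'$ would yield a pending $C$ with $|C|\le|P''|<|P'|$, contradicting the min-size dequeue. This also settles your first concern: like the paper, you are proving minimality in the conjunction (subset) order. Moreover, once the early check passes, $P$ itself (a subset of $U_{P'}$ by construction) is valid, so the invariant shows $W$ can never be exhausted. As in the paper, \textsf{false} can therefore only come from lines 3--4.

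That early check is the step you leave open, and it can be worked out directly from \final. Let $(a_1,a_2)$ with $\psi(a_1,a_2)$ refute \final for $P$.
\begin{itemize}
\item If $P(a_1)$ holds, then necessarily $a_1\neq a_2$, because $a_1=a_2$ would force $P(a_2)$. This refutes \final for every $P'$.
\item If instead $\neg P(a_1)\wedge P(a_2)$, it refutes every $P'$ with $P\Rightarrow P'$, for example any conjunction of conjuncts of $P$. It does not refute a candidate containing an atom such as $a_i\neq I_i$ that is false at $a_2$.
\end{itemize}
The paper covers this step only with the semantic remark that $Q$ is ``too strong,'' which does not follow from \final under an arbitrary $\psi$. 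So your suspicion is justified. The fix, however, is to make this case split explicit and add the missing hypothesis, rather than replacing (b) by a reachability statement, which changes the theorem. One option is to restrict (b) to residuals implied by $P$. Another is to note that in the algorithm $\psi_{\min}\subseteq\psi$, and $\psi_{\min}$ already implies \final for $P$, so the check never fires.
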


\begin{proof}

We begin by showing that worklist $W$ strictly decreases through the while loop in \textsc{FindResidual}. We refer to candidate residuals in a worklist with the least number of atoms as the \emph{smallest}.
Let $\sigma(W) = \max_{P' \in W}{\lvert U_{P'} \setminus P' \rvert}$ and $\kappa(W) = \sum_{P' \in W} \mathds{1}_{\{P'' \mid \sigma(W) = \lvert U_{P'} \setminus P'' \rvert\}}(P')$. We then define a standard lexicographic order on worklists: for worklists $W_a$ and $W_b$,
\[
  W_a \prec W_b
  \;\Longleftrightarrow\;
  \sigma(W_a) < \sigma(W_b) \lor
  (\sigma(W_a) = \sigma(W_b) \land \kappa(W_a) < \kappa(W_b)).
\]
$\preceq$ is defined similarly. We prove that the worklist strictly decreases according to $\prec$ via the following lemma:

\begin{lemma}[Worklist Strictly Decreases]\label{lem:residual-worklist-decreases}
  Let $W_k$ be the worklist at the beginning of the $k$-th iteration of the while loop. For all $k \ge 0$, $W_{k+1} \prec W_k$.
\end{lemma}

\begin{proof}
  For any $k$, let $P_k' = \textsc{Dequeue}(W_k)$ and $W_k'$ be the remainder of $W_k$. Since the worklist is a min-heap, $P_k'$ is (one of) the smallest, i.e., $\lvert U_{P'} \setminus P_k' \rvert = \sigma(W_k)$. If $P_k' \in \mathsf{Visited}$, then the iteration ends with $W_{k+1} = W_k'$. If $P_k'$ is the sole smallest in $W_k$, we have $\sigma(W_{k+1}) < \sigma(W_k)$; otherwise, we have $\sigma(W_{k+1}) = \sigma(W_k)$ yet $\kappa(W_{k+1}) < \kappa(W_k)$. Thus, in either case, $W_{k+1} \prec W_k$.

  \vspace{0.4em}
  Moving on, if $\chi$ is valid, then \textsc{FindResidual} finishes. Otherwise, the iteration proceeds to call \textsc{HandleFailure}, which, by the finitude of $U_{P'}$, terminates. If \textsc{HandleFailure} returns without enqueuing any candidate residual onto $W_k'$, the iteration ends with $W_{k+1} = W_k'$, and as before, we know $W_{k+1} \prec W_k$. Otherwise, it enqueues $W_k'' = \{ P_k' \land p \mid p \in U_{P'} \setminus P_k' \}$. Clearly, $\sigma(W_k'') < \lvert U_{P'} \setminus P_k' \rvert = \sigma(W_k)$. We consider the two cases:
  \begin{itemize}[leftmargin=*]
    \item \emph{$P_k'$ is the sole smallest in $W_k$.} In this case, $\sigma(W_k') < \lvert U_{P'} \setminus P_k' \rvert = \sigma(W_k)$. Thus, we have $\sigma(W_{k+1}) = \sigma(W_k' \cup W_k'') < \sigma(W_k)$, which means $W_{k+1} \prec W_k$.
    \item \emph{$P_k'$ is not the sole smallest in $W_k$.} That is, there is some $P' \in W_k'$ s.t. $\sigma(W_k') = \lvert U_{P'} \setminus P' \rvert = \lvert U_{P'} \setminus P_k' \rvert = \sigma(W_k)$. Since $\sigma(W_k'') < \sigma(W_k)$, we know $\sigma(W_{k+1}) = \sigma(W_k' \cup W_k'') = \sigma(W_k') = \sigma(W_k)$. Since $P_k' \notin W_{k+1}$, there is one less smallest residual in $W_{k+1}$, i.e., $\kappa(W_{k+1}) < \kappa(W_k)$. Thus, $W_{k+1} \prec W_k$.
  \end{itemize}
\end{proof}

We split Theorem~\ref{thm:residual-correct} into termination and partial correctness.

\begin{enumerate}[leftmargin=*]
  \item \textbf{Termination.} The straight-line code leading up to the while loop clearly terminates. By the finitude of $U_{P'}$ and Lemma~\ref{lem:residual-worklist-decreases}, the while loop terminates, and it follows that \textsc{FindResidual} terminates.

  \item \textbf{Partial correctness.} First, the early rejection check (lines 3--4) amounts to instantiating $P'$ in the correctness requirement (Equation~\eqref{eq:unified-lifted}) with $P$, i.e.,
  \[
    \forall x.\quad\Lift(P,\,F(x))\;=\;\Lift(P,\,F(\filter_Q(x))).
  \]
  If this check fails---i.e., there exists an $x$ such that filtering by $Q$ changes the lifted outcome---then $Q$ is too strong and no residual can succeed under $\psi$. Thus, \textsc{FindResidual} returns \textsf{false} and $\psi$ violates \final under any $P' \subseteq U_{P'}$.
  Otherwise, \textsc{FindResidual} proceeds. We define property $\mathcal{P}(W_m)$:
  \begin{quote}
    For every state with $W \preceq W_m$ at the beginning of the while loop iteration, if $P' \subseteq U_{P'}$ for all $P' \in W$, and there exist $P_1' \in W$ and $P_2'\subseteq U_{P'}$ s.t. $P_1' \notin \mathsf{Visited}$ and $P_1' \cup P_2'$ satisfies \final under $\psi$, then \textsc{FindResidual}$(\psi, U_{P'}, P)$ returns $(\mathsf{true}, P')$, where $P'$ is the weakest such valid residual in $U_{P'}$.
  \end{quote}
  We prove $\mathcal{P}(W)$ for all worklists $W$ by well-founded induction.

  \paragraph{Base case.} In this case, $W = \varnothing$, so $\mathcal{P}(W) = \mathcal{P}(\varnothing)$ vacuously holds.

  \paragraph{Inductive step.} Assume $\mathcal{P}(W_n)$ for all $\varnothing \preceq W_n \prec W_k$ and consider a state with $W_k$ at the beginning of the loop iteration. Let $P_k' = \textsc{Dequeue}(W_k)$ and $W_k'$ be the remainder of $W_k$. By assumption, we know $P' \subseteq U_{P'}$ for all $P' \in W_k' \subset W_k$.

  \vspace{0.4em}
  If $P_k' \in \mathsf{Visited}$, then the iteration ends with $W_{k+1} = W_k'$. By assumption, there must exist $P_1' \in W_k' = W_{k+1}$ and $P_2' \subseteq U_{P'}$ s.t. $P_1' \notin \mathsf{Visited}$ and $P_1' \cup P_2'$ satisfies \final under $\psi$. And since $W_{k+1} \prec W_k$ by Lemma~\ref{lem:residual-worklist-decreases}, the induction hypothesis $\mathcal{P}(W_{k+1})$ applies, and $\mathcal{P}(W_k)$ follows.

  \vspace{0.4em}
  If $\chi$ (\final) holds, then \textsc{FindResidual} returns $(\mathsf{true}, P_k')$. In this case, $P_1' = P_k' \notin \mathsf{Visited}$ and $P_2' = \varnothing \subseteq U_{P'}$. By assumption, $P' \subseteq U_{P'}$ for all $P' \in W_k$, so $P_k' \subseteq U_{P'}$. And since the worklist is a min-heap, $P_k'$ is the weakest residual in $U_{P'}$ satisfying \final under $\psi$. Thus, $\mathcal{P}(W_k)$ holds.

  \vspace{0.4em}
  At this point, we know that $P_k'$ violates \final under $\psi$, i.e., there exists accumulator states $(a_1, a_2)$ s.t. $\psi(a_1, a_2)$ holds yet the consequent fails. We analyze the three possible failure modes that \textsc{HandleFailure} examines:
  \begin{itemize}[leftmargin=*]
    \item \emph{Mismatch on acceptance.} In this case, $P(a_1)$ and $P_k'(a_2)$ hold while $a_1 \neq a_2$, and \textsc{HandleFailure} discards $P_k'$ without enqueuing anything onto $W_k'$. Since further strengthening $P_k'$ can only flip $P'_k(a_2)$ false, in which case the consequent remains to fail, we know there does not exist any $P_2' \subseteq U_{P'}$ s.t. $P_k' \cup P_2'$ satisfies \final under $\psi$. Then by assumption, there must be a $P_1' \in W_k' = W_{k+1}$ for which such a $P_2'$ exists. And since $W_{k+1} \prec W_k$ by Lemma~\ref{lem:residual-worklist-decreases}, the induction hypothesis $\mathcal{P}(W_{k+1})$ applies, and $\mathcal{P}(W_k)$ follows.

    \item \emph{False rejection.} In this case, $P(a_1)$ holds and $P_k'(a_2)$ fails. Since further strengthening $P_k'$ would not flip $P_k'(a_2)$ true, we know there does not exist any $P_2' \subseteq U_{P'}$ s.t. $P_k' \cup P_2'$ satisfies \final under $\psi$, as in the previous case. Then by the same reasoning, $\mathcal{P}(W_k)$ holds.

    \item \emph{Spurious acceptance.} In this case, $P(a_1)$ fails while $P_k'(a_2)$ holds, meaning that there is a chance of strengthening $P_k'$ for it to eventually satisfy \final under $\psi$. \textsc{HandleFailure} proceeds to enqueue all one-atom strengthenings of $P_k'$ that reject $a_2$.
      
    \vspace{0.4em}
    Suppose there exists $P_2' \subseteq U_{P'}$ s.t. $P_k' \cup P_2'$ satisfies \final under $\psi$. Since $P_k'(a_2)$ holds, we know $P_2'$ must reject $a_2$ by containing some atom $p \in U_{P'} \setminus P_k'$ where $\neg p(a_2)$. By definition, \textsc{HandleFailure} enqueues $P_k' \land p$, so there exist $P_k' \land p \in W_{k+1}$ and $P_2' \setminus \{p\} \subseteq U_{P'}$ s.t. $(P_k' \land p) \cup (P_2' \setminus \{p\}) = P_k' \cup P_2'$ satisfies \final under $\psi$.
    Since $W_{k+1} = W_k' \cup \{P_k' \land p \mid p \in U_{P'} \setminus P_k'\}$ and $P' \subseteq U_{P'}$ for all $P' \in W_k'$, we have $P' \subseteq U_{P'}$ for all $P' \in W_{k+1}$.
    Thus, since $W_{k+1} \prec W_k$ by Lemma~\ref{lem:residual-worklist-decreases}, the induction hypothesis $\mathcal{P}(W_{k+1})$ applies, and $\mathcal{P}(W_k)$ follows.

    \vspace{0.4em}
    Otherwise, there does not exist such a $P_2'$ for $P_k'$. By assumption, there must be a $P_1' \in W_k' \subseteq W_{k+1}$ for which such a $P_2'$ exists. As shown above, $P' \subseteq U_{P'}$ for all $P' \in W_{k+1}$. Thus, since $W_{k+1} \prec W_k$ by Lemma~\ref{lem:residual-worklist-decreases}, the induction hypothesis $\mathcal{P}(W_{k+1})$ applies, and $\mathcal{P}(W_k)$ follows.
  \end{itemize}
    




  \paragraph{Conclusion.} Before the while loop begins, $W = \{\top\}$ and $\mathsf{Visited} = \varnothing$. At this point, we already know that, with $P'$ instantiated to $P$, \final holds under $\psi$, so by the assumption that $U_{P'}$ contains all necessary atoms, we know there exists $\top \in W$ and $P \subseteq U_{P'}$ s.t. $\{\top\} \cup P = P$ satisfies \final under $\psi$. Again, by the assumption that $U_{P'}$ contains all necessary atoms, we know $P' \subseteq U_{P'}$ for all $P' \in W$. Thus, $\mathcal{P}(W)$ applies, so we know \textsc{FindResidual} returns $(\mathsf{true}, P')$, where $P'$ is weakest residual in $U_{P'}$ satisfying \final $\psi$, and the theorem follows.

\end{enumerate}
\end{proof}

\subsection*{Implementation and Correctness of \textsc{FindWeakestImplicant}}


\begin{algorithm}[t]
\caption{\textsc{FindWeakestImplicant}}
\label{alg:find-weakest-implicant}
\begin{algorithmic}[1]
\Statex \textbf{Input:} Predicate universe $U$, formula $\Phi$
\Statex \textbf{Output:} Weakest conjunction $\chi \subseteq U$ such that $\chi \Rightarrow \Phi$
\Function{FindWeakestImplicant}{$U, \Phi$}
\State $U \gets \{p \in U \mid \Phi \Rightarrow p \}$ \Comment{Filter out atoms not entailed by $\Phi$ (Lemma~\ref{lem:prune})}
\State $\chi \gets \top$
\While{$\chi \not\Rightarrow \Phi$}
    \State $\mathcal{I} \gets \mathsf{Model}(\chi \not\Rightarrow \Phi)$
    \State $S \gets \{p \in U \mid \neg p(\mathcal{I})\}$ \Comment{By Lemma~\ref{lem:exist}, $S \neq \varnothing$}
    \State $\chi\gets \chi\land \bigwedge S$ \Comment{By Lemma~\ref{lem:necessity}, each $p \in S$ is required}
\EndWhile
\State \Return $\chi$
\EndFunction
\end{algorithmic}
\end{algorithm}

Our implementation of the \textsc{FindWeakestImplicant} algorithm is shown in Algorithm~\ref{alg:find-weakest-implicant}, whose correctness we prove via a series of lemmas. Given a finite universe of atoms $U$ and a formula $\Phi$, this procedure performs an initial pruning step using static entailment checks to remove any atom $p \in U$ such that $\Phi \not\Rightarrow p$, justified by Lemma~\ref{lem:prune}. Let $U'$ be the pruned universe. Next, starting with $\top$ as the candidate implicant $\chi$, the algorithm uses an SMT solver to iteratively find models $\mathcal{I}$ that falsify $\chi$. In each iteration, it identifies all atoms $p \in U'$ such that $p(\mathcal{I}) = \mathsf{false}$, and adds them conjunctively to $\chi$. Provided that $\bigwedge U \Rightarrow \Phi$ (so we know $\bigwedge U' \Rightarrow \Phi$), Lemma~\ref{lem:exist} guarantees that such atoms exist, and Lemma~\ref{lem:necessity} ensures they are semantically necessary. Again by Lemma~\ref{lem:exist}, the loop is guaranteed to terminate, and it continues until the candidate implicant $\chi$ becomes strong enough to entail $\Phi$ (in the worst case, $\chi$ becomes $U'$), at which point it is returned as the weakest valid implicant expressible over $U$.


\begin{lemma}
\label{lem:exist}
Let $U$ be a finite set of atomic predicates such that $\bigwedge U \Rightarrow \Phi$. If an interpretation $\mathcal{I}$ falsifies $\Phi$, then there exists some $p \in U$ such that $p(\mathcal{I}) = \textsf{false}$.
\end{lemma}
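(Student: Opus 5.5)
The plan is to argue by contraposition, using only the semantics of conjunction. The hypothesis $\bigwedge U \Rightarrow \Phi$ means that every interpretation satisfying all atoms of $U$ also satisfies $\Phi$. So we fix an interpretation $\mathcal{I}$ with $\Phi(\mathcal{I}) = \textsf{false}$ and suppose, for contradiction, that $p(\mathcal{I}) = \textsf{true}$ for every $p \in U$.

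The key steps, in order, are these. First, by the semantics of finite conjunction (recall $U$ is finite, so $\bigwedge U$ is a well-formed formula), $\mathcal{I}$ satisfies $\bigwedge U$. Second, we instantiate the validity of the implication $\bigwedge U \Rightarrow \Phi$ at $\mathcal{I}$, which gives $\Phi(\mathcal{I}) = \textsf{true}$. This contradicts the choice of $\mathcal{I}$. Hence some $p \in U$ must have $p(\mathcal{I}) = \textsf{false}$.

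Two small points of bookkeeping remain, and neither is a real obstacle.

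\begin{itemize}
\item \emph{Degenerate case $U = \varnothing$.} Here $\bigwedge U = \top$, so the hypothesis forces $\Phi$ to be valid. Then no falsifying $\mathcal{I}$ exists, and the lemma holds vacuously.
\item \emph{How the lemma is used in Algorithm~\ref{alg:find-weakest-implicant}.} The model $\mathcal{I}$ there is a model of $\chi \land \neg\Phi$, which in particular falsifies $\Phi$. The algorithm applies the lemma to the pruned universe $U'$ rather than $U$. For this we note that $\bigwedge U' \Rightarrow \Phi$ still holds, as claimed in the surrounding text (each removed atom is entailed by nothing needed). The lemma itself, however, only requires its own stated hypothesis for whichever universe it is applied to.
\end{itemize}
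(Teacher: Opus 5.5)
Your proof is correct and is exactly the paper's argument: assuming every $p \in U$ holds at $\mathcal{I}$ gives $\mathcal{I} \models \bigwedge U$, hence $\mathcal{I} \models \Phi$, contradicting the choice of $\mathcal{I}$. Your second bullet is not needed for the lemma, and you should not rely on its claim that pruning preserves $\bigwedge U' \Rightarrow \Phi$, which fails in general (even though the surrounding text asserts it): for $\Phi = p_1 \lor p_2$ and $U = \{p_1, p_2\}$, both atoms are pruned and $\bigwedge U' = \top$.
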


\begin{proof}
Assume for contradiction that $p(\mathcal{I}) = \mathsf{true}$ for all $p \in U$. Then $\mathcal{I} \models \bigwedge U$, and since $\bigwedge U \Rightarrow \Phi$, we have $\mathcal{I} \models \Phi$, contradicting the assumption that $\mathcal{I} \not\models \Phi$.
\end{proof}

\begin{lemma}
\label{lem:necessity}
Let $U$ be a finite set of atomic predicates such that $\Phi \Rightarrow p$ for all $p \in U$, and let $\Psi$ be any formula such that $\Psi \Rightarrow \Phi$. Then $\Psi \Rightarrow p$ for all $p \in U$.
\end{lemma}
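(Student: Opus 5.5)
The plan is to argue directly from transitivity of entailment, treating each $p \in U$ separately. Fix an arbitrary $p \in U$; the aim is to show $\Psi \Rightarrow p$.

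The first step is to unfold the two hypotheses semantically. The hypothesis $\Psi \Rightarrow \Phi$ says that every interpretation $\mathcal{I}$ with $\mathcal{I} \models \Psi$ also satisfies $\mathcal{I} \models \Phi$. The hypothesis on $U$ (which is exactly the invariant established by the pruning step on line~2 of Algorithm~\ref{alg:find-weakest-implicant}) says that every $\mathcal{I} \models \Phi$ satisfies $p(\mathcal{I}) = \mathsf{true}$.

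Chaining these two facts, any $\mathcal{I} \models \Psi$ satisfies $\Phi$ and hence $p$, so $\Psi \Rightarrow p$. Since $p$ was arbitrary, the conclusion holds for all $p \in U$, and consequently $\Psi \Rightarrow \bigwedge U$ as well.

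No step is a genuine obstacle; the only care needed is to read ``$\Rightarrow$'' uniformly as validity of the implication over the free accumulator variables $(a_1,a_2)$, so that transitivity applies. I would also add a remark on the role of the lemma. Instantiating $\Psi$ with any implicant of $\Phi$, for example any certification witness $\psi$ (which satisfies $\psi \Rightarrow \phi$ via \final with $P' = P$), shows that every atom retained in the pruned universe $U'$ is entailed by every such implicant. This is what justifies calling each atom added in line~7 ``required''.
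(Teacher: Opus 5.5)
Your proof is correct and is the same argument the paper gives: fix $p \in U$, then chain $\Psi \Rightarrow \Phi$ with $\Phi \Rightarrow p$ by transitivity of implication. Your closing remark also holds, provided the certification witness is read as one for $(Q,P)$ itself, since that is the case in which \final becomes exactly $\psi \Rightarrow \phi$.
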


\begin{proof}
Fix any $p \in U$. By assumption, $\Phi \Rightarrow p$ and $\Psi \Rightarrow \Phi$. By transitivity of implication, $\Psi \Rightarrow p$.
\end{proof}

\begin{lemma}
\label{lem:prune}
If $\Phi \not\Rightarrow p$, then $p$ is not included in the weakest implicant of $\Phi$ over $U$.
\end{lemma}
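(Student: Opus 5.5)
The plan is to argue by contraposition, directly from the definition of the weakest implicant of $\Phi$ over $U$ that the paper uses in the proof of Theorem~\ref{thm:refine-bounds-correct}:
\[
  \chi^* \;=\; \bigcap\{\, S \subseteq U \mid \textstyle\bigwedge S \Rightarrow \Phi \,\}.
\]
Here I identify a conjunction with its set of atoms, so a larger set means a stronger formula. It suffices to exhibit a single implicant $S \subseteq U$ of $\Phi$ with $p \notin S$: then $p$ cannot lie in the intersection $\chi^*$.

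The key step is choosing $S$. I take $S = \{\, q \in U \mid \Phi \Rightarrow q \,\}$, the pruned universe $U'$ of Algorithm~\ref{alg:find-weakest-implicant}. By hypothesis $\Phi \not\Rightarrow p$, so $p \notin S$ by construction. It then remains to show $\bigwedge S \Rightarrow \Phi$.

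This is the main obstacle. It does not follow from pure logic, since dropping atoms from an implicant can destroy the implication. I plan to rely on the standing assumption, used in the proofs of Theorems~\ref{thm:refine-bounds-correct} and~\ref{thm:residual-correct} and in the discussion of Algorithm~\ref{alg:find-weakest-implicant}, that ``$U$ contains all necessary atoms''. I would read this concretely as: $\bigwedge U' \Rightarrow \Phi$, i.e.\ $\Phi$ is equivalent to the conjunction of the atoms of $U$ it entails. With $\Phi \Rightarrow \bigwedge U'$ holding trivially, $U'$ is itself an implicant of $\Phi$, so $S$ is admissible. Hence $\chi^* \subseteq S$ and $p \notin \chi^*$.

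As a sanity check, I would also phrase the conclusion in the semantic reading where the weakest implicant is taken as a formula rather than a set. Any implicant $\chi$ must satisfy $\chi \Rightarrow \Phi$. The weakest one is maximal under implication among conjunctions over $U$. Since $\bigwedge U'$ is an implicant and is entailed by every implicant that contains only atoms entailed by $\Phi$, $\bigwedge U'$ is at least as weak as any implicant containing $p$. So $p$ can be removed without losing the implicant property.

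If the ``necessary atoms'' assumption cannot be invoked in this form, the fallback is to state it explicitly as a hypothesis of the lemma. This is exactly the hypothesis $\bigwedge U' \Rightarrow \Phi$ that the surrounding text already imposes for Lemma~\ref{lem:exist}.
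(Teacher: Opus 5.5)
Your argument is correct once $\bigwedge U'\Rightarrow\Phi$ is taken as a hypothesis, and it follows a different route from the paper's. The paper does not use a single global implicant. For the given $p$, it picks a model $\mathcal{M}$ of $\Phi\wedge\neg p$, forms $R=\bigwedge\{q\in U\mid \mathcal{M}\models q\}$, and argues that $R$ is an implicant with $R\not\models p$. The latter is immediate since $\mathcal{M}\models R$, and it handles the semantic reading of ``weakest implicant'' directly.

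The paper's step $R\models\Phi$ is asserted without justification, and it is exactly where the extra assumption you flagged is hiding. For $U=\{p\}$ and $\Phi=p\vee s$ with $s$ independent of $p$, one gets $R=\top$. The lemma is in fact false there, even though $\bigwedge U\Rightarrow\Phi$. So both proofs need a representability hypothesis.

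Yours ($\Phi\equiv\bigwedge U'$) is strictly stronger than what the paper's construction uses. The paper only needs that the $U$-atoms true in any model of $\Phi$ jointly entail $\Phi$. That holds for $\Phi=q\vee r$ with $U=\{q,r\}$, although there $U'=\varnothing$. In exchange, your proof is a one-liner and covers both the intersection and the semantic definitions at once, since $\bigwedge U'\equiv\Phi\not\Rightarrow p$. The price is that under your hypothesis the lemma is nearly tautological: the semantic weakest implicant is then exactly $U'$. The paper's per-model argument also applies to disjunctive $\Phi$.

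Two small corrections.

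\emph{Source of the hypothesis.} Do not present $\bigwedge U'\Rightarrow\Phi$ as something the surrounding text already provides. The text assumes $\bigwedge U\Rightarrow\Phi$ and claims $\bigwedge U'\Rightarrow\Phi$ follows, but that inference fails (again $U=\{q,r\}$, $\Phi=q\vee r$). Moreover, it is what the pruning step, itself justified by this lemma, is supposed to deliver, so relying on it would be circular. State it outright as a hypothesis of the lemma, as your fallback suggests.

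\emph{The sanity check.} The reason $\bigwedge U'$ is weakest is simply that every implicant entails $\Phi$, and hence every atom of $U'$. Your stated justification, via implicants ``containing only atoms entailed by $\Phi$,'' does not apply to implicants containing $p$. It is also not true that $p$ can be deleted from an arbitrary implicant containing it: take $S=\{p\}$ with $p\Rightarrow\Phi$ for a non-valid $\Phi$.
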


\begin{proof}
By definition, the weakest implicant $\Psi$ is the conjunction of exactly those $q\in U$ that appear in \emph{every} implicant of $\Phi$.  Let
\[
  C \;=\; \{\,q\in U \mid \forall\text{ implicants }R,\;R\models q\},
\]
so $\Psi = \bigwedge_{q\in C}q$.  We show $p\notin C$.

Since $\Phi\not\models p$, there exists a model $\mathcal{M}$ with $\mathcal{M}\models\Phi$ but $\mathcal{M}\not\models p$.  Define
\[
  R \;=\; \bigwedge\,\{\,q\in U \mid \mathcal{M}\models q\}.
\]
Then $R\models\Phi$ (so $R$ is an implicant of $\Phi$) and $R\not\models p$.  Hence, $p$ does not appear in every implicant, i.e., $p\notin C$, as required.
\end{proof}

\section{BMC for Identifying Pushdown Opportunities}\label{sec:bmc}

As discussed in Section~\ref{sec:eval}, we filter benchmarks with potential pushdown opportunities using  bounded model checking. In this appendix, we illustrate how this check is performed. Consider a symbolic dataframe containing two rows ($r_1, r_2$), and let a UDF $F = \mathsf{fold}(x, I, f)$. Given a candidate input filter $Q$, we construct the following symbolic program fragment representing the execution of the UDF on this bounded dataframe:

\begin{lstlisting}[style=pythonstyle,language=Haskell,frame=none,mathescape=true]
a11 = f(I, r1); a12 = f(a11, r2);
if Q(r1) then a21 = f(I, r1) else a21 = I;
if Q(r2) then a22 = f(a21, r2) else a22 = a21;
assert(P(a12) == P(a22) $\land$ (P(a12) $\Rightarrow$ a12 == a22));
\end{lstlisting}

This assertion encodes precisely the correctness condition required for predicate pushdown: it holds if and only if pre-filtering the bounded dataframe using $Q$ preserves the semantics of the original pipeline.

To check feasibility, we computed the weakest precondition (WP) $\phi$ of the above fragment with respect to the assertion, parameterized by the unknown predicate $Q$. To ensure $Q$ is nontrivial (i.e., not equivalent to \textsf{true}), we pose the following query to the SMT solver:
\[
\phi \land \exists r.\,\neg Q(r)
\]
If this query is satisfiable, it indicates that a meaningful predicate $Q$ may exist, warranting further manual verification. Conversely, unsatisfiability conclusively rules out any nontrivial pushdown opportunity, as it demonstrates that no predicate other than the trivial \textsf{true} predicate can satisfy correctness constraints, even in this minimal bounded scenario.

\section{Implementation Details}\label{sec:impl-details}

This appendix elaborates on the beginning of Section~\ref{sec:eval} and provides more information on the implementation of \toolname.

\subsection*{Domain-Specific Language}

Figure~\ref{fig:dsl} shows the grammar of our Python-like, framework-agnostic DSL. We provide a \texttt{fold} construct that takes in a lambda abstraction to express UDFs. The schema for the input dataframe is defined as a tuple of types, and the initializer for a UDF is defined as a tuple of expressions. For UDFs outputting a single scalar value, we express its initializer and accumulator as a 1-tuple. To model null/missing/sentinel values---common in real-world pipelines---the DSL includes an \texttt{Optional} type along with a \texttt{match} construct for case analysis, both mirroring Python's syntax. When compiling programs in the DSL into SMT-LIB, the input schema and UDF accumulator types are encoded as algebraic datatypes.

\begin{figure}[t]
  \small
  \begingroup
  \setlength{\jot}{2pt}
  \renewcommand{\arraystretch}{0.9}
  \[
    \begin{array}{l@{\ }l}
    \langle\mathit{program}\rangle
      &::=\ \langle\mathit{stmts}\rangle\;\texttt{EOF}\\
    \langle\mathit{stmts}\rangle
      &::=\ \langle\mathit{stmt}\rangle\;\langle\mathit{stmts}\rangle
      \;\mid\;\langle\mathit{stmt}\rangle\\
    \langle\mathit{stmt}\rangle
      &::=\ \texttt{IDENT}\;\texttt{=}\;\langle\mathit{expr}\rangle
      \;\mid\;\texttt{IDENT}\;\texttt{:}\;\texttt{(}\langle\mathit{types}\rangle\texttt{) =}\ \;\langle\mathit{expr}\rangle\\
    \langle\mathit{expr}\rangle
      &::=\ \langle\mathit{simple\_expr}\rangle\\
      &\mid\ \langle\mathit{expr}\rangle\;\langle\mathit{binop}\rangle\;\langle\mathit{expr}\rangle\\
      &\mid\ \texttt{not}\;\langle\mathit{expr}\rangle\\
      &\mid\ \texttt{(}\langle\mathit{types}\rangle\texttt{)}\\
      &\mid\ \texttt{(}\langle\mathit{exprs}\rangle\texttt{)}\\
      &\mid\ \texttt{[}\langle\mathit{exprs}\rangle\texttt{]}\\
      &\mid\ \langle\mathit{expr}\rangle\;\texttt{[}\texttt{INT}\texttt{]}\\
      &\mid\ \langle\mathit{expr}\rangle\;\texttt{[}\texttt{INT}\texttt{:]}\\
      &\mid\ \texttt{insert}\;\texttt{(}\langle\mathit{expr}\rangle\;\texttt{,}\;\langle\mathit{expr}\rangle\texttt{)}\\
      &\mid\ \langle\mathit{expr}\rangle\;\texttt{if}\;\langle\mathit{expr}\rangle\;\texttt{else}\;\langle\mathit{expr}\rangle\\
      &\mid\ \texttt{match}\;\langle\mathit{expr}\rangle\;\texttt{:}\;\langle\mathit{cases}\rangle\\
      &\mid\ \texttt{fold}\;\texttt{(}\langle\mathit{expr}\rangle\;\texttt{,}\;\langle\mathit{expr}\rangle\;\texttt{,}\;\langle\mathit{lambda}\rangle\texttt{)}\\
      &\mid\ \texttt{filter}\;\texttt{(}\langle\mathit{expr}\rangle\;\texttt{,}\;\langle\mathit{lambda}\rangle\texttt{)}\\
    \langle\mathit{simple\_expr}\rangle
      &::=\ \langle\mathit{value}\rangle
      \;\mid\;\texttt{(}\langle\mathit{expr}\rangle\texttt{)}\\
    \langle\mathit{binop}\rangle
      &::=\ \texttt{+}
      \;\mid\;\texttt{-}
      \;\mid\;\texttt{*}
      \;\mid\;\texttt{/}
      \;\mid\;\texttt{==}
      \;\mid\;\texttt{>=}
      \;\mid\;\texttt{>}
      \;\mid\;\texttt{<=}
      \;\mid\;\texttt{<}
      \;\mid\;\texttt{and}
      \;\mid\;\texttt{or}\\
    \langle\mathit{value}\rangle
      &::=\ \texttt{BOOL}
      \;\mid\;\texttt{INT}
      \;\mid\;\texttt{FLOAT}
      \;\mid\;\texttt{IDENT}
      \;\mid\;\texttt{None}\\
    \langle\mathit{cases}\rangle
      &::=\ \langle\mathit{case}\rangle\;\langle\mathit{cases}\rangle
      \;\mid\;\langle\mathit{case}\rangle\\
    \langle\mathit{case}\rangle
      &::=\ \texttt{case}\;\langle\mathit{value}\rangle\;\texttt{:}\;\langle\mathit{expr}\rangle\\
    \langle\mathit{types}\rangle
      &::=\ \langle\mathit{type}\rangle\;\texttt{,}\;\langle\mathit{types}\rangle
      \;\mid\;\langle\mathit{type}\rangle\;\texttt{,}\\
    \langle\mathit{type}\rangle
      &::=\ \texttt{bool}
      \;\mid\;\texttt{int}
      \;\mid\;\texttt{float}
      \;\mid\;\texttt{List[}\langle\mathit{type}\rangle\texttt{]}
      \;\mid\;\texttt{Optional[}\langle\mathit{type}\rangle\texttt{]}\\
    \langle\mathit{exprs}\rangle
      &::=\ \langle\mathit{expr}\rangle\;\texttt{,}\;\langle\mathit{exprs}\rangle
      \;\mid\;\langle\mathit{expr}\rangle
      \;\mid\;\varepsilon\\
    \langle\mathit{lambda}\rangle
      &::=\ \texttt{lambda}\;\langle\mathit{params}\rangle\;\texttt{:}\;\langle\mathit{expr}\rangle
      \;\mid\;\texttt{fix}\;\langle\mathit{params}\rangle\;\texttt{:}\;\langle\mathit{expr}\rangle\\
    \langle\mathit{params}\rangle
      &::=\ \texttt{IDENT}\;\texttt{,}\;\langle\mathit{params}\rangle
      \;\mid\;\texttt{IDENT}
      \;\mid\;\varepsilon\\
    \end{array}
  \]
  \endgroup
  \caption{DSL syntax.}
  \label{fig:dsl}
\end{figure}

\subsection*{Constructing Predicate Universes}

To construct the finite predicate universes, we first perform data-flow analyses over the UDF accumulator function $f$, the initializer $I$, and the post-UDF filter $P$ to infer the following:

\begin{itemize}[leftmargin=*]
  \item \emph{Dependencies between input columns and sub-aggregations and those between sub-aggregations.} Concretely, we infer a relation $\mathsf{IsDep}(i, j)$ that indicates the involvement of the intermediate state of sub-aggregation $i$ in the computation of sub-aggregation $j$. In particular, $\forall i.\ \mathsf{IsDep}(i, i)$. In addition, we infer a predicate $\mathsf{IsDepRow}(i)$ to indicate whether sub-aggregation $i$ depends on the current input tuple.

  \item \emph{The monotonicity (or lack thereof) and the lower or upper bound of each sub-aggregation (if monotonic).} Concretely, we compute $\mathsf{IsMono}(i)\in \{\mathsf{Inc}(\beta_{lo}),\mathsf{Dec}(\beta_{hi}),\mathsf{None}\}$ that indicates the monotonicity of sub-aggregation $i$ and its lower-bound (resp. upper-bound) if it monotonically increases (resp. decreases).
\end{itemize}

To construct $U_Q$, we first convert postcondition $P$ and if conditions in the UDF body into conjunctive normal form (CNF) and then extract conjuncts from $P$ as $U_{Q,P}$ and those from the if conditions as $U_{Q,f}$. Doing so allows our synthesis algorithm to make the most incremental decisions when dropping conjuncts before it finds the strongest valid $Q$. For conjuncts from $P$ that involve equalities, we convert an equality into $\ge$ (resp. $\le$) if the corresponding sub-aggregation monotonically increases (resp. decreases). With the collected conjuncts, we leverage the dependency information between input columns and sub-aggregations to substitute all accesses to fields of the UDF output tuple $a$ with accesses to fields of the input tuple $r$ that contribute to the corresponding sub-aggregation, which we informally denote as $U_{Q,P}' = U_{Q,P}[r/a]$ and $U_{Q,f}' = U_{Q,f}[r/a]$. Finally, we construct $U_Q$ as follows:
\[
  U_Q = U_{Q,P}' \cup \bigl\{\bigvee U_{Q,P}'\bigr\} \cup \bigl\{\bigvee U_{Q,f}'\bigr\} \cup \bigl\{\bigvee (U_{Q,P}' \cup U_{Q,f}') \mid U_{Q,P}' \neq \varnothing, U_{Q,f}' \neq \varnothing \bigr\}.
\]

To construct $U_{P'}$, we first convert $P$ into CNF and extract its conjuncts as $U_{P',P}$. Notably, we convert \texttt{match} not into an \texttt{ite} but instead a conjunction of implications: $(\mathsf{IsSome}(...) \Rightarrow ...) \land (\mathsf{IsNone}(...) \Rightarrow ...)$. Next, we construct a set of formulas $U_{P',I} = \{a_i \neq I_i \mid 0 \le i < |I|\}$. Finally, $U_{P'} = U_{P',P} \cup U_{P',I}$.

To construct $U_\bisim$, we observe that while bisimulation invariants generally involve disjunctions, they can be formulated as a conjunction of implications of the form $(\varphi_{k,i} \mid \neg \varphi_{k,i}) \Rightarrow (l \mid \circ)$, where $k$ is either $1$ or $2$ (corresponding to the execution before and after pushdown, respectively), $0 \le i < |I|$ indexes the sub-aggregations, and $l$ is a leaf formula in $\mathcal{L}_{k,i}$. $\circ$ refers to the implication itself, so the consequent may be a leaf formula or another such implication. For each pair of execution $k$ and sub-aggregation $i$, $\mathcal{L}_{k,i}$ includes a fixed set of atomic formulas, with $\mathcal{L}_{1,i} = \{\bot, a_{1,i} = a_{2,i}\}$ and $\mathcal{L}_{2,i} = \{\bot, a_{1,i} = a_{2,i}, a_{2,i} = I_i, a_{2,i} \neq I_i\}$, where $I$ is the initializer. Finally,
\[
  \varphi_{k,i} =
  \begin{cases}
     a_{k,i} \ge \beta_{lo}, & \text{if }\mathsf{IsMono}(i) = \mathsf{Inc}(\beta_{lo}) \text{ and } \neg \mathsf{IsDepRow}(i),\\
     a_{k,i} \le \beta_{hi}, & \text{if }\mathsf{IsMono}(i) = \mathsf{Dec}(\beta_{hi}) \text{ and } \neg \mathsf{IsDepRow}(i),\\
     a_{k,i} \ge c_{k,i},    & \text{if }\mathsf{IsMono}(i) = \mathsf{Inc}(\_) \text{, }\mathsf{IsDepRow}(i) \text{, and }P_{k,i} \text{ involves equality,} \\
     a_{k,i} \le c_{k,i},    & \text{if }\mathsf{IsMono}(i) = \mathsf{Dec}(\_) \text{, }\mathsf{IsDepRow}(i) \text{, and }P_{k,i} \text{ involves equality,} \\
     P_{k,i},                & \text{otherwise}.
  \end{cases}
\]
where $P_{k,i}$ is an atomic formula of $P$ involving $a_{k,i}$ and $c_{k,i}$ is the constant in $P_{k,i}$.

Intuitively, since the optimized execution processes a subset of the rows processed by the original execution, predicates on $a_2$ help constrain the behavior of the original execution, i.e., what values $a_1$ can take on. For this reason, we draw implications from predicates involving $a_2$ to formulas involving $a_1$, but not the opposite, to reduce the number of implications not relevant to establishing an optimal bisimulation invariant. Hence, we construct $U_\bisim$ as follows:
\[
  U_\bisim = \bigcup_{0 \le i < |I|}{\bigl(\{a_{1,i} = a_{2,i}\} \cup U_{\bisim,1,i} \cup U_{\bisim,2,i} \cup U_{\bisim,3,i}\bigr)}
\]
where
\begin{itemize}[leftmargin=*]
    \item $U_{\bisim,1,i} = \{\varphi_{1,i} \Rightarrow l, \neg \varphi_{1,i} \Rightarrow l \mid l \in \mathcal{L}_{1,i}\}$
    \item $U_{\bisim,2,i} = \{\varphi_{2,i} \Rightarrow l, \neg \varphi_{2,i} \Rightarrow l \mid l \in \mathcal{L}_{2,j}, 0 \le j < |I|, \mathsf{IsDep}(i, j) \}$
    \item $U_{\bisim,3,i} = \{\varphi_{2,i} \Rightarrow l, \neg \varphi_{2,i} \Rightarrow l \mid l \in U_{\bisim,1,j}, 0 \le j < |I|, j \neq i, \mathsf{IsDep}(j, i) \}$.
\end{itemize}

\section{Predicate Pushdown Case Studies}\label{sec:case-studies}

We present two case studies that illustrate the kind of pushdown optimizations synthesized by \toolname. Both UDFs are drawn from real-world data-processing applications and require nontrivial bisimulation invariants.

\subsection*{Case Study A: Event Aggregation}

This UDF, adapted from a financial-analytics streaming application\footnote{\url{https://github.com/Nemchinovrp/Fintech-Trading}} that tracks per-ticker trading activity in real time, tallies events by action type and tracks the temporal extent of each ticker group.
The original Spark pipeline (abridged for brevity) is shown in Figure~\ref{fig:eventagg-spark}: a custom \texttt{UserDefinedAggregateFunction} maintains two action counters (\texttt{time} and \texttt{price}) and the running minimum and maximum timestamps (\texttt{first}, \texttt{last}), initialized to zero and \texttt{null} respectively.
After grouping by ticker and applying the UDF, a post-aggregation filter selects groups with sufficient activity whose event window satisfies specific boundary conditions. This pattern is typical of compliance queries that identify instruments actively traded during particular regulatory periods or at known market-event dates.
The filter combines range checks with exact-match exceptions (e.g., tickers whose earliest event predates a regulatory deadline \emph{or} falls on a specific audit date), which makes pushdown nontrivial because the pre-filter must preserve rows relevant to both disjuncts simultaneously.
\begin{center}
\begin{minipage}{0.92\linewidth}
\begin{minted}[fontsize=\footnotesize, frame=lines, framesep=2mm]{scala}
class EventAggregationUdf extends UserDefinedAggregateFunction {
  // input:  (action: String, ts: Timestamp)
  // buffer: (time: Long, price: Long, first: Timestamp, last: Timestamp)
  def initialize(buf: MutableAggregationBuffer) = { buf(0) = 0L; buf(1) = 0L }
  def update(buf: MutableAggregationBuffer, in: Row) = {
    if (in(0) == "time") { buf(0) = buf.getLong(0) + 1L }
    else if (in(0) == "price") { buf(1) = buf.getLong(1) + 1L }
    val ts = in.getTimestamp(1)
    if (!buf.isNullAt(2) && !buf.isNullAt(3)) {
      if (ts.before(buf.getTimestamp(2))) buf(2) = ts
      if (ts.after(buf.getTimestamp(3)))  buf(3) = ts
    } else { buf(2) = ts; buf(3) = ts }
  }
  // ...
}
val Seq(tL, tM, tN, tR) =
  Seq("1980-07-30", "1990-07-30", "1995-07-30", "2001-07-30").map(lit(_).cast("timestamp"))
val udaf = new EventAggregationUdf()
val agg = df.groupBy($"ticker").agg(udaf($"action", $"ts").as("a")).select($"a.*")
val result = agg.filter(
  $"time" > 0 && $"price" > 5 && $"price" <= 18
  && !$"first".isNull && ($"first" === tM || $"first" <= tL)
  && !$"last".isNull  && ($"last" > tR   || $"last" === tN))
\end{minted}
\vspace{-3mm}
\captionof{figure}{Spark pipeline for the Event Aggregation UDF (Case Study~A).}
\label{fig:eventagg-spark}
\end{minipage}
\end{center}

Given the post-UDF filter
\begin{align*}
P(a) ={} & a.\mathsf{time} > 0 \;\wedge\; 5 < a.\mathsf{price} \le 18\\
& {}\wedge\; \neg\,a.\mathsf{first.isNull} \;\wedge\; (a.\mathsf{first} = t_M \vee a.\mathsf{first} \le t_L)\\
& {}\wedge\; \neg\,a.\mathsf{last.isNull} \;\wedge\; (a.\mathsf{last} > t_R \vee a.\mathsf{last} = t_N),
\shortintertext{where $t_L, t_M, t_N, t_R$ are the timestamp constants from Figure~\ref{fig:eventagg-spark}, \toolname synthesizes the split decomposition:}
Q(r) ={} & r.\mathsf{action} = \texttt{"time"} \;\vee\; r.\mathsf{action} = \texttt{"price"}\\
& {}\vee\; r.\mathsf{ts} \le t_M \;\vee\; r.\mathsf{ts} \ge t_N,\\
P'(a) ={} & a.\mathsf{time} \neq 0 \;\wedge\; 5 < a.\mathsf{price} \le 18\\
& {}\wedge\; (a.\mathsf{first} = t_M \vee a.\mathsf{first} \le t_L)\\
& {}\wedge\; (a.\mathsf{last} > t_R \vee a.\mathsf{last} = t_N).
\end{align*}
The pre-filter $Q$ removes only rows whose action is irrelevant ($r.\mathsf{action} \notin \{\texttt{"time"}, \texttt{"price"}\}$) \emph{and} whose timestamp falls strictly between $t_M$ and $t_N$ (i.e., the years 1990--1995), since such rows can affect neither the action counts nor the min/max timestamps in a filter-relevant way.
The residual $P'$ weakens $P$ in two ways: first, $a.\mathsf{time} > 0$ is relaxed to $a.\mathsf{time} \neq 0$, since $5 < a.\mathsf{price} \le 18$ already requires that at least one row be processed; second, $P'$ drops both $\mathsf{isNull}$ guards, because the remaining conjuncts already require at least one row to have been processed, and $\mathsf{first}$ and $\mathsf{last}$ are always initialized together on the first row, so both are necessarily non-null.

To reason about the correctness of this pushdown optimization, a data engineer would need to establish to themselves the following relationships between the original execution (with accumulator state $a_1$) and optimized execution (with accumulator state $a_2$):

\begin{enumerate}[label=\textbf{(\roman*)},nosep,leftmargin=*]

\item \textbf{Counter synchronization.}
At each step, the time count has to agree between the two executions ($a_1.\mathsf{time} = a_2.\mathsf{time}$), so does the price count ($a_1.\mathsf{price} = a_2.\mathsf{price}$), because $Q$ retains every row whose action is \texttt{"time"} or \texttt{"price"}, and only these actions increment the respective counters. Further, time counts must be non-negative, since the UDAF only increments from zero.

\item \textbf{Null synchronization.}
Since the optimized run folds over a subset of the rows seen by the original, a field that the original execution never initialized cannot have been initialized by the optimized run either:
\begin{itemize}[nosep, leftmargin=1.5em]
\item \emph{$a_1.\mathsf{first}$ is null.} Then $a_2.\mathsf{first}$ must also be null.
\item \emph{$a_1.\mathsf{last}$ is null.} Then $a_2.\mathsf{last}$ must also be null.
\end{itemize}

\item \textbf{Boundary synchronization for $\mathsf{first}$.}
$Q$ retains every row with $r.\mathsf{ts} \le t_M$, so when the minimum timestamp falls in this range, the row that achieves the minimum is available to both executions, forcing agreement on $\mathsf{first}$.
\begin{itemize}[nosep, leftmargin=1.5em]
\item \emph{$a_1.\mathsf{first}$ is non-null and ${\le}\, t_M$.} The minimum-achieving row must have $r.\mathsf{ts} \le t_M$ and is therefore retained by $Q$, so the two executions must agree on $\mathsf{first}$ ($a_1.\mathsf{first} = a_2.\mathsf{first}$). Otherwise, either $a_1.\mathsf{first}$ is null or $a_1.\mathsf{first} > t_M$.
\item \emph{$a_2.\mathsf{first}$ is non-null and ${\le}\, t_M$.} Since the original execution sees a superset, $a_1.\mathsf{first} \le a_2.\mathsf{first} \le t_M$ must hold, so the row achieving the original's minimum has $r.\mathsf{ts} \le t_M$ and is therefore retained by $Q$, meaning the optimized run must also see it, giving $a_2.\mathsf{first} \le a_1.\mathsf{first}$. The two executions therefore must agree ($a_1.\mathsf{first} = a_2.\mathsf{first}$). Otherwise, either $a_2.\mathsf{first}$ is null or $a_2.\mathsf{first} > t_M$.
\end{itemize}

\item \textbf{Boundary synchronization for $\mathsf{last}$.}
Symmetrically, $Q$ retains every row with $r.\mathsf{ts} \ge t_N$, so when the maximum timestamp falls in this range, the row achieving the maximum is seen by both executions, forcing agreement on $\mathsf{last}$.
\begin{itemize}[nosep, leftmargin=1.5em]
\item \emph{$a_1.\mathsf{last}$ is non-null and ${\ge}\, t_N$.} The maximum-achieving row must have $r.\mathsf{ts} \ge t_N$ and is therefore retained by $Q$, so the two executions must agree on $\mathsf{last}$ ($a_1.\mathsf{last} = a_2.\mathsf{last}$). Otherwise, either $a_1.\mathsf{last}$ is null or $a_1.\mathsf{last} < t_N$.
\item \emph{$a_2.\mathsf{last}$ is non-null and ${\ge}\, t_N$.} Since the original run sees a superset, $a_1.\mathsf{last} \ge a_2.\mathsf{last} \ge t_N$ must hold, so the row achieving the original's maximum has $r.\mathsf{ts} \ge t_N$ and is therefore retained by $Q$, meaning the optimized run must also see it, giving $a_2.\mathsf{last} \ge a_1.\mathsf{last}$. The two executions therefore must agree ($a_1.\mathsf{last} = a_2.\mathsf{last}$). Otherwise, either $a_2.\mathsf{last}$ is null or $a_2.\mathsf{last} < t_N$.
\end{itemize}

\item \textbf{Co-initialization, optimized to original.}
Because $\mathsf{first}$ and $\mathsf{last}$ are co-initialized on the first row processed, a non-null timestamp in $a_2$ means the optimized execution processed at least one row; since the original run sees a superset, it too must have processed at least one row, so co-initialization guarantees that both $a_1.\mathsf{first}$ and $a_1.\mathsf{last}$ are non-null.
\begin{itemize}[nosep, leftmargin=1.5em]
\item \emph{$a_2.\mathsf{first}$ is non-null.} The optimized execution must have processed at least one row, so $a_2.\mathsf{last}$ is non-null by co-initialization. Since the original run sees a superset, it too must have processed at least one row, so $a_1.\mathsf{last}$ is non-null. Otherwise, $a_2.\mathsf{first}$ is null.
\item \emph{$a_2.\mathsf{last}$ is non-null.} By the same reasoning, the optimized execution processed at least one row, giving $a_2.\mathsf{first}$ as non-null. The original run sees a superset and must also have processed at least one row, so $a_1.\mathsf{first}$ is non-null. Otherwise, $a_2.\mathsf{last}$ is null.
\end{itemize}
The crossed fields ($\mathsf{first} \Rightarrow \mathsf{last}$ and $\mathsf{last} \Rightarrow \mathsf{first}$) are essential: co-initialization holding across the two executions is integral to correctness, which is also what ensures that the boundary synchronization in~(iii) and~(iv) can assume both fields are defined.

\item \textbf{Co-initialization, original to optimized.}
When an extremal timestamp in $a_1$ falls at or beyond a $Q$ boundary ($a_1.\mathsf{first} \le t_M$ or $a_1.\mathsf{last} \ge t_N$), the input row achieving that extremum must itself satisfy $Q$, so the optimized execution processes at least one row, and co-initialization forces the cross-field counterpart to be non-null as well.
\begin{itemize}[nosep, leftmargin=1.5em]
\item \emph{$a_1.\mathsf{first}$ is non-null and ${\le}\, t_M$.} The minimum-achieving row must have $r.\mathsf{ts} \le t_M$, so $Q$ has to retain it; the optimized execution therefore processes at least this row, making $a_2.\mathsf{first}$ non-null, and it must in turn be the case that $a_2.\mathsf{last}$ is non-null as well, by co-initialization. Otherwise, either $a_1.\mathsf{first}$ is null or $a_1.\mathsf{first} > t_M$.
\item \emph{$a_1.\mathsf{last}$ is non-null and ${\ge}\, t_N$.} The maximum-achieving row must have $r.\mathsf{ts} \ge t_N$, so $Q$ has to retain it; the optimized execution therefore processes at least this row, ensuring $a_2.\mathsf{last}$ is non-null, and $a_2.\mathsf{first}$ must in turn be non-null as well, by co-initialization. Otherwise, either $a_1.\mathsf{last}$ is null or $a_1.\mathsf{last} < t_N$.
\end{itemize}

\end{enumerate}

\noindent The bisimulation invariant below captures all of these interdependencies (counter synchronization, null synchronization, boundary synchronization, and cross-execution co-initialization) in a single formula:
\[
\begin{aligned}
 &\underbrace{
   \orig{a_1.\mathsf{time}} = \opt{a_2.\mathsf{time}}
   \;\wedge\;
   \orig{a_1.\mathsf{price}} = \opt{a_2.\mathsf{price}}
   \;\wedge\;
   \opt{a_2.\mathsf{time}} \ge 0
 }_{\text{(i) counter synchronization}}\\
 \mathllap{\wedge}\;&\underbrace{
   (\orig{a_1.\mathsf{first}}.\mathsf{isNull} \Rightarrow \opt{a_2.\mathsf{first}}.\mathsf{isNull})
   \;\wedge\;
   (\orig{a_1.\mathsf{last}}.\mathsf{isNull} \Rightarrow \opt{a_2.\mathsf{last}}.\mathsf{isNull})
 }_{\text{(ii) null synchronization}}\\
 \mathllap{\wedge}\;&\underbrace{
   \begin{aligned}
     &(\neg\,\orig{a_1.\mathsf{first}}.\mathsf{isNull} \wedge \orig{a_1.\mathsf{first}} \le t_M \Rightarrow \orig{a_1.\mathsf{first}} = \opt{a_2.\mathsf{first}}) \;\wedge\\
     &(\neg\,\opt{a_2.\mathsf{first}}.\mathsf{isNull} \wedge \opt{a_2.\mathsf{first}} \le t_M \Rightarrow \orig{a_1.\mathsf{first}} = \opt{a_2.\mathsf{first}})
   \end{aligned}
 }_{\text{(iii) boundary synchronization for $\mathsf{first}$}}\\
 \mathllap{\wedge}\;&\underbrace{
   \begin{aligned}
     &(\neg\,\orig{a_1.\mathsf{last}}.\mathsf{isNull} \wedge \orig{a_1.\mathsf{last}} \ge t_N \Rightarrow \orig{a_1.\mathsf{last}} = \opt{a_2.\mathsf{last}}) \;\wedge\\
     &(\neg\,\opt{a_2.\mathsf{last}}.\mathsf{isNull} \wedge \opt{a_2.\mathsf{last}} \ge t_N \Rightarrow \orig{a_1.\mathsf{last}} = \opt{a_2.\mathsf{last}})
   \end{aligned}
 }_{\text{(iv) boundary synchronization for $\mathsf{last}$}}\\
 \mathllap{\wedge}\;&\underbrace{
   (\neg\,\opt{a_2.\mathsf{first}}.\mathsf{isNull} \Rightarrow \neg\,\orig{a_1.\mathsf{last}}.\mathsf{isNull})
   \;\wedge\;
   (\neg\,\opt{a_2.\mathsf{last}}.\mathsf{isNull} \Rightarrow \neg\,\orig{a_1.\mathsf{first}}.\mathsf{isNull})
 }_{\text{(v) co-initialization, optimized to original}}\\
 \mathllap{\wedge}\;&\underbrace{
   \begin{aligned}
     &(\neg\,\orig{a_1.\mathsf{first}}.\mathsf{isNull} \wedge \orig{a_1.\mathsf{first}} \le t_M \Rightarrow \neg\,\opt{a_2.\mathsf{last}}.\mathsf{isNull}) \;\wedge\\
     &(\neg\,\orig{a_1.\mathsf{last}}.\mathsf{isNull} \wedge \orig{a_1.\mathsf{last}} \ge t_N \Rightarrow \neg\,\opt{a_2.\mathsf{first}}.\mathsf{isNull})
   \end{aligned}
 }_{\text{(vi) co-initialization, original to optimized}}
\end{aligned}
\]
The reasoning above, spanning cross-execution, cross-field dependencies across six groups of interdependent conditions, is highly unintuitive to carry out by hand. This bisimulation invariant captures it all in a single formula that \toolname synthesizes and verifies automatically.

\subsection*{Case Study B: Return-Price Aggregation}

This UDF, adapted from a cryptocurrency high-frequency analysis framework\footnote{\url{https://github.com/QuantLet/Cryptocurrencies-and-Stablecoins-a-high-frequency-analysis}}, simultaneously tracks the \emph{opening} and \emph{closing} trades for each symbol.
As shown in Figure~\ref{fig:returnprice-spark}, the aggregation maintains four buffer fields: \texttt{openPrice} and \texttt{openEpoch} record the price and epoch of the earliest trade (argmin on epoch), while \texttt{closePrice} and \texttt{closeEpoch} record those of the latest trade (argmax on epoch), with prices initialized to~$0$ and epochs to~\texttt{null}.
The post-aggregation filter retains only symbols whose opening and closing prices fall within acceptable bounds and whose trading window spans particular epoch thresholds. This combination of price-range checks with temporal coverage requirements on both ends of the window is typical of return-computation pipelines.
Because the UDF interleaves an argmin and an argmax over the same input fields, the pre-filter must simultaneously preserve rows relevant to both earliest-epoch and latest-epoch tracking.
\begin{center}
\begin{minipage}{0.92\linewidth}
\begin{minted}[fontsize=\footnotesize, frame=lines, framesep=2mm]{scala}
class ReturnPriceAggregator extends UserDefinedAggregateFunction {
  // input:  (price: Double, epoch: Long)
  // buffer: (openPrice: Double, openEpoch: Long?, closePrice: Double, closeEpoch: Long?)
  def initialize(buf: MutableAggregationBuffer) =
    { buf(0) = 0.0; buf(1) = null; buf(2) = 0.0; buf(3) = null }
  def update(buf: MutableAggregationBuffer, row: Row) = {
    val price = row.getDouble(0); val epoch = row.getLong(1)
    if (buf.isNullAt(1) || epoch < buf.getLong(1)) { buf(0) = price; buf(1) = epoch }
    if (buf.isNullAt(3) || epoch > buf.getLong(3)) { buf(2) = price; buf(3) = epoch }
  }
  // ...
}
val result = df.groupBy($"symbol").agg(udaf($"price", $"epoch").as("row"))
  .select($"row.*").filter(
    $"openPrice" > 5 && $"openPrice" <= 100
    && !$"openEpoch".isNull && $"openEpoch" <= 38
    && $"closePrice" >= 10 && $"closePrice" <= 100
    && !$"closeEpoch".isNull && ($"closeEpoch" > 55 || $"closeEpoch" === 53))
\end{minted}
\vspace{-3mm}
\captionof{figure}{Spark pipeline for the Return-Price Aggregation UDF (Case Study~B).}
\label{fig:returnprice-spark}
\end{minipage}
\end{center}

Given the post-UDF filter
\begin{align*}
P(a) ={} & 5 < a.\mathsf{openPrice} \le 100\\
& {}\wedge\; \neg\,a.\mathsf{openEpoch.isNull} \;\wedge\; a.\mathsf{openEpoch} \le 38\\
& {}\wedge\; 10 \le a.\mathsf{closePrice} \le 100\\
& {}\wedge\; \neg\,a.\mathsf{closeEpoch.isNull}
  \;\wedge\; (a.\mathsf{closeEpoch} > 55 \vee a.\mathsf{closeEpoch} = 53),
\shortintertext{\toolname synthesizes the split decomposition:}
Q(r) ={} & r.\mathsf{epoch} \le 38 \;\vee\; r.\mathsf{epoch} \ge 53,\\[2pt]
P'(a) ={} & 5 < a.\mathsf{openPrice} \le 100
  \;\wedge\; (a.\mathsf{openEpoch.isNull} \vee a.\mathsf{openEpoch} \le 38)\\
& {}\wedge\; 10 \le a.\mathsf{closePrice} \le 100
  \;\wedge\; (a.\mathsf{closeEpoch.isNull} \vee a.\mathsf{closeEpoch} > 55 \vee a.\mathsf{closeEpoch} = 53).
\end{align*}
The pre-filter $Q$ retains rows with $\mathsf{epoch} \le 38$ or $\mathsf{epoch} \ge 53$, discarding only rows in the gap $(38,53)$ that can affect neither the opening trade nor the closing trade in a filter-relevant way.
Manually verifying that this decomposition is correct requires tracking several interrelated properties across the original execution (accumulator state $a_1$) and the optimized execution (accumulator state $a_2$):

\begin{enumerate}[label=\textbf{(\roman*)},nosep,leftmargin=*]

\item \textbf{Initialization synchronization.}
The price fields ($\mathsf{openPrice}$, $\mathsf{closePrice}$) are integers initialized to~$0$, while the epoch fields ($\mathsf{openEpoch}$, $\mathsf{closeEpoch}$) are optional and start as~\texttt{null}.
Since each price-epoch pair is overwritten atomically whenever a new extremum is found, a null epoch guarantees the corresponding price has never been touched and remains at~$0$:
\begin{itemize}[nosep, leftmargin=1.5em]
\item \emph{$a_2.\mathsf{openEpoch}$ is null.} Then $a_2.\mathsf{openPrice}$ must be~$0$.
\item \emph{$a_2.\mathsf{closeEpoch}$ is null.} Then $a_2.\mathsf{closePrice}$ must be~$0$.
\end{itemize}

\item \textbf{Null synchronization.}
The optimized execution processes a subset of the rows available to the original. Consequently, if the original has never observed a trade (epoch remains null), the optimized cannot have observed one either:
\begin{itemize}[nosep, leftmargin=1.5em]
\item \emph{$a_1.\mathsf{openEpoch}$ is null.} No row has been processed by the original, so $a_2.\mathsf{openEpoch}$ has to be null as well ($a_1.\mathsf{openEpoch} = a_2.\mathsf{openEpoch}$).
\item \emph{$a_1.\mathsf{closeEpoch}$ is null.} By the same reasoning, $a_2.\mathsf{closeEpoch}$ has to be null ($a_1.\mathsf{closeEpoch} = a_2.\mathsf{closeEpoch}$).
\end{itemize}

\item \textbf{Boundary synchronization, optimized to original.}
When the optimized execution's extremal epoch falls within $Q$'s range, the row achieving that extremum must have passed $Q$. Because the original processes a superset, it too must have seen that row, forcing both executions to agree on the entire price-epoch pair.
\begin{itemize}[nosep, leftmargin=1.5em]
\item \emph{$a_2.\mathsf{openEpoch}$ is non-null and ${\le}\,38$.} Because the original processes a superset of rows, $a_1.\mathsf{openEpoch} \le a_2.\mathsf{openEpoch} \le 38$ has to hold; the original's minimum-achieving row therefore also has $r.\mathsf{epoch} \le 38$ and passes~$Q$, so the optimized run sees it too, yielding $a_2.\mathsf{openEpoch} \le a_1.\mathsf{openEpoch}$. Agreement follows ($a_1.\mathsf{openEpoch} = a_2.\mathsf{openEpoch}$). Otherwise, either $a_2.\mathsf{openEpoch}$ is null or $a_2.\mathsf{openEpoch} > 38$.
\item \emph{$a_2.\mathsf{closeEpoch}$ is non-null and ${\ge}\,53$.} The original's superset property gives $a_1.\mathsf{closeEpoch} \ge a_2.\mathsf{closeEpoch} \ge 53$, so its maximum-achieving row also has $r.\mathsf{epoch} \ge 53$ and passes~$Q$; the optimized run must see it, giving $a_2.\mathsf{closeEpoch} \ge a_1.\mathsf{closeEpoch}$, and the two values must coincide ($a_1.\mathsf{closeEpoch} = a_2.\mathsf{closeEpoch}$). Otherwise, either $a_2.\mathsf{closeEpoch}$ is null or $a_2.\mathsf{closeEpoch} < 53$.
\end{itemize}
Since the UDAF overwrites each price-epoch pair atomically, agreement on the epoch forces agreement on the associated price as well.

\item \textbf{Boundary synchronization, original to optimized.}
Symmetrically, when the original execution's extremal epoch falls within $Q$'s range, $Q$ must have retained the row achieving that extremum, so the optimized execution also sees it and the epochs must agree.
\begin{itemize}[nosep, leftmargin=1.5em]
\item \emph{$a_1.\mathsf{openEpoch}$ is non-null and ${\le}\,38$.} The row that produced this minimum has $r.\mathsf{epoch} \le 38$, so $Q$ retains it; both runs see it, and the two executions must lock to the same $\mathsf{openEpoch}$ ($a_1.\mathsf{openEpoch} = a_2.\mathsf{openEpoch}$). Otherwise, either $a_1.\mathsf{openEpoch}$ is null or $a_1.\mathsf{openEpoch} > 38$.
\item \emph{$a_1.\mathsf{closeEpoch}$ is non-null and ${\ge}\,53$.} The row behind this maximum has $r.\mathsf{epoch} \ge 53$ and passes~$Q$; both executions process it and must arrive at the same $\mathsf{closeEpoch}$ ($a_1.\mathsf{closeEpoch} = a_2.\mathsf{closeEpoch}$). Otherwise, either $a_1.\mathsf{closeEpoch}$ is null or $a_1.\mathsf{closeEpoch} < 53$.
\end{itemize}

\end{enumerate}

\noindent The following bisimulation invariant encodes all of these properties (initialization synchronization, null synchronization, bidirectional boundary synchronization, and associated-price agreement) in a single verifiable formula:
\[
\begin{aligned}
 &\underbrace{
   \begin{aligned}
     &(\opt{a_2.\mathsf{openEpoch}}.\mathsf{isNull} \Rightarrow \opt{a_2.\mathsf{openPrice}} = 0) \;\wedge\\
     &(\opt{a_2.\mathsf{closeEpoch}}.\mathsf{isNull} \Rightarrow \opt{a_2.\mathsf{closePrice}} = 0)
   \end{aligned}
 }_{\text{(i) initialization synchronization}}\\
 \mathllap{\wedge}\;&\underbrace{
   \begin{aligned}
     &(\orig{a_1.\mathsf{openEpoch}}.\mathsf{isNull}
       \Rightarrow \orig{a_1.\mathsf{openEpoch}} = \opt{a_2.\mathsf{openEpoch}}) \;\wedge\\
     &(\orig{a_1.\mathsf{closeEpoch}}.\mathsf{isNull}
       \Rightarrow \orig{a_1.\mathsf{closeEpoch}} = \opt{a_2.\mathsf{closeEpoch}})
   \end{aligned}
 }_{\text{(ii) null synchronization}}\\
 \mathllap{\wedge}\;&\underbrace{
   \begin{aligned}
     &(\neg\,\opt{a_2.\mathsf{openEpoch}}.\mathsf{isNull} \wedge \opt{a_2.\mathsf{openEpoch}} \le 38\\
     &\quad{}\Rightarrow \orig{a_1.\mathsf{openPrice}} = \opt{a_2.\mathsf{openPrice}}
       \;\wedge\; \orig{a_1.\mathsf{openEpoch}} = \opt{a_2.\mathsf{openEpoch}}) \;\wedge\\
     &(\neg\,\opt{a_2.\mathsf{closeEpoch}}.\mathsf{isNull} \wedge \opt{a_2.\mathsf{closeEpoch}} \ge 53\\
     &\quad{}\Rightarrow \orig{a_1.\mathsf{closePrice}} = \opt{a_2.\mathsf{closePrice}}
       \;\wedge\; \orig{a_1.\mathsf{closeEpoch}} = \opt{a_2.\mathsf{closeEpoch}})
   \end{aligned}
 }_{\text{(iii) boundary sync, optimized to original}}\\
 \mathllap{\wedge}\;&\underbrace{
   \begin{aligned}
     &(\neg\,\orig{a_1.\mathsf{openEpoch}}.\mathsf{isNull} \wedge \orig{a_1.\mathsf{openEpoch}} \le 38
       \Rightarrow \orig{a_1.\mathsf{openEpoch}} = \opt{a_2.\mathsf{openEpoch}}) \;\wedge\\
     &(\neg\,\orig{a_1.\mathsf{closeEpoch}}.\mathsf{isNull} \wedge \orig{a_1.\mathsf{closeEpoch}} \ge 53
       \Rightarrow \orig{a_1.\mathsf{closeEpoch}} = \opt{a_2.\mathsf{closeEpoch}})
   \end{aligned}
 }_{\text{(iv) boundary sync, original to optimized}}
\end{aligned}
\]
This reasoning is tedious and error-prone to carry out by hand. \toolname derives the invariant automatically from a universe of 172 candidate atoms.